\documentclass[10pt,superscriptaddress]{article}

\usepackage[margin=2.7cm]{geometry}

\usepackage{graphicx}
\usepackage{amsmath}
\usepackage{amssymb}
\usepackage{amsfonts}
\usepackage{mathrsfs}
\usepackage{amsthm}
\usepackage{bm}
\usepackage{url}
\usepackage{youngtab}
\usepackage[T1]{fontenc}
\usepackage{csquotes}
\MakeOuterQuote{"}
\usepackage{authblk}
\linespread{1}

\newtheoremstyle{note}
  {\topsep/2}               
  {\topsep/2}               
  {}                      
  {\parindent}            
  {\itshape}              
  {.}                     
  {5pt plus 1pt minus 1pt}
  {}

\newtheorem{theorem}{Theorem}
\newtheorem{lemma}{Lemma}
\newtheorem{conjecture}{Conjecture}
\newtheorem{corollary}{Corollary}
\newtheorem{proposition}{Proposition}

\newtheorem{remark}{Remark}
\newtheorem{definition}{Definition}

\def\vec#1{\bm{#1}} 

\newcommand{\mrm}[1]{\mathrm{#1}}
\newcommand{\tr}{\operatorname{tr}}

\newcommand{\sgn}{\operatorname{sgn}}

\newcommand{\var}{\mathrm{Var}}

\newcommand{\rmd}{\mathrm{d}}
\newcommand{\rme}{\mathrm{e}}
\newcommand{\rmi}{\mathrm{i}}

\newcommand{\rmC}{\mathrm{C}}
\newcommand{\rmE}{\mathrm{E}}
\newcommand{\rmO}{\mathrm{O}}
\newcommand{\rmT}{\mathrm{T}}
\newcommand{\rmU}{\mathrm{U}}

\newcommand{\be}{\begin{equation}}
\newcommand{\ee}{\end{equation}}
\newcommand{\ba}{\begin{align}}
\newcommand{\ea}{\end{align}}

\def\<{\langle}  
\def\>{\rangle}  
\newcommand{\ket}[1]{| #1\>}
\newcommand{\bra}[1]{\< #1|}

\newcommand{\braket}[2]{\<#1|#2\>}
\def\outer#1#2{|#1\>\<#2|}       

\newcommand{\CC}{\mathbb{C}}
\newcommand{\RR}{\mathbb{R}}
\newcommand{\FF}{\mathbb{F}}
\newcommand{\bbF}{\mathbb{F}}
\newcommand{\bbR}{\mathbb{R}}
\newcommand{\bbC}{\mathbb{C}}
\newcommand{\bbQ}{\mathbb{Q}}
\newcommand{\bbZ}{\mathbb{Z}}
\newcommand{\Sp}{\mrm{Sp}}

\newcommand{\Cli}{\mathrm{C}}
\newcommand{\PCli}{\overline{\mathrm{C}}}

\newcommand{\orb}{\mathrm{orb}}
\newcommand{\Id}{\mathrm{I}}

\def\eqref#1{\textup{(\ref{#1})}}  
\newcommand{\eref}[1]{Eq.~\textup{(\ref{#1})}}
\newcommand{\Eref}[1]{Equation~\textup{(\ref{#1})}}
\newcommand{\esref}[1]{Eqs.~\textup{(\ref{#1})}}

\newcommand{\tref}[1]{Table~\ref{#1}}

\newcommand{\sref}[1]{Sec.~\ref{#1}}

\newcommand{\thref}[1]{Theorem~\ref{#1}}

\newcommand{\lref}[1]{Lemma~\ref{#1}}
\newcommand{\Lref}[1]{Lemma~\ref{#1}}
\newcommand{\lsref}[1]{Lemmas~\ref{#1}}

\newcommand{\crref}[1]{Corollary~\ref{#1}}

\newcommand{\pref}[1]{Proposition~\ref{#1}}
\newcommand{\Pref}[1]{Proposition~\ref{#1}}

\newcommand{\cref}[1]{Conjecture~\ref{#1}}
\newcommand{\Cref}[1]{Conjecture~\ref{#1}}

\newcommand{\aref}[1]{Appendix~\ref{#1}}

\newcommand{\rcite}[1]{Ref.~\cite{#1}}
\newcommand{\rscite}[1]{Refs.~\cite{#1}}

\DeclareMathOperator{\Sym}{Sym}
\DeclareMathOperator{\Hom}{Hom}
\DeclareMathOperator{\Harm}{Harm}
\DeclareMathOperator{\vectorize}{vec}
\newcommand{\Ten}{\tau}

\begin{document}

\title{The Clifford group fails gracefully to be a unitary 4-design}

\author[1]{Huangjun Zhu}
\author[1]{Richard Kueng}
\author[2]{Markus Grassl}
\author[1]{David Gross}

\affil[1]{Institute for Theoretical Physics, University of Cologne, Germany}
\affil[2]{Max Planck Institute for the Science of Light, Erlangen,  Germany}

\date{\today}
\maketitle

\begin{abstract}
	A \emph{unitary $t$-design} is a set of unitaries that is ``evenly
	distributed'' in the sense that the average of any $t$-th order
	polynomial over the design equals the average over the entire unitary
	group.
	In various fields -- e.g.\ quantum information theory -- one
	frequently encounters constructions that rely on matrices drawn
	uniformly at random from the unitary group. 
	Often, it suffices to sample these matrices
	from a unitary $t$-design, for sufficiently high $t$.
	This results in more explicit, \emph{derandomized}
	constructions.
	The most prominent unitary $t$-design considered in quantum
	information is the \emph{multi-qubit Clifford group}. 
	It is known to be a unitary $3$-design, but, unfortunately, not a $4$-design.
	Here, we give a simple, explicit characterization of the way in
	which the Clifford group fails to constitute a $4$-design. 
	Our results show that for various applications in quantum
	information theory and in the theory of convex signal recovery,
	Clifford orbits perform almost as well as those of true $4$-designs.
	Technically, it turns out that in a precise sense, the $4$th tensor power of the
	Clifford group affords only one more invariant subspace than the $4$th
	tensor power of the unitary group. That additional subspace is a
	\emph{stabilizer code} -- a structure extensively studied in the field
	of quantum error correction codes.
	The action of the Clifford group on this stabilizer code can be decomposed explicitly into previously known irreps of the discrete symplectic group.
	We give various constructions of exact complex projective
	4-designs or approximate 4-designs of arbitrarily high precision
	from Clifford orbits. Building on results from coding theory, we
	give strong evidence suggesting that these orbits actually
	constitute complex projective $5$-designs.
\end{abstract}

\tableofcontents
\section{Introduction}

\subsection{Motivation: Designs and derandomizations}

A $d$-dimensional \emph{complex projective design} is a
configuration of vectors that are ``evenly distributed'' on the complex unit
sphere in $\bbC^d$. 
More precisely,
a set of unit-length
vectors  is a complex projective  $t$-design, if sampling
uniformly from the set gives rise to a random vector  whose first $2t$
moments agree with the moments of the uniform distribution on the sphere  \cite{DelsGS77,Hogg82,ReneBSC04, Scot06,AmbaE07}. 
This property makes designs a useful tool for the derandomization of
constructions that rely on random vectors. \emph{Unitary designs} are an analog of complex projective designs on the unitary group, which are equally useful for the derandomization of
constructions that rely on  random unitaries \cite{Dank05the,DankCEL09, GrosAE07, RoyS09}. In addition, unitary designs provide a simple way for constructing projective designs.

Applications of designs abound, with examples including randomized benchmarking \cite{KnilLRB08,MageGE11,WallF14}, quantum state tomography \cite{HayaHH05,Scot06, ZhuE11,Zhu14IOC}, quantum process tomography \cite{Scot08,KimmL15},  quantum cryptography \cite{AmbaBW09}, data hiding \cite{DiViLT02},  decoupling \cite{AbeyDHW09,SzehDTR13}, and tensor networks \cite{NezaW16}. 

\subsection{Outline of result: Overcoming the ``$t=3$-barrier''}

One major drawback of the program of using complex projective and unitary designs
for derandomization is that there has been little progress in
constructing explicit families of $t$-designs for $t>3$. There are
various constructions using ``structured randomness'' -- most notably
the \emph{random circuit model} that yields approximate designs in any
dimension and of any degree  \cite{HarrL09,BranHH15}. While the
resulting designs are sufficiently well-structured for some tasks in
quantum information theory, they are arguably not as explicit as one
could hope for. 

The only explicit infinite family of unitary $3$-designs known so far is the complex Clifford group, while the only explicit infinite family of  projective $3$-designs are  the orbits of the  Clifford group \cite{KuenG15, Zhu15MC, Webb15} (and even these are very recent results). 
Unfortunately, it has also been shown that the Clifford group is not a unitary $4$-design, and their orbits are not, in general, projective $4$-designs \cite{KuenG15, Zhu15MC, Webb15}. 

This situation seems all the more unsatisfactory, as there are various
applications -- including the two examples given in Section~\ref{sec:twoApplications} below -- where
$2$-designs are essentially useless \cite{MattWW09,GroKraKue15a},
$3$-designs give first non-trivial improvements \cite{GroKraKue15a}, and
$4$-designs show  already optimal behavior 
\cite{KabaKRT15, AmbaE07, MattWW09}.
The case $t=4$ treated here is thus not another step in an infinite series of potential papers, but rather seems to constitute a fundamental threshold.
Other prominent applications of 4-designs include randomized benchmarking \cite{WallF14} and quantum process tomography \cite{KimmL15}.

The main result of the present work is that while Clifford orbits fall
short of constituting $4$-designs, their $8$th moments can be
 calculated explicitly. The results are sufficiently well-behaved that
for several applications, Clifford orbits turn out to perform nearly as well
as $4$-designs or uniform random vectors would.
Moreover, even exact 4-designs can easily be constructed from Clifford orbits. 
In order to establish these statements, we give an explicit description of the irreducible representations of the $4$th tensor power of the Clifford group. 
In a precise sense, it turns out that this tensor power affords only one more invariant subspace than the $4$th tensor power of the unitary group.  That additional subspace is a \emph{stabilizer code} -- a structure extensively studied in the field of quantum error correction codes \cite{Gott97the, NielC00book}. 
This feature allows for an explicit analysis.

This paper contains only the representation-theoretic analysis of the
4th tensor power of the Clifford group. 
In two companion papers we apply this technical result to problems from signal analysis \cite{StabilizerLRMR16} and quantum information theory \cite{PovmNorm16} respectively.
These applications are briefly sketched below.
The reason for splitting our discussion three-ways is that we
target two different scientific communities that have come to employ very different languages.

\subsection{Two applications}
\label{sec:twoApplications}

Here, we give a high-level description of two seemingly very different problems that originally motivated our work and to which our results can be applied.

\subsubsection{Application: Phase Retrieval}

The signal analysis example is the problem of \emph{phase retrieval}:
Let $x$ be an unknown vector in $\bbC^d$. Assume we have access
to a set of ``phase insensitive linear measurements''
\begin{equation}\label{eqn:phaseless}
	y_i = |(a_i, x)|, \qquad i = 1, \dots, m.
\end{equation}
Here, the $a_i\in\bbC^d$ are a given set of \emph{measurement
vectors}. The task now is to recover $x$ given $y_1, \dots, y_n$.
There are many practical applications -- for example in optical
microscopy, where information about a sample is encoded in the
electro-magnetic light field, but where only phase-insensitive
intensity measurements are usually feasible.
From a mathematical point of view,
the absolute value in \eref{eqn:phaseless} means that we are facing a
\emph{non-linear inverse problem} -- which are often difficult to solve in
theory and in practice. 

A recent research program has investigated the use of algorithms based
on \emph{convex optimization} for the purpose of solving the phase
retrieval problem. First theoretical results have shown that certain
convex algorithms do indeed recover $x$ with high probability, if the
measurements $a_i$ are random Gaussian vectors or drawn uniformly from
the unit sphere in  $\bbC^d$
\cite{CanStrVor13,CanLi14}. 
However, in
many practical applications, such measurements cannot be realized.
Therefore, we are facing the task of re-proving  those guarantees 
for measurements that are ideally deterministic, or, if
randomized, at least drawn from a ``smaller'' and ``more
structured'' set of vectors than from the entire unit sphere. Such
\emph{derandomized} versions of convex-optimization algorithms have indeed been
established for a variety of models -- see e.g.\ Refs.~\cite{CanLiSol15, GroKraKue15a}.

Starting from \rcite{GroKraKue15a}, some of the present authors have been
interested in using spherical designs as a ``general-purpose'' tool for
randomizing phase retrieval algorithms. The basic insight is that
protocols that ostensibly require Gaussian vectors often only rely on
certain measure-concentration estimates that can be derived already
from information about finite moments. 
Initial results \cite{GroKraKue15a} showed that while a $2$-design property alone does not give rise to non-trivial recovery guarantees, this changes from $t=3$ onward. 
Later, it has been proven that $4$-designs essentially match the performance of random Gaussian measurements \cite{KabaKRT15}. 
In accordance with our initial hope, the results of \cite{KabaKRT15} were first proven for Gaussian measurements and then generalized -- with comparatively few changes in the argument --  to the design case.

In \rcite{StabilizerLRMR16}, we use the theory of the present paper to establish near-optimal performance guarantees for phase retrieval from measurements drawn randomly from Clifford orbits. 
This includes the case of stabilizer measurements. 
Generalizations to the recovery of low-rank matrices are also proven.

\subsubsection{Application: Quantum state distinguishability}

Our second example comes from \emph{quantum information theory}. 
In quantum mechanics, the \emph{state} of a $d$-level system is encoded in a positive semi-definite $d\times d$-matrix, the so-called \emph{density operator} or \emph{density matrix}. 
A \emph{measurement} maps density operators
to classical probability distributions over a space of outcomes. The
fundamental property of \emph{quantum complementarity} means that 
measurements necessarily entail a loss of information about the
quantum system. 

One way of precisely measuring this information loss is as follows: The (single-shot)
statistical distinguishability of two classical probability
distributions $p, q$ is measured by the \emph{total variational
distance}, or half their $\ell_1$-norm distance $d_c(p,q) := \frac12
\| p - q \|_{\ell_1}$. 
Analogously, the optimal probability of distinguishing between two
quantum states $\rho, \sigma$ is given by one half the
\emph{Schatten-1 norm} (or \emph{trace norm} or \emph{nuclear norm})
distance: $d_q(\rho, \sigma) := \frac12 \|\rho-\sigma\|_{1}$.
Quantum measurements are
represented by positive-operator-valued measures (POVMs), which realize
(certain) linear maps $\Lambda$ from the set of density matrices to
the set of classical probability distributions. 
The fact that ``information is lost'' in such a process can e.g.\ be
made precise by stating that $\Lambda$ is a contraction:
\begin{equation*}
	d_c(\Lambda(\rho), \Lambda(\sigma)) \leq  d_q(\rho,\sigma).
\end{equation*}
The information loss of a given $\Lambda$ can be upper-bounded via the \emph{POVM norm constant} $C_\Lambda<1$, which is the largest constant so that
\begin{equation*}
	d_c(\Lambda(\rho), \Lambda(\sigma)) \geq  
	C_\Lambda d_q(\rho,\sigma)
\end{equation*}
holds for any pair $\rho, \sigma$.
It thus makes sense to ask for an optimal measurement $\Lambda$, i.e.\ one that
maximizes $C_\Lambda$. 

It has been shown  that the
\emph{uniform POVM} achieves this goal \cite{MattWW09}.
This measurement maps quantum states to probability distributions on
the complex unit sphere, where the density $p(\psi)$ at the vector
$\psi$ is
proportional to $\tr(\rho\,|\psi\rangle\langle \psi|)$. 

The situation is now very similar to the one considered in the phase
retrieval example above: 
The uniform POVM is optimal, but impractical to implement in large
quantum experiments. 
However, as has been shown in Ref.~\cite{AmbaE07,MattWW09}, 
restricting the uniform POVM to a set of vectors that form a $4$-design
gives rise to a quantum measurement which already  matches the optimal scaling behavior.
Again, an analogous statement for $2$-designs does not hold \cite{MattWW09}.

Building on the theory developed below, we analyze quantum state distinguishability as measured by POVM norms  of Clifford orbits in \rcite{PovmNorm16}. 
For states with high purity, near-optimal results are established for all Clifford orbits, including stabilizer measurements.
As an auxiliary result, we also establish tighter entropic uncertainty relations 
\cite{WehnW10,coles_entropic_2015}
for stabilizer measurements.

\subsection{Note added}

While finalizing this paper, we became aware of a work by Helsen, Wallman, and Wehner that analyses a closely related representation of the Clifford group with the aim to derive improved bounds for randomized benchmarking \cite{helsen2016representations}.  
More precisely, they work with the representation $\tau^{(2,2)}$ in the sense of 
Eq.~(\ref{eqn:partialTransposeCommutant}).
As described below, this means that the main results of our respective papers are largely equivalent.
The proof methods seem to be rather different.

\section{Mathematical Background} 

In this section we review the mathematical background on complex projective designs, unitary designs, the Pauli group, Clifford group, and stabilizer states.
  
\subsection{Projective $t$-designs}
Complex projective designs are configurations of vectors that are evenly distributed on the complex unit sphere in $\bbC^d$. They are an analog of spherical designs on the real unit sphere \cite{DelsGS77,Hogg82,ReneBSC04,BannB09}; c.f.~\aref{sec:polynomials}. Such designs are
interesting to a number research areas, such as approximation theory, combinatorics, experimental designs etc. Recently, they have also found increasing applications
in many quantum information processing and signal analysis tasks, such as quantum state tomography \cite{HayaHH05,Scot06, ZhuE11,Zhu14IOC}, quantum state discrimination \cite{AmbaE07, MattWW09}, and phase retrieval  \cite{GroKraKue15a}. Here we review three equivalent definitions of (complex projective) $t$-designs.

Let  $\Hom_{(t,t)}(\bbC^d)$ be the space of  polynomials homogeneous of degree $t$ in the coordinates of $\psi \in \bbC^d$ with respect to a given basis and  homogeneous of degree $t$ in the coordinates of the complex conjugate $\psi^*$ (refer to \aref{sec:polynomials} for detailed notes on multivariate polynomials). 
\begin{definition}
	A set of $K$ normalized vectors $\{\psi_j\}$ in dimension $d$  is a (complex projective) \emph{$t$-design} if 
	\begin{equation}
\frac{1}{K}\sum_j p(\psi_j)= \int p(\psi)\rmd \psi\quad \forall p\in \Hom_{(t,t)}(\bbC^d),
	\end{equation}
 where the integral is taken with respect to the normalized uniform  measure on the complex unit sphere in $\bbC^d$. 
 \end{definition}

To derive simpler criteria on $t$-designs, we need to introduce several additional concepts. Let $\Sym_t(\bbC^d)$ be the $t$-partite symmetric subspace of $(\bbC^d)^{\otimes t}$ with corresponding projector $P_{[t]}$.
The dimension of   $\Sym_t(\bbC^d)$ reads
\begin{equation}
D_{[t]}=\binom{d+t-1}{t}.
\end{equation}
The $t$th frame potential of $\{\psi_j\}$ is defined by
\begin{equation}
\Phi_t(\{\psi_j\}):=\frac{1}{K^2}\sum_{j,k} |\langle \psi_j|\psi_k\rangle|^{2t}. 
\end{equation}

\begin{proposition}\label{prop:designequiv}
The following statements  are equivalent:
\begin{enumerate}
\item $\{\psi_j\}$ is a $t$-design.

\item 
\label{item:notquiteright}
$\frac{1}{K} \sum_j (\outer{\psi_j}{\psi_j})^{\otimes t}=P_{[t]}/D_{[t]}$, where $K=|\{\psi_j\}|$. 

\item $\Phi_t(\{\psi_j\})=1/D_{[t]}$.
\end{enumerate}
\end{proposition}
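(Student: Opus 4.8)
The plan is to prove the three-fold equivalence in two independent pieces: first that the polynomial condition (1) is equivalent to the operator identity (2), which rests on a duality between $\Hom_{(t,t)}(\bbC^d)$ and the operators on the symmetric subspace $\Sym_t(\bbC^d)$; and then that (2) is equivalent to the frame-potential condition (3), which is a soft argument about the purity of a density operator supported on a $D_{[t]}$-dimensional space.

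For (1) $\Leftrightarrow$ (2), the key observation is that the entries of $(\outer{\psi}{\psi})^{\otimes t}$ in the computational basis are precisely the monomials $\psi_{i_1}\cdots\psi_{i_t}\,\overline{\psi_{j_1}}\cdots\overline{\psi_{j_t}}$ that span $\Hom_{(t,t)}(\bbC^d)$. Hence every $p\in\Hom_{(t,t)}(\bbC^d)$ can be written as $p(\psi)=\Tr\!\big(A_p\,(\outer{\psi}{\psi})^{\otimes t}\big)$ for some operator $A_p$, and conversely every such trace functional lies in $\Hom_{(t,t)}(\bbC^d)$; since $(\outer{\psi}{\psi})^{\otimes t}$ is supported on $\Sym_t(\bbC^d)$, only the compression $P_{[t]}A_pP_{[t]}$ matters, and the induced map from operators on $\Sym_t(\bbC^d)$ to $\Hom_{(t,t)}(\bbC^d)$ is a linear isomorphism (spelled out in \aref{sec:polynomials}). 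I would then rewrite the design condition as $\Tr\!\big(A\,\big(\tfrac1K\sum_j(\outer{\psi_j}{\psi_j})^{\otimes t}\big)\big)=\Tr\!\big(A\int(\outer{\psi}{\psi})^{\otimes t}\rmd\psi\big)$ for all operators $A$ on $\Sym_t(\bbC^d)$; since both operators in these traces are themselves supported on $\Sym_t(\bbC^d)$, nondegeneracy of the trace pairing forces them to be equal. Finally I would identify $\int(\outer{\psi}{\psi})^{\otimes t}\rmd\psi$: it commutes with $U^{\otimes t}$ for every unitary $U$, the representation $U\mapsto U^{\otimes t}$ on $\Sym_t(\bbC^d)$ is irreducible (Schur--Weyl duality), so by Schur's lemma the integral equals $cP_{[t]}$, and taking traces ($\Tr(\outer{\psi}{\psi})^{\otimes t}=1$, $\Tr P_{[t]}=D_{[t]}$) gives $c=1/D_{[t]}$, yielding (2).

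For (2) $\Leftrightarrow$ (3), set $\rho:=\tfrac1K\sum_j(\outer{\psi_j}{\psi_j})^{\otimes t}$. As a convex combination of rank-one projectors, $\rho$ is positive semidefinite, has unit trace, and is supported on the $D_{[t]}$-dimensional space $\Sym_t(\bbC^d)$; that is, $\rho$ is a state on $\Sym_t(\bbC^d)$. A direct computation using $(A^{\otimes t})(B^{\otimes t})=(AB)^{\otimes t}$ and $\outer{\psi_j}{\psi_j}\,\outer{\psi_k}{\psi_k}=\langle\psi_j|\psi_k\rangle\,\outer{\psi_j}{\psi_k}$ gives $\Tr(\rho^2)=\tfrac1{K^2}\sum_{j,k}|\langle\psi_j|\psi_k\rangle|^{2t}=\Phi_t(\{\psi_j\})$. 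If $\lambda_1,\dots,\lambda_{D_{[t]}}\ge 0$ are the eigenvalues of $\rho$, then $\sum_i\lambda_i=1$ and Cauchy--Schwarz gives $\Tr(\rho^2)=\sum_i\lambda_i^2\ge 1/D_{[t]}$, with equality if and only if all $\lambda_i$ coincide, i.e.\ if and only if $\rho=P_{[t]}/D_{[t]}$. Hence $\Phi_t(\{\psi_j\})=1/D_{[t]}$ exactly when (2) holds.

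The only genuinely non-trivial ingredient is the claim that the trace pairing identifies $\Hom_{(t,t)}(\bbC^d)$ with the full operator space on $\Sym_t(\bbC^d)$ — equivalently, that $\{(\outer{\psi}{\psi})^{\otimes t}:\|\psi\|=1\}$ spans that operator space — and I would defer its (standard) proof to \aref{sec:polynomials}. The remaining care points are bookkeeping: checking that every operator appearing is genuinely supported on $\Sym_t(\bbC^d)$, so that both the nondegeneracy argument and the purity bound take place on a $D_{[t]}$-dimensional Hilbert space, and noting that the equality case of Cauchy--Schwarz with a fixed number $D_{[t]}$ of terms really forces the uniform spectrum (a vanishing eigenvalue would already make the inequality strict).
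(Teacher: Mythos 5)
Your proposal is correct and follows essentially the same route as the paper: (1)$\Leftrightarrow$(2) uses the polarization isomorphism $L(\Sym_t(\bbC^d))\cong\Hom_{(t,t)}(\bbC^d)$ (Lemma~\ref{lem:complexPolarization}) together with Schur's lemma for the Haar integral, and (2)$\Leftrightarrow$(3) rests on the computation $\Tr(\rho^2)=\Phi_t(\{\psi_j\})$ for $\rho=\tfrac1K\sum_j(\outer{\psi_j}{\psi_j})^{\otimes t}$. The only cosmetic difference is that you phrase the frame-potential bound as a purity inequality via Cauchy--Schwarz on the spectrum of $\rho$, whereas the paper states the equivalent one-line Hilbert--Schmidt identity $\bigl\|\rho-P_{[t]}/D_{[t]}\bigr\|_2^2=\Phi_t(\{\psi_j\})-1/D_{[t]}$, both of which immediately yield the inequality and its equality case.
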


\begin{remark}
In general,	$\Phi_t(\{\psi_j\})\geq 1/D_{[t]}$, and the lower bound is saturated iff $\{\psi_j\}$ is a $t$-design.
\end{remark}

\begin{proof}
Let  $L(\Sym_t(\bbC^d))$ be the space of linear operators acting on $\Sym_t(\bbC^d)$.
There is a one-to-one correspondence (Lemma~\ref{lem:complexPolarization}) between polynomials $p\in \Hom_{(t,t)}(\bbC^d)$ and  operators $A\in L(\Sym_t(\bbC^d))$,
\begin{equation}
A\mapsto p_A,\quad p_A(\psi):=\tr\bigl[A (\outer{\psi}{\psi})^{\otimes t}\bigr].
\end{equation}
Therefore,
\begin{equation}
\frac{1}{K}\sum_j p_A(\psi_j)=\frac{1}{K}\tr\biggl[A \sum_j (\outer{\psi_j}{\psi_j})^{\otimes t}\biggr], \qquad
 \int p_A(\psi)\rmd\psi=\tr\biggl[A \int(\outer{\psi}{\psi})^{\otimes t}\rmd\psi\biggr]. 
\end{equation} 
It follows that $\{\psi_j\}$ is a $t$-design iff
\begin{equation}\label{eq:tdesignSum}
\frac{1}{K}\sum_j (\outer{\psi_j}{\psi_j})^{\otimes t} =\int(\outer{\psi}{\psi})^{\otimes t}\rmd\psi=\frac{P_{[t]}}{D_{[t]}}.
\end{equation}
Here the second equality follows from the fact that the $t$th symmetric subspace is irreducible under the action of the unitary group. This observation confirms the equivalence of statements 1 and 2. 
The equivalence of statements 2 and 3 is a consequence of the following equation,
\begin{equation}
\biggl\|\frac{1}{K} \sum_j (\outer{\psi_j}{\psi_j})^{\otimes t}-\frac{P_{[t]}}{D_{[t]}}\biggr\|_2^2=\Phi_t(\{\psi_j\})-\frac{1}{D_{[t]}},
\end{equation}
where $\|\cdot\|_2$ denotes the Hilbert-Schmidt norm or the Frobenius norm. This equation  implies that $\Phi_t(\{\psi_j\})\geq 1/D_{[t]}$, and the lower bound is saturated iff \eref{eq:tdesignSum} is satisfied. 
\end{proof}

\Pref{prop:designequiv}  suggests several useful measures for characterizing the deviation of  $\{\psi_j\}$ from $t$-designs.  For example, two common measures 
are the operator norm and trace norm of the deviation operator
\begin{equation}
\frac{D_{[t]}}{K} \sum_j (\outer{\psi_j}{\psi_j})^{\otimes t}-P_{[t]}.
\end{equation}
Another measure is the ratio of the frame potential over the minimum frame potential, that is, $D_{[t]}\Phi_t(\{\psi_j\})$.

Any $t$-design in dimension $d$ has at least
\begin{equation}\label{eq:tdesignLBound} \binom{d+\lceil
	t/2\rceil-1}{\lceil t/2\rceil}\binom{d+\lfloor t/2\rfloor-1}{
	\lfloor t/2\rfloor}
\end{equation}
elements,
where $\lceil
t/2\rceil$ denotes the smallest integer not smaller than $t/2$, and $\lfloor t/2\rfloor$ the largest integer not larger than $t/2$ \cite{Hogg82, Leve98, Scot06}.
The bound is equal to $d,d^2,d^2(d+1)/2, d^2(d+1)^2/4$ for $t=1,2,3,4$, respectively. A $t$-design is \emph{tight} if the lower bound is saturated.
A 1-design is tight iff it defines an orthonormal basis; a 2-design is tight iff it defines a symmetric informationally complete measurement (SIC) \cite{Zaun11,
	ReneBSC04, Scot06, ScotG10, ApplFZ15G}. Another interesting example of  2-designs are complete sets of mutually unbiased bases (MUB) \cite{Ivan81,
	WootF89, KlapR05M, DurtEBZ10}. The only known explicit infinite family of 3-designs are the orbits of the (multiqubit) Clifford group, among which the set of stabilizer states is particularly prominent \cite{KuenG15,Zhu15MC,Webb15}.

\begin{definition}
	A set of weighted normalized vectors $\{\psi_j, w_j\}$ in dimension $d$ with $w_j\geq0$ and $\sum_j w_j=1$ is a weighted (complex projective) \emph{$t$-design} if 
	\begin{equation}
   \sum_j w_j p(\psi_j)= \int p(\psi)\rmd \psi\quad \forall p\in \Hom_{(t,t)}(\bbC^d).
	\end{equation}
\end{definition}
A weighted $t$-design reduces to an ordinary unweighted  $t$-design when all weights are equal. 
In many contexts, weighted designs are equally useful as unweighted designs.
In the current paper, we construct unweighted 4-designs for dimensions that are a power of two.
They can easily be turned into weighted 4-designs for arbitrary dimensions $\tilde d$. 
Indeed, let $\tilde d\leq d$, let $P$ be a projection operator onto an arbitrary $\tilde d$-dimensional subspace of $\CC^{d}$, and let $\{\psi_j\}$ be a $t$-design.
Then one can verify immediately that with
\begin{equation*}
	\tilde\psi_j = \frac{1}{\|P\psi_j\|_2}\,\psi_j, \qquad
	\tilde w_j = \|P\psi_j\|_2,
\end{equation*}
the $\{\tilde\psi_j, \tilde w_j\}$ forms a weighted $t$-design.
This way, the findings of the present paper have consequences for any dimension -- a power of two or not.

Almost all conclusions about $t$-designs mentioned above, including \pref{prop:designequiv}, also apply  to 
weighted $t$-designs
provided that the operator $\frac{1}{K} \sum_j (\outer{\psi_j}{\psi_j})^{\otimes t}$ is replaced by $\sum_j w_j (\outer{\psi_j}{\psi_j})^{\otimes t}$, and the frame potential $\Phi_t(\{\psi_j\})$ is replaced by 
\begin{equation}
\Phi_t(\{\psi_j,w_j\}):=\sum_{j,k} w_j w_k |\langle \psi_j|\psi_k\rangle|^{2t}. 
\end{equation}

\subsection{Unitary $t$-designs}
\label{sec:intro:unitary}
Unitary designs are configurations of unitary operators that are ``evenly distributed'' on the unitary group, in analogy to spherical designs and complex projective designs. They are particularly useful in derandomizing constructions that rely on random unitaries, such as
randomized benchmarking \cite{KnilLRB08,MageGE11,WallF14},   quantum process tomography \cite{KimmL15},   quantum cryptography \cite{AmbaBW09}, data hiding \cite{DiViLT02},  and decoupling \cite{AbeyDHW09,SzehDTR13}.

Let  $\Hom_{(t,t)}(\rmU(d))$ be the space of  polynomials homogeneous of degree $t$ in the matrix elements of $U\in \rmU(d)$ and homogeneous of degree $t$ in the matrix elements of  $U^*$ (the complex conjugate of $U$; the Hermitian conjugate of $U$ is denoted by $U^\dag$). 
\begin{definition}
A set of $K$ unitary operators $\{U_j\}$ in dimension $d$  is a  \emph{unitary $t$-design} if 
\begin{equation}\label{eq:U2design}
\frac{1}{K} \sum _j p(U_j) =\int \rmd U p(U) \quad \forall p\in \Hom_{(t,t)}(\rmU(d)),
\end{equation}
where the integral is taken over the normalized Haar measure.
\end{definition}
The above equation remains intact even if $U_j$ are multiplied by arbitrary phase factors, so what we are concerned are actually projective unitary $t$-designs.
The $t$th frame potential of $\{U_j\}$ is defined as
 \begin{equation}
 \Phi_t(\{U_j\}):=\frac{1}{K^2}\sum_{j,k} |\tr(U_jU_k^\dag)|^{2t}.
 \end{equation}
As shown in the proof of \pref{pro:tdesignU} below,
 \begin{equation}
 \Phi_t(\{U_j\})\geq \gamma(t,d):=\int \rmd U |\tr(U)|^{2t},
 \end{equation}
 and the lower bound is saturated iff $\{U_j\}$ is a unitary $t$-design \cite{GrosAE07, Scot08, RoyS09}. The value of $\gamma(t,d)$ has been computed explicitly: it is equal to the number of permutations of $\{1,2,\ldots, t\}$ with no increasing subsequence of length larger than $d$  \cite{DiacS94,Rain98}.
Here we  only need the formula in  two special cases \cite{Scot08},
 \begin{equation}\label{eq:FramePmin}
 \gamma(t,d)=\begin{cases}
 \frac{(2t)!}{t! (t+1)!} &d=2,\\
 t!& d\geq t.
 \end{cases}
 \end{equation}

Like projective $t$-designs, there are many equivalent definitions of unitary $t$-designs.
\begin{proposition}\label{pro:tdesignU}
The following statements are equivalent:
\begin{enumerate}
\item $\{U_j\}$ is a unitary $t$-design.

\item $\frac{1}{K} \sum _j \tr\bigl[BU_j^{\otimes t} A(U_j^{\otimes t})^\dag\bigr] =\int \rmd U \tr\bigl[B U^{\otimes t}A(U^{\otimes t})^\dag\bigr]$ for all $A,B\in L((\bbC^d)^{\otimes t})$.

\item $\frac{1}{K} \sum _j U_j^{\otimes t} A(U_j^{\otimes t})^\dag =\int \rmd U U^{\otimes t}A(U^{\otimes t})^\dag$ for all $A\in L((\bbC^d)^{\otimes t})$.

\item $\frac{1}{K} \sum _j U_j^{\otimes t} \otimes (U_j^{\otimes t})^\dag =\int \rmd U U^{\otimes t}\otimes(U^{\otimes t})^\dag$.

\item $\frac{1}{K} \sum _j U_j^{\otimes t} \otimes (U_j^{\otimes t})^* =\int \rmd U U^{\otimes t}\otimes(U^{\otimes t})^*$.

\item $\Phi_t(\{U_j\})= \gamma(t,d)$.
\end{enumerate}
\end{proposition}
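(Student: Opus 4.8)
The plan is to prove the chain of equivalences $1\Leftrightarrow 2\Leftrightarrow 3\Leftrightarrow 4\Leftrightarrow 5$ by elementary linear algebra, and then to establish $5\Leftrightarrow 6$ — together with the asserted lower bound $\Phi_t(\{U_j\})\ge\gamma(t,d)$ — by a Hilbert--Schmidt norm estimate that mirrors the proof of \pref{prop:designequiv}.

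For $1\Leftrightarrow 2$, I would observe that a monomial spanning $\Hom_{(t,t)}(\rmU(d))$ has the form $\prod_{a=1}^{t}U_{i_a j_a}\prod_{b=1}^{t}\overline{U_{k_b l_b}}=(U^{\otimes t})_{IJ}\,\overline{(U^{\otimes t})_{KL}}$ in multi-index notation $I=(i_1,\dots,i_t)$, etc., and that choosing the rank-one operators $A=\outer{J}{L}$ and $B=\outer{I}{K}$ gives $\tr[BU^{\otimes t}A(U^{\otimes t})^\dag]=(U^{\otimes t})_{IJ}\,\overline{(U^{\otimes t})_{KL}}$. Conversely, every function $U\mapsto\tr[BU^{\otimes t}A(U^{\otimes t})^\dag]$ is a linear combination of such monomials, so the two linear spans coincide; since both the design average and the Haar integral are linear, \eqref{eq:U2design} holds iff statement~2 does. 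Next, for fixed $A$, statement~2 says $\tr[B M_A]=0$ for all $B$, where $M_A:=\frac1K\sum_j U_j^{\otimes t}A(U_j^{\otimes t})^\dag-\int\rmd U\,U^{\otimes t}A(U^{\otimes t})^\dag$; non-degeneracy of the Hilbert--Schmidt pairing gives $M_A=0$, i.e.\ statement~3. Statement~3 is an identity of linear superoperators on $L((\bbC^d)^{\otimes t})$; under the vectorization isomorphism $\vectorize(\outer{r}{s})=\ket{r}\otimes\ket{s}$ one has $\vectorize(MXN)=(M\otimes N^{\mathrm T})\vectorize(X)$, so $X\mapsto U^{\otimes t}X(U^{\otimes t})^\dag$ is represented by $U^{\otimes t}\otimes(U^{\otimes t})^{*}$, and statement~3 becomes statement~5. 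Finally $(U^{\otimes t})^\dag=\bigl((U^{\otimes t})^{*}\bigr)^{\mathrm T}$, so the invertible partial transposition of the second tensor factor carries the identity in statement~5 to that in statement~4 and back; hence $4\Leftrightarrow 5$.

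It remains to show $5\Leftrightarrow 6$. Set $X:=\frac1K\sum_j U_j^{\otimes t}\otimes(U_j^{\otimes t})^{*}$ and $G:=\int\rmd U\,U^{\otimes t}\otimes(U^{\otimes t})^{*}$, and write $V_U:=U^{\otimes t}\otimes(U^{\otimes t})^{*}$, so that $U\mapsto V_U$ is a representation. Invariance of the Haar measure gives $G V_W=V_W G=G$ for every $W$ and $G^\dag=G$, so $G$ is an orthogonal projector; moreover $\tr G=\int\rmd U\,|\tr(U^{\otimes t})|^{2}=\int\rmd U\,|\tr U|^{2t}=\gamma(t,d)$. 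From $G V_{U_j}=V_{U_j}G=G$ one gets $X^\dag G=GX=G$, hence $\tr[X^\dag G]=\tr[G^\dag X]=\tr G$, while $\tr[G^\dag G]=\tr G$ and $\tr[X^\dag X]=\frac1{K^2}\sum_{j,k}\tr[(U_j^\dag U_k)^{\otimes t}]\,\overline{\tr[(U_j^\dag U_k)^{\otimes t}]}=\frac1{K^2}\sum_{j,k}|\tr(U_j^\dag U_k)|^{2t}=\Phi_t(\{U_j\})$. Therefore
\[
\|X-G\|_2^2=\tr[X^\dag X]-\tr[X^\dag G]-\tr[G^\dag X]+\tr[G^\dag G]=\Phi_t(\{U_j\})-\gamma(t,d)\ge 0,
\]
with equality iff $X=G$, which is precisely statement~5. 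This proves $5\Leftrightarrow 6$ and the stated lower bound.

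I expect the only point requiring genuine care is $1\Leftrightarrow 2$: identifying the span of the degree-$(t,t)$ polynomials with the operator expressions $\tr[BU^{\otimes t}A(U^{\otimes t})^\dag]$ is a routine but slightly fussy exercise in index and complex-conjugate bookkeeping. Everything else is formal — non-degeneracy of the trace form, the vectorization/Choi correspondence, and Haar invariance — and the $5\Leftrightarrow 6$ step is essentially the frame-potential argument already carried out for \pref{prop:designequiv}.
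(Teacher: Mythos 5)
Your proof is correct and follows essentially the same approach as the paper: the chain $1\Leftrightarrow 2\Leftrightarrow 3$ via the polynomial/operator correspondence and non-degeneracy of the Hilbert--Schmidt pairing, the passage to tensor-product form by vectorization and partial transpose (the paper instead uses a swap operator to reach statement~4 first, then partial transpose to statement~5, which is the same idea), and $5\Leftrightarrow 6$ by the same Haar-projector/frame-potential Hilbert--Schmidt norm computation. Your write-up also correctly supplies the square on $\|X-G\|_2^2$ that the paper's displayed equation omits, and it makes explicit that $G$ is a projector and that $\tr[X^\dag X]=\Phi_t(\{U_j\})$, details the paper leaves to the reader.
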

\begin{proof}
Note that $\tr\bigl[BU^{\otimes t} A(U^{\otimes t})^\dag\bigr]$ is a homogeneous polynomial in $\Hom_{(t,t)}(\rmU(d))$ and that all polynomials of this form for $A,B\in L((\bbC^d)^{\otimes t})$ span $\Hom_{(t,t)}(\rmU(d))$. Therefore, statements 1 and 2 are equivalent. The equivalence of statements 2 and 3 is obvious.

The equivalence of statements 1 and 4  follows from the following equation,
\begin{equation}
\tr\bigl\{V(B\otimes A) [U^{\otimes t} \otimes (U^{\otimes t})^\dag]\bigr\}=\tr\bigl\{BU^{\otimes t} A (U^{\otimes t})^\dag\bigr\},
\end{equation}
where $V$ is the swap operator of  parties $1,2, \ldots, t$ with the parties $t+1, t+2, \ldots, 2t$. The equation in statement 5 is a partial transposition of the one in statement 4. 

The equivalence of statements 5 and 6 follows from the following equation
\begin{equation}
\biggl\| \frac{1}{K} \sum _j U_j^{\otimes t} \otimes (U_j^{\otimes t})^* -\int \rmd U U^{\otimes t}\otimes(U^{\otimes t})^*\biggr\|_2=\Phi_t(\{U_j\})-\gamma(t,d).
\end{equation}
\end{proof}

Most known examples of unitary designs are constructed from subgroups of the unitary group, which are referred to as (unitary) group designs henceforth. Given a finite group $G$ of unitary operators, the frame potential of  $G$ takes on the form
\begin{equation}\label{eq:FramePotG}
\Phi_t(G)=\frac{1}{|G|}\sum_{U\in G} |\tr(U)|^{2t}.
\end{equation}
Let  $\overline{G}$ be the quotient of $G$ over the phase factors. Then 
\begin{equation}
\Phi_t(G)=\Phi_t(\overline{G})=\frac{1}{|\overline{G}|}\sum_{U\in \overline{G}} |\tr(U)|^{2t}.
\end{equation}
This formula is applicable whenever $\overline{G}$ is a finite group even if $G$ is not. 
Note that $\Phi_t(G)$ is equal to    the sum of squared multiplicities of irreducible components of
\begin{equation}
\Ten^t(G):=\{U^{\otimes t}|U\in G\},
\end{equation}
 which coincides with the dimension of the commutant of $\Ten^t(G)$ \cite{GrosAE07}. Recall that the commutant
$\mathcal{A}'$ of a set of operators $\mathcal{A}$ is the algebra of all operators that commute with every element of $\mathcal{A}$,
\begin{equation}
\mathcal{A}'=\{B| [A,B]=0\; \forall A\in \mathcal{A}\}.
\end{equation}
Let  $H$ be a subgroup in  $G$.
It is clear that
every irreducible representation of $\Ten^t(G)$ on
$\big(\CC^d\big)^{\otimes t}$ is also invariant under $\Ten^t(H)$ and
thus forms a representation space of $H$.  However, these spaces need not be irreducible under the
action of $H$. As a consequence, $\Phi_t(H)\geq \Phi_t(G)$ for any subgroup $H$ in $G$, and the equality is saturated iff every irreducible component of $\Ten^t(G)$ is also  irreducible when restricted to $\Ten^t(H)$; that is, $\Ten^t(G)$ and $\Ten^t(H)$ decompose into the same number of irreducible components.

At this point, it is instructive to 
 review the representation theory of the unitary group $\rmU(d)$ on
the space of all tensors $(\CC^d)^{\otimes t}$ from the point of view of \emph{Schur-Weyl duality} \cite{GoodW09book,Proc07book}. By definition the unitary group $\rmU(d)$ acts on $\bbC^d$. The action extends to the \emph{diagonal action} on  $(\CC^d)^{\otimes t}$, 
\begin{equation}
U\mapsto\tau^t(U): |\psi_1\rangle\otimes|\psi_2\rangle\otimes\cdots\otimes|\psi_t\rangle\mapsto U|\psi_1\rangle\otimes U|\psi_2\rangle\otimes\cdots\otimes U|\psi_t\rangle\quad  \forall |\psi_j\rangle \in\bbC^d,\;  \forall U\in \rmU(d).
\end{equation}
Meanwhile,  the symmetric group $S_t$ acts on the tensor product space
$(\bbC^d)^{\otimes t}$ by permuting the tensor factors:
\begin{equation}
\pi (|\psi_1\rangle\otimes|\psi_2\rangle\otimes\cdots\otimes|\psi_t\rangle) 
= 
|\psi_{\pi_1}\rangle\otimes|\psi_{\pi_2}\rangle\otimes\cdots\otimes|\psi_{\pi_t}\rangle
\qquad
\forall |\psi_j\rangle \in\bbC^d,\; \forall  \pi\in S_t.
\end{equation}
The diagonal action of $\rmU(d)$ and the permutation action of  $S_t$  on $(\bbC^d)^{\otimes t}$ commute with each other. Schur-Weyl duality states that
$(\bbC^d)^{\otimes t}$ decomposes into multiplicity-free
irreducible representations of $\rmU(d)\times S_t$ \cite{GoodW09book}. More precisely, 
\begin{equation}\label{eq:schurweyl}
\left(\bbC^{d}\right)^{\otimes t}
=
\bigoplus_{\lambda}H_\lambda =\bigoplus_{\lambda} W_\lambda \otimes
S_\lambda.
\end{equation}
Here the $\lambda$'s are non-increasing partitions of $t$ into no
more than $d$ parts, $W_\lambda$ is the \emph{Weyl module} carrying
the irrep of $\rmU(d)$ associated with $\lambda$, and $S_\lambda$ the
\emph{Specht module} on which $S_t$ acts irreducibly. 
We denote the 
dimensions of $S_\lambda$ and $W_\lambda$ by $d_\lambda$
and $D_\lambda$, respectively.
Note that $d_\lambda$ equals the multiplicity of the Weyl
module $W_\lambda$, and, likewise,  $D_\lambda$ is the multiplicity of
the Specht module $S_\lambda$. As an implication, the commutant of the diagonal action of the unitary group is generated by all permutations of the tensor factors. 
If $\lambda=[t]$ is the trivial partition, then
$W_\lambda=\Sym_t(\CC^d)$ and $S_t$ acts trivially on $S_\lambda\simeq
\CC$.
In particular, it follows that the space $\Sym_{t}(\CC^d)$ carries an
irreducible representation of $\rmU(d)$.

The discussion above leads to a number of equivalent characterizations of $t$-designs constructed from groups.
\begin{proposition}The following statements concerning $G\leq \rmU(d)$ are equivalent:
\begin{enumerate}
\item $G$ is a unitary $t$-design.	

 \item $\Phi_t(G)= \gamma(t,d)$.
 
 \item $\Ten^t(G)$ decomposes into the same number of irreps as  $\Ten^t(\rmU(d))$.

\item Every irreducible component in $\Ten^t(\rmU(d))$ is still irreducible when restricted to $\Ten^t(G)$.

\item $\Ten^t(G)$ and $\Ten^t(\rmU(d))$ has the same commutant.

\item The  commutant of  $\Ten^t(G)$ is generated by all the permutations of the tensor factors. 
\end{enumerate}	
\end{proposition}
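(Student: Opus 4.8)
The plan is to funnel all six statements through a single quantity: the dimension of the commutant of $\Ten^t(G)$. Two facts from the discussion above carry the argument. First, following \cite{GrosAE07}, for a finite group $G$ the frame potential satisfies $\Phi_t(G)=\dim(\Ten^t(G))'$, which is also the sum of the squared multiplicities of the irreducible constituents of $\Ten^t(G)$. Second, Schur orthogonality on the compact group $\rmU(d)$ gives $\gamma(t,d)=\int\rmd U\,|\tr U|^{2t}=\dim(\Ten^t(\rmU(d)))'$, while Schur--Weyl duality identifies $(\Ten^t(\rmU(d)))'$ with the $*$-algebra generated by the permutation operators on $(\bbC^d)^{\otimes t}$. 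Granting these, the equivalence $1\Leftrightarrow 2$ is immediate: $G$, viewed as the finite set of unitaries $\{U\}_{U\in G}$, is a unitary $t$-design iff $\Phi_t(G)=\gamma(t,d)$, by statement~6 of \Pref{pro:tdesignU}.

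For $2\Leftrightarrow 5\Leftrightarrow 6$, note that $G\le\rmU(d)$ gives the inclusion of operator sets $\Ten^t(G)\subseteq\Ten^t(\rmU(d))$; passing to commutants reverses it, $(\Ten^t(\rmU(d)))'\subseteq(\Ten^t(G))'$. Both are finite-dimensional, so they coincide precisely when their dimensions agree, i.e.\ when $\gamma(t,d)=\dim(\Ten^t(G))'=\Phi_t(G)$ --- this is $2\Leftrightarrow 5$. Replacing $(\Ten^t(\rmU(d)))'$ in the equality $(\Ten^t(G))'=(\Ten^t(\rmU(d)))'$ by its Schur--Weyl description turns statement~5 verbatim into statement~6.

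To bring in statements~3 and~4, I would invoke the monotonicity principle recorded just above, applied with $\rmU(d)$ as the ambient group and $G$ as the subgroup: $\Phi_t(G)\ge\Phi_t(\rmU(d))=\gamma(t,d)$, with equality if and only if every irreducible component of $\Ten^t(\rmU(d))$ --- i.e.\ every Weyl module $W_\lambda$ appearing in the Schur--Weyl decomposition $(\bbC^d)^{\otimes t}=\bigoplus_\lambda W_\lambda\otimes S_\lambda$ --- remains irreducible under $\Ten^t(G)$; that is statement~4. By the same principle this holds if and only if $\Ten^t(G)$ and $\Ten^t(\rmU(d))$ split $(\bbC^d)^{\otimes t}$ into the same number of irreducible summands, counted with multiplicity, which is statement~3. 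This closes the cycle $1\Leftrightarrow 2\Leftrightarrow 3\Leftrightarrow 4\Leftrightarrow 5\Leftrightarrow 6$.

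Since the whole proof is an assembly of standard facts, there is no single deep obstacle; the step requiring the most care is the bridge between the commutant-side statements 1,2,5,6 and the decomposition-side statements 3,4. Concretely, one must (i) count irreducible components \emph{with multiplicity} throughout, and (ii) make sure the monotonicity statement quoted above --- originally phrased for a finite ambient group --- is legitimate with the \emph{compact} group $\rmU(d)$ in that role. This is the case because $\Phi_t(\rmU(d))=\gamma(t,d)=\dim(\Ten^t(\rmU(d)))'$, and because a strict refinement of the Schur--Weyl decomposition strictly enlarges the commutant of $\Ten^t(G)$; once these bookkeeping points are in place, the equivalences above are forced.
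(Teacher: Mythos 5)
Your proposal is correct and follows essentially the same route as the paper, which states this proposition as a direct consequence of the preceding discussion: the identification $\Phi_t(G)=\dim(\Ten^t(G))'=\sum(\text{multiplicities})^2$, the inclusion $(\Ten^t(\rmU(d)))'\subseteq(\Ten^t(G))'$ with the monotonicity/equality criterion for subgroups, and Schur--Weyl duality for statement~6. The only point worth noting is that you (like the paper) treat ``each Weyl module stays irreducible'' and ``the commutants coincide'' as interchangeable, which implicitly also requires the restricted Weyl modules to remain pairwise inequivalent; this is the same gloss the paper makes in its subgroup discussion, so your argument matches the intended proof.
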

For example, $G$ is a 1-design iff it is irreducible; in that case, $\overline{G}$  has at least $d^2$ elements, and the lower bound is saturated iff it defines a nice error basis, that is, $\tr(U_jU_k^\dag)=d\delta_{jk}$ for $U_j, U_k\in \overline{G}$ \cite{Knil96N,KlapR02}.
The  group $G$ is a unitary 2-design iff $\Ten^2(G)$ has only two irreducible components, which correspond to the symmetric and antisymmetric subspaces of the bipartite Hilbert space. Prominent examples of unitary group 2-designs include Clifford groups and restricted Clifford groups in prime power dimensions \cite{DiViLT02, Chau05, Dank05the, DankCEL09, GrosAE07}.

Complex projective designs and unitary designs are connected by the following proposition. 
\begin{proposition}
	Any orbit of normalized vectors of a unitary group $t$-design forms a complex projective $t$-design.
\end{proposition}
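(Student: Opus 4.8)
The plan is to reduce the statement to the operator characterizations of $t$-designs already established in \pref{prop:designequiv} and \pref{pro:tdesignU}. Fix a normalized vector $\psi\in\bbC^d$ and let $G\leq\rmU(d)$ be a unitary $t$-design; the orbit under consideration is the (multi)set $\{U\psi : U\in\overline G\}$, which is well defined because $U\mapsto U\outer{\psi}{\psi}U^\dag$ is insensitive to the overall phase of $U$. By statement~\ref{item:notquiteright} of \pref{prop:designequiv}, it suffices to show that
\begin{equation*}
\frac{1}{|\overline G|}\sum_{U\in\overline G}\bigl(U\outer{\psi}{\psi}U^\dag\bigr)^{\otimes t}=\frac{P_{[t]}}{D_{[t]}}.
\end{equation*}

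First I would rewrite the summand as $\bigl(U\outer{\psi}{\psi}U^\dag\bigr)^{\otimes t}=U^{\otimes t}\,(\outer{\psi}{\psi})^{\otimes t}\,(U^{\otimes t})^\dag$, so that the left-hand side is exactly the group average of the conjugation map applied to the fixed operator $A:=(\outer{\psi}{\psi})^{\otimes t}\in L((\bbC^d)^{\otimes t})$. Invoking the equivalence of statements~1 and~3 in \pref{pro:tdesignU} with this choice of $A$, the group average coincides with the Haar average $\int\rmd U\,U^{\otimes t}A(U^{\otimes t})^\dag$. It then remains to evaluate this integral, and here I would use the invariance of the Haar measure: for fixed unit $\psi$, the pushforward of the Haar measure on $\rmU(d)$ under $U\mapsto U\psi$ is precisely the normalized uniform measure on the complex unit sphere, so
\begin{equation*}
\int\rmd U\,U^{\otimes t}(\outer{\psi}{\psi})^{\otimes t}(U^{\otimes t})^\dag=\int\rmd U\,\bigl(\outer{U\psi}{U\psi}\bigr)^{\otimes t}=\int\rmd\phi\,(\outer{\phi}{\phi})^{\otimes t}=\frac{P_{[t]}}{D_{[t]}},
\end{equation*}
where the last equality is \eqref{eq:tdesignSum}, resting on the irreducibility of $\Sym_t(\bbC^d)$ under $\rmU(d)$. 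Chaining the three displays gives the desired identity, and \pref{prop:designequiv} then yields that the orbit is a complex projective $t$-design.

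There is essentially no hard step here; the argument is bookkeeping that glues \pref{pro:tdesignU} (to trade the group average for the Haar average) to \pref{prop:designequiv} (to recognize the Haar twirl as the design condition). The only points that warrant a line of care are that the orbit is phase-independent, so passing to $\overline G$ is harmless; that the orbit should be regarded as a multiset when the stabilizer of $\psi$ in $\overline G$ is nontrivial, since repetitions do not affect the defining averages; and that the pushforward-of-Haar-measure claim is just transitivity of $\rmU(d)$ on the unit sphere together with uniqueness of the invariant measure. Alternatively, and even more directly, one can argue at the level of polynomials: for any $p\in\Hom_{(t,t)}(\bbC^d)$ the map $U\mapsto p(U\psi)$ lies in $\Hom_{(t,t)}(\rmU(d))$, so the unitary $t$-design property of $G$ gives $\frac1{|\overline G|}\sum_{U}p(U\psi)=\int\rmd U\,p(U\psi)=\int\rmd\phi\,p(\phi)$, which is exactly the definition of a projective $t$-design for the orbit.
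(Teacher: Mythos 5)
Your proof is correct and follows essentially the same route as the paper's: both reduce the claim to the operator identity $\frac{1}{|\overline G|}\sum_{U}U^{\otimes t}(\outer{\psi}{\psi})^{\otimes t}(U^{\otimes t})^\dag = P_{[t]}/D_{[t]}$ and then invoke \pref{prop:designequiv}. The only (immaterial) difference is that you justify this identity by passing through the Haar twirl via statement~3 of \pref{pro:tdesignU} and the invariance of the Haar measure, whereas the paper deduces it directly from the irreducibility of $\Ten^t(G)$ on $\Sym_t(\CC^d)$ together with Schur's lemma.
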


\begin{proof}
Let $G$ be a unitary group $t$-design, then $\Ten^t(G)$ acts irreducibly on  $\Sym_{t}(\CC^d)$. Therefore,
\begin{equation}
\frac{1}{|G|}\sum_{U\in \overline{G}} \bigl(U\outer{\psi}{\psi}U^\dagger\bigr)^{\otimes t}=\frac{1}{|G|}\sum_{U\in \overline{G}}U^{\otimes t }(\outer{\psi}{\psi})^{\otimes t}(U^{\otimes t})^\dag =\frac{P_{[t]}}{D_{[t]}} 
\end{equation}
for any normalized vector $\psi$. It follows that any orbit of pure states of $G$ forms a complex projective $t$-design.
\end{proof} 

\subsection{\label{sec:PauliCli}Pauli group, Clifford group, and stabilizer codes}

The Pauli group and Clifford group play a crucial role in  quantum computation \cite{Gott97the, GottC99, NielC00book, BravK05}, quantum error correction \cite{Gott97the,NielC00book},  randomized
benchmarking \cite{KnilLRB08,MageGE11,WallF14}, and quantum state tomography with compressed sensing \cite{GrosLFB10,Gros11,KimmL15}. They are also closely related to many interesting discrete structures, such as discrete Wigner functions \cite{Gros06,Zhu16P,gross2008quantum}, mutually unbiased bases \cite{DurtEBZ10}. 
Many nice properties of the Clifford group are closely related
to the fact that the group forms a unitary 2-design \cite{DiViLT02, Chau05,
	Dank05the, DankCEL09, GrosAE07, Scot08, RoyS09, HarrL09, ClevLLW15}.  Recently, it was shown that 
the multiqubit Clifford group is actually a unitary 3-design, but not a 4-design \cite{Zhu15MC,Webb15,KuenG15}. In the rest of this paper we assume that the dimension is a power of 2 when referring to the Pauli group or the Clifford group. 

Let $\FF_2=\mathbb{Z}_2=\{0,1\}$ be the finite field of integers with
arithmetic modulo $2$. We label the \emph{Pauli matrices} on a single qubit 
by elements of $\FF_2^2$ in the following way:
\begin{equation*}
	\sigma_{(0,0)} =  
	\left(
		\begin{array}{cc}
			1 & 0 \\
			0 & 1 
		\end{array}
	\right),
	\qquad
	\sigma_{(0,1)} =  
	\left(
		\begin{array}{cc}
			0 & 1 \\
			1 & 0 
		\end{array}
	\right),
	\qquad
	\sigma_{(1,0)} =  
	\left(
		\begin{array}{cc}
			1 & 0 \\
			0 & -1 
		\end{array}
	\right),
	\qquad
	\sigma_{(1,1)} =  
	\left(
		\begin{array}{cc}
			0 & -\rmi \\
			\rmi &0 
		\end{array}
	\right).
\end{equation*}
A \emph{Pauli operator} on $n$ qubits is defined as the tensor product
of $n$ Pauli matrices. Concretely, each $a\in\FF_2^{2n}$  defines a Pauli operator as follows,
\begin{equation*}
	W_a := \sigma_{(a_1,a_2)} \otimes \dots \otimes
	\sigma_{(a_{2n-1},a_{2n})}.
\end{equation*}
Every pair of Pauli operators either commute or anticommute, 
\begin{equation}\label{eqn:pauli_commutation_relation}
	W_a W_b=(-1)^{\langle a,b\rangle}W_b W_a,
\end{equation}
where $\langle a,b\rangle=a^\rmT J b$ is the symplectic form with $J$ being the $2n\times 2n$ block-diagonal
matrix over $\bbF_2$ with $n$ blocks of $\left(\begin{smallmatrix}
0 &1\\ 1 &0
\end{smallmatrix}\right)$ on the diagonal. 
Let 
\begin{equation*}
	\bar{\mathcal{P}}_n = \{ W_a \,|\, a\in\FF_2^{2n} \}
\end{equation*}
be the set of all $n$-qubit Pauli operators. 
The \emph{Pauli group} on $n$-qubits is the group generated by all the Pauli operators in $\bar{\mathcal{P}}_n$,
\begin{equation*}
\mathcal{P}_n = 
\langle \bar{\mathcal{P}}_n \rangle
= \{ \rmi^j W_a \,|\, a\in\bbF_2^{2n}, j \in \bbZ_4 \}.
\end{equation*}
In the following discussion $\bar{\mathcal{P}}_n $ is also identified as the projective Pauli group, the quotient group of $\mathcal{P}_n$ with respect to the phase factors. As a group,  $\bar{\mathcal{P}}_n $ is isomorphic to $\bbF_2^{2n}$.

The $n$-qubit Clifford group is usually defined as the normalizer of the  $n$-qubit Pauli group $\mathcal{P}_n$.
For the convenience of the following discussion, we shall define the Clifford group by specifying explicit generators. 
The single qubit Clifford group $\Cli_1$  is generated by  the Hadamard matrix $H$ and the phase matrix $S$, where
\begin{equation}\label{eqn:hadamardc}
H=\frac{1+\rmi}{2}\begin{pmatrix}
1 & 1\\
1&-1
\end{pmatrix},\qquad
S=\begin{pmatrix}
1 & 0\\
0&-\rmi
\end{pmatrix}.
\end{equation} 
Here our definition of the Hadamard matrix differs from the usual definition by a phase factor of  $\rme^{\pi \rmi/4}$. This convention has a crucial advantage in studying the representation of the Clifford group and symplectic group, as we shall see in \sref{sec:repsofsp}.
 In general, the Clifford group $\Cli_n$ is generated by Hadamard matrices and phase matrices for  respective qubits, as well as  CNOT gates between all pairs of qubits, where  
\begin{equation}\label{eqn:cnot}
\mathrm{CNOT}=\begin{pmatrix}
1 &0 &0& 0\\
0& 1& 0& 0\\
0& 0 & 0& 1\\
0&0&  1& 0
\end{pmatrix}. 
\end{equation}
It can be proved that the Clifford group $\Cli_n$ generated by these matrices is the normalizer of the Pauli group in $\rmU(d, \bbQ[\rmi])$ \cite{Gott97the,NebeRS01}, where $\bbQ[\rmi]$ is the extension of the rational field $\bbQ$ by the imaginary unit $\rmi$ (thanks to our definition of the Hadamard matrix, we do not need the eighth roots of unity), and $\rmU(d, \bbQ[\rmi])$ is the group of unitary operators in dimension $d$ with entries in $\bbQ[\rmi]$. In addition,  the normalizer of $\mathcal{P}_n$ in $\rmU(d)$ is generated by $\Cli_n$ and phase factors. The center of the Clifford group $\Cli_n$ is the order-4 cyclic group generated by the scalar matrix $\rmi$.

Let $\Sp(2n,\FF_2)$ be the symplectic group composed of all $2n\times 2n$ matrices  $F$ over $\bbF_2$ that satisfy the following equation
\begin{equation}
FJF^\rmT=J.
\end{equation}
For every Clifford unitary $U\in \Cli_n$, there is a unique symplectic matrix $F\in\Sp(2n,\FF_2)$ such that
\begin{equation}\label{eqn:symplecticaction}
U W_a U^\dagger = (-1)^{f(a)} W_{Fa}\qquad \forall a\in \bbF_2^{2n},
\end{equation}
where  $f$ is a  function from $\bbF_2^{2n}$ to $\bbF_2$. Conversely, for each symplectic matrix $F\in\Sp(2n,\FF_2)$ there exists a Clifford unitary $U\in \Cli_n$ and a suitable function $f$ such that the above equation is satisfied. Note that the $4d^2$  Clifford unitaries $\rmi^j U W_a$ for $j=0,1,2,3$ and $a\in \bbF_2^{2n}$ induce the same symplectic transformation. 
  Denote by $\PCli_n$  the projective Clifford group. Then both $\Cli_n/\mathcal{P}_n$ and $\PCli_n/\bar{\mathcal{P}}_n$ are isomorphic to $\Sp(2n,\FF_2)$.

The Clifford group $\Cli_n$ is  a unitary 3-design, but not a 4-design \cite{Zhu15MC,Webb15,KuenG15}. Nevertheless, its fourth frame potential is not far from the value of a 4-design [c.f.~\eref{eq:FramePmin}] according to  the  formula
 \cite{Zhu15MC}
\begin{equation}\label{eq:FpClifford}
 \Phi_4(\Cli_n)=\begin{cases}
 15 & n=1,\\
 29& n=2,\\
 30 & n\geq 3.
 \end{cases}
 \end{equation}
 This observation indicates that the fourth tensor power of the Clifford group has only a few more irreducible components than that of the whole unitary group, which will be spelled out more precisely in the next section.

\emph{Stabilizer codes and states} \cite{Gott97the} are certain subspaces of $\bbC^d$ that are of fundamental importance in quantum information theory.
Among other applications, they form the foundation of the theory of \emph{quantum error correction} \cite{NielC00book}.

A \emph{stabilizer group} is an abelian subgroup of the  Pauli group  that does not contain $-1$. 
A \emph{stabilizer code} is the common $+1$-eigenspace of operators in a stabilizer group \cite{Gott97the,NielC00book}. 
Let $S\subset \mathcal{P}_n$ be a stabilizer group. One can easily verify that
\begin{equation*}
	P = \frac1{|S|} \sum_{W\in S} W
\end{equation*}
is the orthogonal projector onto the stabilizer code associated with the group.
The order of any $n$-qubit stabilizer group is a divisor of $d=2^n$. 
If the stabilizer group has order $2^m$ with $m\leq n$, then the stabilizer code has dimension $2^{n-m}$. 
Those  $n$-qubit stabilizer groups of order $d$ are called \emph{maximal}. 
When the stabilizer group is maximal, the stabilizer code has dimension~1. 
Such codes are commonly referred to as \emph{stabilizer states}. 

Stabilizer codes can be described in terms of the geometry of the discrete symplectic vector space $\FF_2^{2n}$. We mention this connection only briefly -- c.f.~Refs.~\cite{Gros06,gross2013stabilizer,KuenG15} for more details.
Any $n$-qubit stabilizer group
$S$ is of the form
\begin{equation*}
	S = \{ (-1)^{f(a)} W_a \,|\, a \in M \subset \FF_2^{2n} \}
\end{equation*}
for some set $M\subset \FF_2^{2n}$ and some function $f: \FF_n^{2n} \to \FF_2$. 
The fact that $S$ forms a group implies that $M$ is a subspace of $\FF_2^{2n}$.
From the fact that $S$ is abelian and \eref{eqn:pauli_commutation_relation}, it follows 
that the symplectic inner product vanishes on $M$. 
Such subspaces are called \emph{isotropic} in symplectic geometry.
So  there is a close correspondence between stabilizer codes and isotropic subspaces of finite symplectic vector spaces.

\section{Decomposition of the fourth tensor power of the Clifford group}

\subsection{\label{sec:specialcode}A special stabilizer code}

To state our main result, we need to introduce a certain stabilizer code. Whenever $k$ is even, the following set of  Pauli operators
\begin{equation}\label{eqn:stabgroup}
	S_{n,k} = \{ \Ten^k(W_a)\,|\, a\in \FF_2^{2n} \}
\end{equation}
commute with each other. The set is also invariant under the diagonal action of the Clifford group. 
If in addition $k$ is a multiple of $4$, then  $S_{n,k}$ is closed under multiplication and thus forms a stabilizer group. Denote by 
$V_{n,k}$  the stabilizer code defined by the 
joint $+1$-eigenspace of operators in $S_{n,k}$. The dimension of the stabilizer code is $d^{k-2}$, and 
the projector onto it  is given by
\begin{equation}\label{eq:StabCodeProjection}
	P_{n,k} 
	= 
	\frac{1}{|S_{n,k}|} \sum_{a\in \FF_2^{2n}} \Ten^k(W_a)
	= 
	\frac{1}{2^{2n}} \sum_{a\in\FF_2^{2n}} \underbrace{W_a \otimes \dots \otimes W_a}_{k \times}.
\end{equation}
The stabilizer code  $V_{n,k}$ and projector $P_{n,k}$ are invariant under the action of the symmetric group $S_k$, which acts on $\big(\CC^d\big)^{\otimes k}$ by
permuting the $k$ tensor factors. Meanwhile, they are also invariant under the diagonal action of the Clifford group. In other words, $V_{n,k}$ affords a representation of the Clifford group $\Cli_n$. 
Our main result stated in Section~\ref{sec:Main}, in a precise sense, $V_{n,4}$ is the only subspace of $(\CC^d)^{\otimes 4}$ stabilized by $\Cli_n$ but not by the unitary group $U(d)$.

Given that $V_{n,k}$ is a common $+1$ eigenspace
of $\Ten^k(W_a)$ for all Pauli operators $W_a$ and that $\rmi^k=1$ when $k$ is a multiple of 4, it follows that the Pauli group $P_n$ acts trivially on $V_{n,k}$. 
Therefore, $V_{n,k}$ affords a   representation of the symplectic group $\Sp(2n,\bbF_2)$, which is isomorphic to $\Cli_n/\mathcal{P}_n$. The property of this representation is discussed in more detail in \sref{sec:repsofsp}.

In the rest of this section, we construct an orthonormal basis for  $V_{n,k}$, though this is not essential to understanding the main result. First consider the special case $n=1$.
Let $u\in \bbF_2^k$ and define $\tilde{u}:=u+(1,1,\ldots,1)$ as the bitwise "NOT" of $u$. If $k$ is a multiple of 4 and $u$ has even weight (even number of digits equal to 1), then the  vector $|\phi_u\rangle:=(|u\rangle+|\tilde{u}\rangle)/\sqrt{2}$ is a common $+1$-eigenvector of $\tau^k(W_a)$ for all $a\in\bbF_2^k$; that is, $|\phi_u\rangle\in V_{1,k}$. Now it is straightforward to verify that the follow set of vectors
\begin{equation}
\{|\phi_u\rangle  \,|\,\mbox{$u\in \bbF_2^k$ has even weight and  $u_1=0$}\}
\end{equation}
forms an orthonormal basis of $V_{1,k}$. 

Simple analysis shows that $V_{n,k}$ and $P_{n,k}$ can be written as tensor products as follows,
\begin{equation}
V_{n,k}=V_{1,k}^{\otimes n},\qquad P_{n,k}=P_{1,k}^{\otimes n}.
\end{equation}
So an orthonormal basis of  $V_{n,k}$ can be constructed by taking a suitable tensor power  of the above basis of $V_{1,k}$.

\subsection{\label{sec:Main}Main results}

The most concise way to state our main result is in terms of the
\emph{commutant} of $\Ten^4(\Cli_n)$.
\emph{Schur-Weyl duality}  states that the
commutant of $\Ten^k(\rmU(d))$ is generated by the symmetric group  $S_k$ permuting the tensor factors of $(\CC^d)^{\otimes k}$.
If $d=2^n$ and we restrict to 
the subgroup $\Ten^4(\Cli_n)$, the commutant becomes larger. Our main
result says that there is only one additional generator: the
stabilizer projector $P_{n,4}$ introduced above. 
\begin{theorem}[Main Theorem]\label{thm:Main}
	The commutant $\Ten^4(\Cli_n)'$ of the diagonal action of the Clifford
	group on $\big(\CC^d\big)^{\otimes 4}$ is generated as an algebra by $S_4$ (permuting the tensor factors) and  the stabilizer
	projector $P_{n,4}$.
\end{theorem}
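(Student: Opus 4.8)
The plan is to compute the dimension of the commutant $\Ten^4(\Cli_n)'$ and show it equals the dimension of the algebra generated by $S_4$ and $P_{n,4}$, then argue that the latter algebra is genuinely contained in the former, so the two must coincide. The dimension of the commutant of $\Ten^4(\Cli_n)$ is exactly the fourth frame potential $\Phi_4(\Cli_n)$, which by \eref{eq:FpClifford} equals $30$ for $n\ge 3$ (and $15, 29$ for $n=1,2$). So for $n\ge 3$ the target is to show that the algebra $\mathcal{A}_n := \langle S_4, P_{n,4}\rangle$ has dimension exactly $30$, and more generally that $\dim\mathcal{A}_n = \Phi_4(\Cli_n)$ in all cases. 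One inclusion is immediate: by the discussion in \sref{sec:specialcode}, both the permutation operators and $P_{n,4}$ commute with $\Ten^4(\Cli_n)$, hence $\mathcal{A}_n \subseteq \Ten^4(\Cli_n)'$, giving $\dim\mathcal{A}_n \le \Phi_4(\Cli_n)$. The whole content is the reverse dimension count.

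First I would understand the group algebra of $S_4$ acting on $(\CC^d)^{\otimes 4}$: for $d=2^n\ge 4$ this is the full $24$-dimensional group algebra $\CC[S_4]$ (all Specht modules appear since $d\ge 4$), while for $d=2$ and $d=4$ it is a proper quotient of dimension $14$ and $23$ respectively -- these numbers matching the Schur-Weyl multiplicities, consistently with $\Phi_4(\rmU(2))=\gamma(4,2)=14$ and $\gamma(4,4)=23$. The remaining work is to adjoin $P_{n,4}$ and see exactly how much the algebra grows. The natural tool is to decompose $(\CC^d)^{\otimes 4}$ using Schur-Weyl duality into $\bigoplus_\lambda W_\lambda\otimes S_\lambda$, restrict $P_{n,4}$ to each isotypic block, and identify $P_{n,4}$ with an element of $\bigoplus_\lambda \mathrm{End}(W_\lambda)$ that is $\Cli_n$-equivariant. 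Since $\mathcal{A}_n$ is generated by $S_4$ (which is $\bigoplus_\lambda 1\otimes \mathrm{End}(S_\lambda)$, up to the quotient for small $d$) together with $P_{n,4}$, its dimension is controlled by how $P_{n,4}$ fails to be central, i.e. how it "spreads" across the multiplicity spaces $W_\lambda$. Concretely, $P_{n,4}$ projects onto the stabilizer code $V_{n,4}$, which is $S_4$-invariant, so $V_{n,4}$ is a direct sum of isotypic pieces $\bigoplus_\lambda (W_\lambda^{(0)}\otimes S_\lambda)$ for certain subspaces $W_\lambda^{(0)}\subseteq W_\lambda$; the key is to determine which $\lambda$ contribute and the "overlap pattern" of $V_{n,4}$ with each $H_\lambda$. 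One then computes $\dim\mathcal{A}_n$ as the dimension of the algebra generated inside $\bigoplus_\lambda \mathrm{End}(W_\lambda)\otimes \mathrm{End}(S_\lambda)$ by $\{1\otimes x : x\in\CC[S_4]/(\text{relations})\}$ and the single projector $\bigoplus_\lambda \Pi_\lambda \otimes 1$; with at most one nontrivial $W_\lambda^{(0)}$ of dimension $1$ per relevant $\lambda$ (or a small-dimensional one), this counting becomes explicit. The target counts $30 = 24 + 6$ for $n\ge 3$ -- the extra $6$ being the dimension of the endomorphism algebra of $S_4$ acting on whatever multiplicity-$2$ structure $P_{n,4}$ introduces -- strongly suggest that $V_{n,4}$ meets exactly the two Weyl modules whose Specht modules have dimension reflected in this count.

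The main obstacle, I expect, is the explicit determination of how $P_{n,4}$ sits relative to the Schur-Weyl decomposition -- i.e. computing the dimensions $\dim(V_{n,4}\cap H_\lambda)$ and, crucially, the rank of $P_{n,4}$ restricted to each $W_\lambda$-isotypic block, which requires understanding $V_{n,4}=V_{1,4}^{\otimes n}$ at the level of $\Cli_n$-representations. This is where the companion analysis of \sref{sec:repsofsp} (decomposing the $\Sp(2n,\FF_2)$-representation on $V_{n,4}$ into known symplectic irreps) must be invoked. A secondary subtlety is the small-dimension cases $n=1,2$: there $\CC[S_4]$ does not act faithfully on $(\CC^d)^{\otimes 4}$, the frame potential drops to $15$ and $29$, and one must verify that the dimension bookkeeping of $\mathcal{A}_n$ degrades in exactly the right way (losing $10$ and $1$ dimensions respectively relative to $n\ge 3$) -- so the clean statement "$\mathcal{A}_n=\Ten^4(\Cli_n)'$" nonetheless holds uniformly. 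I would handle these by direct computation for $n=1$ and a careful case check for $n=2$, using the explicit basis of $V_{1,4}$ given above.
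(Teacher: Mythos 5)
Your strategy is essentially the paper's: both arguments rest on the identity $\dim \Ten^4(\Cli_n)' = \Phi_4(\Cli_n)$ from \eref{eq:FpClifford} together with a determination of how $P_{n,4}$ splits each Schur--Weyl block $H_\lambda = W_\lambda\otimes S_\lambda$. The only structural difference is the endgame: you count $\dim\mathcal{A}_n$ directly (each properly split $\lambda$ contributes $2d_\lambda^2$, each unsplit one $d_\lambda^2$) and match it to $30$, whereas the paper first deduces from $\Phi_4(\Cli_n)=30$ that the $W_\lambda^\pm$ are irreducible and mutually inequivalent (\pref{prop:Wirre}) and then uses Schur's Lemma to write an arbitrary commutant element as $P_{n,4}(\cdots)+(\Id-P_{n,4})(\cdots)$ with the parenthesized factors in the permutation algebra. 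Your variant is sound and, for the Main Theorem alone, marginally leaner, though it does not by itself deliver the irreducibility statement the paper also needs later.

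That said, the decisive computation is left undone, and you point to the wrong tool for it. The splitting pattern --- that for $n\ge 3$ exactly $\lambda=[4],[1^4],[2,2]$ are properly split, so the surplus over $24$ is $d_{[4]}^2+d_{[1^4]}^2+d_{[2,2]}^2=1+1+4=6$, not a contribution from ``two Weyl modules'' with one-dimensional pieces --- is precisely \lref{lem:dimensions}, and the paper obtains it by the elementary computation $d_\lambda D_\lambda^+=\tr(P_{n,4}P_\lambda)=d^{-2}\sum_a\tr\bigl(W_a^{\otimes 4}P_\lambda\bigr)$, using the $S_4$ character table and the fact that $\tr\bigl(U_\sigma W_a^{\otimes4}\bigr)=d^{l(\sigma)}$ for $a\neq 0$ when $\sigma$ has no odd cycles (and vanishes otherwise). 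No input from the Weil-representation analysis of \sref{sec:repsofsp} is needed or available --- that section is a consequence of the main results, not an ingredient. Finally, the small cases behave as you anticipate but for table-specific reasons: for $d=4$ the piece $W_{[1^4]}^-$ vanishes (cost $1$), and for $d=2$ both $W_{[1^4]}$ and $W_{[2,2]}^-$ vanish (cost $1+4$), reproducing $29$ and $15$; the same trace computation settles these uniformly, so no separate ad hoc verification is required.
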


Next, we will give a more concrete formulation of the main result.
To this end, recall that Schur-Weyl duality can be used to find the
decomposition
\begin{equation}\label{eq:SchurWeyl4}
	\left(\bbC^{d}\right)^{\otimes 4}
	=
	\bigoplus_{\lambda}H_\lambda =\bigoplus_{\lambda} W_\lambda \otimes S_\lambda
\end{equation}
of $\left(\bbC^{d}\right)^{\otimes4}$ into
irreps of $\rmU(d)\times S_4$. 
Here, the $\lambda$'s are partitions of $4$ into no more than $d$
parts, $W_\lambda$ is the Weyl module carrying an irrep of $\rmU(d)$ and
$S_\lambda$ the Specht module on which $S_4$ acts irreducibly; the
group  $\rmU(d)\times S_4$ acts irreducibly on each $H_\lambda$. 
The dimensions of $S_\lambda$ and $W_\lambda$ are denoted by
$d_\lambda$ and $D_\lambda$, respectively, as listed in
\tref{tab:IrreDim}. 
Note that $d_\lambda$ equals the multiplicity of the Weyl
module $W_\lambda$, and, likewise,  $D_\lambda$ is the multiplicity of
the Specht module $S_\lambda$. 
Let $G$ be a subgroup of $\rmU(d)$,  then the number of irreducible
components of $G\times S_4$ on $H_\lambda$ is
equal to the number of irreducible components of $G$  on $W_\lambda$.
In particular, $G\times S_4$ is irreducible on $H_\lambda$ iff $G$ is
irreducible on $W_\lambda$. 
The multiplicity of each irrep of $G$
appearing in $H_\lambda$ is always a multiple of $d_\lambda$.

Now recall that $V_{n,4}$ is the stabilizer code defined above. 
We denote its orthogonal complement by $V_{n,4}^\perp$ and define
the spaces
\begin{equation*}
H_\lambda^{+} := H_\lambda \cap V_{n,4}, \qquad
H_\lambda^{-} := H_\lambda \cap V_{n,4}^\perp.
\end{equation*}
Because $V_{n,4}$ is invariant under the action of $S_4$, and because
the $S_\lambda$ are irreducible under the same action, it follows that
for each $\lambda$, there is a  subspace $W_\lambda^+ \subset W_\lambda$
such that
\begin{equation*}
H_\lambda^+ = W_\lambda^+ \otimes S_\lambda.	
\end{equation*}
Likewise,
\begin{equation*}
H_\lambda^- = W_\lambda^- \otimes S_\lambda,
\end{equation*}
where $W_\lambda^-$ is the ortho-complement, within $W_\lambda$, of
$W_\lambda^+$. Define  $D_\lambda^{\pm}: = \dim W_\lambda^{\pm}$, then $\dim H_\lambda^{\pm}=d_\lambda D_\lambda^{\pm}$.
A major technical stepping stone for establishing our main result are explicit formulas for the dimensions of these spaces.
\begin{lemma}\label{lem:dimensions}
	The values of $D_\lambda^{\pm}$ for nonincreasing partitions $\lambda$ of $4$
	are given in Table~\ref{tab:IrreDim}. 
\end{lemma}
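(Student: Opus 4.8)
The plan is to read the numbers $D_\lambda^{\pm}$ off directly from the explicit form of the stabilizer projector
$P_{n,4}=d^{-2}\sum_{a\in\FF_2^{2n}}\Ten^4(W_a)$
in \eqref{eq:StabCodeProjection}, using nothing about $\Cli_n$ beyond the already-noted fact that $P_{n,4}$ commutes with the action of $S_4$ permuting the four tensor factors. Indeed, by Schur--Weyl duality the commutant of that $S_4$-action on $(\bbC^d)^{\otimes4}$ is $\bigoplus_\lambda L(W_\lambda)\otimes\mathbf{1}_{S_\lambda}$, so the orthogonal projector $P_{n,4}$ must take the form $\bigoplus_\lambda Q_\lambda\otimes\mathbf{1}_{S_\lambda}$ with each $Q_\lambda$ an orthogonal projector on $W_\lambda$; consequently $W_\lambda^{+}=\operatorname{im}Q_\lambda$ and
\begin{equation*}
D_\lambda^{+}=\operatorname{rank}Q_\lambda=\frac{1}{d_\lambda}\Tr\bigl(P_{n,4}\,\Pi_\lambda\bigr),\qquad D_\lambda^{-}=D_\lambda-D_\lambda^{+},
\end{equation*}
where $\Pi_\lambda$ is the central projector onto the Schur--Weyl block $H_\lambda$ and $D_\lambda$ is the Weyl-module dimension from \tref{tab:IrreDim}.

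Next I would express $\Pi_\lambda$ through the group algebra of $S_4$ as $\Pi_\lambda=\tfrac{d_\lambda}{24}\sum_{\pi\in S_4}\chi_\lambda(\pi)\,R(\pi)$, with $R(\pi)$ the tensor-factor permutation operator and $\chi_\lambda$ the (real) irreducible $S_4$-character labelled by $\lambda$. This collapses the whole problem to the five ``mixed moments''
\begin{equation*}
m(\pi):=\Tr\bigl(P_{n,4}R(\pi)\bigr)=\frac{1}{d^2}\sum_{a\in\FF_2^{2n}}\Tr\bigl(\Ten^4(W_a)\,R(\pi)\bigr).
\end{equation*}
Applying the standard identity that $\Tr\bigl((B_1\otimes\cdots\otimes B_4)R(\pi)\bigr)$ factorizes over the cycles of $\pi$ into traces of the ordered cycle products, together with $W_a^2=\mathbf{1}$, $\Tr W_a=d\,\delta_{a,0}$ and $\Tr\mathbf{1}=d$, one finds that $m(\pi)$ depends only on the cycle type of $\pi$: $m(e)=d^2$; $m=d$ on transpositions; $m=1$ on $3$-cycles; $m=d^2$ on products of two disjoint transpositions; and $m=d$ on $4$-cycles.

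To finish, I substitute these values together with the character table of $S_4$ (class sizes $1,6,8,3,6$) into $D_\lambda^{+}=\tfrac1{24}\sum_{C}|C|\,\chi_\lambda(C)\,m(C)$. This yields $D_{[4]}^{+}=(d+1)(d+2)/6$, $D_{[2,2]}^{+}=(d^2-1)/3$, $D_{[1^4]}^{+}=(d-1)(d-2)/6$, and, strikingly, $D_{[3,1]}^{+}=D_{[2,1,1]}^{+}=0$ (so $W_{[3,1]}$ and $W_{[2,1,1]}$ sit entirely inside $V_{n,4}^{\perp}$); the $D_\lambda^{-}$ entries of \tref{tab:IrreDim} then follow by subtracting from $D_\lambda$, and for $n=1$ one just specialises to $d=2$ and drops the partitions $[2,1,1]$ and $[1^4]$, which have more than $d$ parts and hence do not occur. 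A convenient sanity check is $\sum_\lambda d_\lambda D_\lambda^{+}=\dim V_{n,4}=d^{2}$. I do not expect a genuine obstacle here: the only step that demands care is the cycle-factorization evaluation of $m(\pi)$ (and bookkeeping which partitions survive at $d=2$); everything else is a finite computation with the $S_4$ character table.
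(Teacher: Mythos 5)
Your proposal is correct and follows essentially the same route as the paper: both compute $D_\lambda^+=\frac{1}{d_\lambda}\tr(P_{n,4}P_\lambda)$ by expanding $P_\lambda$ through the $S_4$ character table and evaluating $\tr(U_\sigma W_a^{\otimes 4})$ via cycle factorization (only even-cycle-type permutations survive for $a\neq 0$). The only difference is organizational — you bundle the $a=0$ term into class-wise moments $m(C)$ rather than separating it out — and your computed values and conclusions, including $D_{[3,1]}^+=D_{[2,1,1]}^+=0$, all check out.
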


\begin{table}
	\caption{\label{tab:IrreDim}Dimensions of the Specht modules, Weyl modules, and irreducible components of $\Ten^4(\Cli_n)$ that appear in $(\bbC^d)^{\otimes 4}$, where $d=2^n$.}
	
	\centering
	\begin{math}
	\begin{array}{l|cccc}
	\hline\hline
	\lambda		& d_\lambda & D_\lambda & D_\lambda^+ & D_\lambda^- \\  \hline
	[4]	&1& \frac{d(d+1)(d+2)(d+3)}{24} & \frac{(d+1)(d+2)}{6}  &  \frac{(d-1)(d+1)(d+2)(d+4)}{24}  \\ 
	{[1,1,1,1]}&1& \frac{d(d-1)(d-2)(d-3)}{24}  &  \frac{(d-1)(d-2)}{6} &  \frac{(d+1)(d-1)(d-2)(d-4)}{24}  \\ 
	{[2,2]}	&2& \frac{d^2(d^2-1)}{12}  & \frac{(d^2-1)}{3}  & \frac{(d^2-4)(d^2-1)}{12}  \\ 
	{[2,1,1]}	&3& \frac{d(d-2)(d^2-1)}{8}  & 0 &   \frac{d(d-2)(d^2-1)}{8} \\ 
	{[3,1]}	&3& \frac{d(d+2)(d^2-1)}{8}  & 0 & \frac{d(d+2)(d^2-1)}{8}   \\
	\hline\hline
	\end{array} 
	\end{math}
\end{table}

Let $U\in \Cli_n$ be an element of the  Clifford group. 
Because $\Ten^4(U)$ commutes with both $S_4$ and $P_{n,4}$, it is of the form
\begin{equation*}
	\Ten^4(U) = 
	\bigoplus_{\lambda; s=\pm\,|\,D_\lambda^s\neq 0} U_\lambda^s \otimes \Id_\lambda,
\end{equation*}
where $U_\lambda^s$ acts on $W_\lambda^s$ and $\Id_\lambda$ is the identity on $S_\lambda$. 
Therefore, the spaces $W_\lambda^\pm$ carry representations $U\mapsto U_\lambda^\pm$ of the Clifford group $\Cli_n$.
We can now state a more concrete version of the main theorem.
\begin{proposition}\label{prop:Wirre}
	Whenever they are non-trivial, the spaces $W_{\lambda}^{\pm}$
	carry irreducible and inequivalent representations of the $n$-qubit Clifford group
	$\Cli_n$. 	What is more, 
	under the action of $\Cli_n \times S_4$, the 
	space $\big(\CC^d\big)^{\otimes 4}$ decomposes into irreps as
	\begin{equation*}
		\left(\CC^d\right)^{\otimes 4}
		=
		\bigoplus_{\lambda; s=\pm\,|\,D_\lambda^s\neq 0} W_\lambda^{s} \otimes S_\lambda.
	\end{equation*}
\end{proposition}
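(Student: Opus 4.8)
The plan is to deduce Proposition \ref{prop:Wirre} from Theorem \ref{thm:Main} by the structure theory of finite-dimensional semisimple (in fact $*$-) algebras. Write $\mathcal{M}$ for the unital algebra generated by $\Ten^4(\Cli_n)$ inside $\operatorname{End}\!\big((\CC^d)^{\otimes 4}\big)$ and $\mathcal{A}:=\mathcal{M}'=\Ten^4(\Cli_n)'$ for its commutant. Since $\mathcal{M}$ is generated by unitaries it is a $*$-algebra, hence $\mathcal{M}$ and $\mathcal{A}$ are semisimple and $\mathcal{M}''=\mathcal{M}$ (double commutant). The standard consequence of the double centralizer theorem is that $(\CC^d)^{\otimes 4}=\bigoplus_i X_i\otimes Y_i$, where the $X_i$ run over the pairwise inequivalent irreps of $\mathcal{M}$ (equivalently of $\Cli_n$) occurring in the tensor space, the $Y_i$ over those of $\mathcal{A}$, paired so that $\dim Y_i$ is the multiplicity of $X_i$, and the index set is in bijection with the simple blocks of $\mathcal{A}$. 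So the task reduces to reading off the block structure of $\mathcal{A}$.

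Here Theorem \ref{thm:Main} does the work: $\mathcal{A}$ is generated by $\mathcal{B}$ — the image of $\CC[S_4]$, which by Schur--Weyl duality (see \eqref{eq:SchurWeyl4}) equals $\bigoplus_{\lambda:\,D_\lambda\neq 0}\Id_{W_\lambda}\otimes\operatorname{End}(S_\lambda)$ — together with the projector $P_{n,4}$. First I would note that $P_{n,4}$ commutes with $\mathcal{B}$: by \eqref{eq:StabCodeProjection}, $P_{n,4}=2^{-2n}\sum_a W_a\otimes\cdots\otimes W_a$ is manifestly invariant under permutation of the four tensor factors. Being a projection commuting with $\mathcal{B}$, it merely splits $\mathcal{B}$ into two ideals: $\mathcal{A}=\mathcal{B}P_{n,4}\oplus\mathcal{B}(1-P_{n,4})$. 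Moreover, on each block $H_\lambda=W_\lambda\otimes S_\lambda$ the operator $P_{n,4}$ commutes with $\Id_{W_\lambda}\otimes\operatorname{End}(S_\lambda)$, and since $\operatorname{End}(S_\lambda)$ is simple, Schur's lemma forces $P_{n,4}|_{H_\lambda}=q_\lambda\otimes\Id_{S_\lambda}$ for a projector $q_\lambda$ on $W_\lambda$; by definition $q_\lambda$ projects onto $W_\lambda^+$ and $1-q_\lambda$ onto $W_\lambda^-$. Hence
\begin{equation*}
	\mathcal{A}=\bigoplus_{\lambda;\,s=\pm\,|\,D_\lambda^s\neq 0}\Id_{W_\lambda^s}\otimes\operatorname{End}(S_\lambda),
\end{equation*}
a direct sum of simple blocks, the $(\lambda,s)$-block being isomorphic to $M_{d_\lambda}(\CC)$ and acting on $H_\lambda^s=W_\lambda^s\otimes S_\lambda$; the pairs $(\lambda,s)$ with $W_\lambda^s\neq 0$ are exactly those with $D_\lambda^s\neq 0$, as tabulated in Table \ref{tab:IrreDim}.

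It remains to translate this back. A block-diagonal argument shows that $\mathcal{M}=\mathcal{A}'$ decomposes accordingly, acting on $H_\lambda^s$ as the commutant of $\Id_{W_\lambda^s}\otimes\operatorname{End}(S_\lambda)$ inside $\operatorname{End}(H_\lambda^s)$, namely as $\operatorname{End}(W_\lambda^s)\otimes\Id_{S_\lambda}$. In particular $\Ten^4(\Cli_n)$ spans $\operatorname{End}(W_\lambda^s)$ on this block, i.e.\ $\Cli_n$ acts irreducibly on $W_\lambda^s$, with multiplicity $d_\lambda=\dim S_\lambda$; since distinct $(\lambda,s)$ label distinct simple blocks of $\mathcal{M}$, the representations $W_\lambda^s$ are pairwise inequivalent. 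Finally, $S_4$ acts on $H_\lambda^s$ through $\mathcal{B}$, hence as $\Id_{W_\lambda^s}\otimes\operatorname{End}(S_\lambda)$ and surjects onto $\operatorname{End}(S_\lambda)$, so under $\Cli_n\times S_4$ the block $H_\lambda^s$ is the (irreducible) external tensor product of $W_\lambda^s$ and $S_\lambda$, giving $(\CC^d)^{\otimes 4}=\bigoplus_{\lambda;\,s=\pm\,|\,D_\lambda^s\neq 0}W_\lambda^s\otimes S_\lambda$.

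There is no serious obstacle at this stage: once Theorem \ref{thm:Main} is in hand, the proposition is forced by Wedderburn/double-commutant bookkeeping. The one point that deserves a line of care is the claim that $P_{n,4}$ restricts to a tensor factor acting on $W_\lambda$ alone within each $H_\lambda$ — this is precisely what makes $W_\lambda^\pm$ a well-defined $\Cli_n$-subrepresentation — but it is an immediate consequence of Schur's lemma applied to the simple algebra $\operatorname{End}(S_\lambda)$. Note also that Lemma \ref{lem:dimensions} is not logically needed for Proposition \ref{prop:Wirre} itself; it only serves to identify which pairs $(\lambda,s)$ actually occur (equivalently, to compute the $\Phi_4(\Cli_n)$ of \eqref{eq:FpClifford}).
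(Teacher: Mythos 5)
Your Wedderburn/double-commutant bookkeeping is internally correct: \emph{if} Theorem~\ref{thm:Main} is available as an independent fact, then the block decomposition of the commutant you describe does force Proposition~\ref{prop:Wirre}, and your treatment of $P_{n,4}$ restricting to $q_\lambda\otimes\Id_{S_\lambda}$ on each $H_\lambda$ is exactly right. The problem is the direction of the logical arrow. In the paper, Theorem~\ref{thm:Main} is \emph{not} proved independently; its only proof is deduced \emph{from} Proposition~\ref{prop:Wirre} (once the $W_\lambda^s$ are known to be irreducible and inequivalent, Schur's Lemma gives the block form of the commutant, and each block is then expressed via $P_{n,4}$, $\Id-P_{n,4}$ and permutations). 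So taking Theorem~\ref{thm:Main} as your starting point makes the argument circular within the paper's logical structure: you have reduced the proposition to a statement whose proof requires the proposition.

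The actual content you are missing is the irreducibility argument. The paper gets it by counting: the subspaces $H_\lambda^{\pm}$ with $D_\lambda^{\pm}\neq 0$ (this is where Lemma~\ref{lem:dimensions} \emph{is} logically needed, contrary to your closing remark --- one must know that both $H_\lambda^+$ and $H_\lambda^-$ are nontrivial for $\lambda=[4],[1^4],[2,2]$) give a lower bound on the sum of squared multiplicities of $\Ten^4(\Cli_n)$, namely $\Phi_4(\Cli_n)\geq d_{[4]}^2+d_{[1^4]}^2+d_{[2,2]}^2+\sum_\lambda d_\lambda^2=30$ for $n\geq 3$, with equality iff all the $W_\lambda^s$ are irreducible and pairwise inequivalent. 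The previously computed frame potential $\Phi_4(\Cli_n)=30$ from Eq.~\eqref{eq:FpClifford} then forces equality. Without this (or some substitute, e.g.\ a direct computation of $\dim \Ten^4(\Cli_n)'$), neither the proposition nor the main theorem is established. To repair your write-up, either supply an independent proof of Theorem~\ref{thm:Main} or replace your first step by the frame-potential comparison.
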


We remark that following \rcite{VollW01}, the commutant of $\Ten^4(\Cli_n)$ can easily be mapped to the commutant of certain related representations of $\Cli_n$.
Indeed, consider as a first example the representation
\begin{equation}\label{eqn:conjugaterep}
	\Ten^{(3,1)}: U \mapsto U \otimes U \otimes U \otimes \bar U.
\end{equation}
Then
\begin{equation}\label{eqn:partialTransposeCommutant}
	A \in \Ten^{(3,1)}(\Cli_n)' 
	\Leftrightarrow
	A^{\Gamma_4} \in \Ten^4(\Cli_n)'.
\end{equation}
Here, $A^{\Gamma_4}$ is the \emph{partial transpose} of $A$ with respect to the fourth tensor factor.
It is defined on product matrices as
\begin{equation*}
	A_1 \otimes A_2 \otimes A_3 \otimes A_4
	\mapsto
	A_1 \otimes A_2 \otimes A_3 \otimes (A_4)^\rmT,
\end{equation*}
and extended linearly to the general case. 
The transpose is to be understood in the same basis in which the complex conjugate is taken.
To verify this claim,
note that
\begin{align*}
	&\Ten^{(3,1)}(U)
	(A_1\otimes A_2\otimes A_3\otimes A_4)^{\Gamma_4} 
	\Ten^{(3,1)}(U)^\dagger\\
	&=
	(U \otimes U \otimes U \otimes \bar U)
	(A_1\otimes A_2\otimes A_3 \otimes A_4^\rmT)
	(U^\dagger \otimes U^\dagger \otimes U^\dagger \otimes U^\rmT) \\
	&=
	\left[
	  (U \otimes U \otimes U \otimes U)
	  (A_1\otimes A_2\otimes A_3 \otimes A_4)
	  (U^\dagger \otimes U^\dagger \otimes U^\dagger \otimes U^\dagger) 
	\right]^{\Gamma_4},
\end{align*}
so that
\begin{equation*}
	\Ten^{(3,1)}(U)A^{\Gamma_4}\Ten^{(3,1)}(U)^\dagger   - A^{\Gamma_4} = 0 
	\quad
	\Leftrightarrow 
	\quad
	\left[\Ten^{4}(U)A\Ten^{4}(U)^\dagger   - A\right]^{\Gamma_4} = 0
	\quad
	\Leftrightarrow 
	\quad
	\Ten^{4}(U)A\Ten^{4}(U)^\dagger   - A = 0.
\end{equation*}
An analogous reasoning applies to the representations $\Ten^{(k,l)}$ for general $k,l$. 
Particularly relevant are the representations $\Ten^{(k,k)}$, which are isomorphic to the $k$th tensor power of the \emph{adjoint representation}.
Based on this connection, one could work out the irreducible representations of $\Ten^{(k,l)}(\Cli_n)$ by diagonalizing the commutant.
We have not pursued this route any further in the present paper (but see \cite{helsen2016representations}).  

\subsection{Proof of Main Theorem}

In this section, we  prove \lref{lem:dimensions} and conclude from it our main result.
An alternative proof of \lref{lem:dimensions} 
--
which also yields orthonormal bases for $W_{[4]}^+$ and  $W_{[1^4]}^+$
--
is presented in the appendix.

\begin{table}
	\caption{\label{tab:CharS4}Characters of the symmetric group $S_4$.}
	
	\centering
	\begin{math}
	\begin{array}{l|ccccc}
	\hline\hline
	\mbox{cycle type} & (1^4) & (2^2) & (2,1^2) & (3,1)  & (4) \\ 
	\mbox{order}	  &  1  & 2   & 2 & 3 & 4 \\ 
	\#		& 1 & 3 & 6 & 8 &  6\\ 
	\hline
	\chi_1=[4]	& 1 & 1 & 1 & 1 & 1 \\ 
	\chi_2=[1,1,1,1]	    & 1 & 1 & -1 & 1 & -1 \\ 
	\chi_3=[2,2]		& 2 & 2 & 0 & -1 & 0 \\ 
	\chi_4=[2,1,1]		& 3 & -1 & -1 & 0 & 1 \\ 
	\chi_5=[3,1]		& 3 & -1 & 1 &0  & -1\\
	\hline\hline
	\end{array} 
	\end{math}
\end{table}

\begin{proof}[Proof of \lref{lem:dimensions}]
	Let $H_\lambda, W_\lambda, S_\lambda$ be the representation spaces appearing in the Schur-Weyl decomposition in \eref{eq:SchurWeyl4}. Let $P_\lambda$ be the projector onto  $H_\lambda$. We have 
	\begin{equation}\label{eq:Plambda}
	P_\lambda=\frac{d_\lambda}{24}\sum_{\sigma\in S_4} \chi_\lambda(\sigma) U_\sigma,
	\end{equation}
	where  $U_\sigma$ is the unitary operator that realizes the permutation of the tensor factors corresponding  to $\sigma$, and $\chi_\lambda$ is the character of the irrep of $S_4$ corresponding to the partition $\lambda$; see \tref{tab:CharS4}. 
	For example, the projectors onto the symmetric and antisymmetric subspaces are respectively given by
	\begin{align}
	P_{[4]}=\frac{1}{24}\sum_{\sigma\in S_4}U_\sigma,\qquad
	P_{[1^4]}=\frac{1}{24}\sum_{\sigma\in S_4}\sgn(\sigma)U_\sigma,
	\end{align}
	where $\sgn(\sigma)$ is equal to 1 for even permutations and $-1$ for odd permutations.

	Note that $P_\lambda$  commutes with the projector $P_{n,4}$ onto the stabilizer code, so the dimension of  $H_\lambda^+=V_{n,4}\cap H_\lambda$ is given by $d_\lambda D_\lambda^+=\tr(P_{n,4} P_\lambda)$. Therefore, 
	\begin{align}
	D_\lambda^+&=\frac{1}{d_\lambda}\tr(P_{n,4} P_\lambda)=\frac{1}{d^2d_\lambda}\sum_a\tr(W_a^{\otimes4} P_\lambda)=\frac{1}{d^2}\Bigl[D_\lambda+\frac{1}{24}\sum_{\sigma\in S_4}\sum_{0\neq a\in\bbF_2^{2n}} \chi_\lambda(\sigma) \tr\bigl(U_\sigma W_a^{\otimes 4}\bigr)\Bigr].
	\end{align}
Here the trace $\tr\bigl(U_\sigma W_a^{\otimes 4}\bigr)$ with $a\neq 0$ can be computed using the following simple formula,
	\begin{equation}
	\tr\bigl(U_\sigma W_a^{\otimes 4}\bigr)=
	\begin{cases}
	0 & \mbox{$\sigma$ contains a cycle of odd length},\\
	d^{l(\sigma)} & \mbox{otherwise},
	\end{cases}
	\end{equation}
	where  $l(\sigma)$ is the number of cycles in $\sigma$ that have even lengths.
	According to \tref{tab:CharS4}, the symmetric group $S_4$ has three permutations of cycle type $(2^2)$ and six permutations of cycle type (4), while any other permutation contains at least one cycle of odd length. Now the value of $D_\lambda^+$
	can be computed by virtue of  the above two equations, from which $D_\lambda^-=D_\lambda-D_\lambda^+$ follows immediately, as shown in \tref{tab:IrreDim}.
\end{proof}

\begin{proof}[Proof of \pref{prop:Wirre} and \thref{thm:Main}]
	
	From the discussion in \sref{sec:intro:unitary}, the sum of squared multiplicities of irreducible components of $\tau^4(\rmC_n)$ is equal to the fourth frame potential of the Clifford group $\Cli_n$. 
	For now, we restrict to $n\geq 3$.
	In this case, both $H_\lambda^+$ and $ H_\lambda^-$ are nontrivial invariant subspaces of  $\Cli_n\times S_4$ for $\lambda=[4], [1,1,1,1], [2,2]$. 
	So the frame potential of $\Cli_n$ is at least 
	\begin{equation}
		\Phi_4(\rmC_n)\geq	d_{[4]}^2+d_{[1,1,1,1]}^2+d_{[2,2]}^2+\sum_{\lambda} d_\lambda^2=30,
	\end{equation}
	with equality if and only if all the representations of $\Cli_n$ afforded by $W^\pm_\lambda$ for $D^\pm_\lambda\neq0$ are irreducible and inequivalent. 
	However, we know from \eref{eq:FpClifford} that $\Phi(\rmC_n)$ is indeed equal to $30$ for $n\geq 3$. 
	Thus, equality must hold and we have proved the first part of Proposition~\ref{prop:Wirre}.
	The proofs for the  special cases $n=1,2$ are similar. 

	The second part of Proposition~\ref{prop:Wirre} is a straight-forward combination of the first part with Schur-Weyl duality.

	By the second part of Proposition~\ref{prop:Wirre} and Schur's Lemma, every element $B$ of the commutant of $\Ten^4(\rmC_n)$ is of the form
	\begin{equation*}
		B 
		= 
		\bigoplus_{\lambda; s=\pm\,|\,D_\lambda^s\neq 0} \Id_\lambda^s \otimes B_\lambda^s,
	\end{equation*}
	with $\Id_\lambda^s$ the identity on $W_\lambda^s$ and $B_\lambda^s$ a suitable linear operator on $S_\lambda$.
	Thus
	\begin{align*}
		B
		&=
		P_{n,4} 
		\left(
			\bigoplus_{\lambda |\,D_\lambda^+\neq 0} 
			\Id_\lambda \otimes B_\lambda^+,
		\right)
		+
		(\Id - P_{n,4})
		\left(
			\bigoplus_{\lambda |\,D_\lambda^-\neq 0} 
			\Id_\lambda \otimes B_\lambda^-,
		\right),
	\end{align*}
	where $\Id_\lambda$ is the identity on $W_\lambda$. 
	The expressions in parentheses commute with the diagonal representation of $\rmU(d)$ and are thus, by Schur-Weyl duality, linear combinations of the representation of $S_4$, which  permutes the tensor factors. This proves \thref{thm:Main}.
\end{proof}

\subsection{\label{sec:repsofsp}Representations of the discrete symplectic group}

We have argued in \sref{sec:specialcode} that whenever $k$
is a multiple of 4, the stabilizer code $V_{n,k}$ carries a
representation of the symplectic group $\Sp(2n,\FF_2)$.  
For $k=4$, \pref{prop:Wirre} and \tref{tab:IrreDim} imply that
\begin{equation}\label{eqn:irreps_sp}
	V_{n,4} \simeq 
	W_{[4]}^+ 
	\oplus W_{[1,1,1,1]}^+
	\oplus W_{[2,2]}^+\otimes \CC^2
\end{equation}
gives the decomposition of that stabilizer code into irreps of
$\Sp(2n,\FF_2)$. This decomposition is remarkably similar to the decomposition of the complex Weil character $\zeta_n$ of $\Sp(2n,\FF_2)$ as discussed in  Ref.~\cite[pages 4976---4977]{guralnick2004cross},
\begin{equation}
\zeta_n=\alpha_n+\beta_n+2\zeta_n^1.
\end{equation}
Moreover, the dimensions of  $V_{n,4}$, $W_{[1,1,1,1]}$, $W_{[4]}$, and $W_{[2,2]}$ coincide with the degrees of the Weil  characters $\zeta_n, \alpha_n$, $\beta_n$, and  $\zeta_n^1$,  respectively, according to  Table~\ref{tab:IrreDim} and
Table~I in Ref.~\cite{guralnick2004cross}.
The following proposition reveals the reason behind this coincidence.
\begin{proposition}\label{pro:IrrSymp}
 $V_{n,4}$ carries the \emph{complex Weil representation} of $\Sp(2n,\FF_2)$
with character $\zeta_n$ as defined in  
Ref.~\cite[pages 4976---4977]{guralnick2004cross}.
What is more, the characters of 
$W_{[1,1,1,1]}$, $W_{[4]}$, and $W_{[2,2]}$ are their 
 $\alpha_n$, $\beta_n$, and  $\zeta_n^1$, respectively.
\end{proposition}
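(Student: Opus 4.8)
The plan is to compute the character $\chi$ of the $\Sp(2n,\FF_2)$-representation carried by $V_{n,4}$ in closed form, recognise it as the complex Weil character $\zeta_n$, and then read off the three irreducible constituents from \pref{prop:Wirre}. Since $P_{n,4}$ commutes with $\Ten^4(\Cli_n)$ while the Pauli group and the scalars act trivially on $V_{n,4}$, for $g\in\Sp(2n,\FF_2)$ and \emph{any} Clifford lift $U_g\in\Cli_n$ one has
\be
	\chi(g)=\tr\bigl(P_{n,4}\,\Ten^4(U_g)\bigr)=\frac{1}{d^2}\sum_{a\in\FF_2^{2n}}\tr\bigl(W_a^{\otimes 4}U_g^{\otimes 4}\bigr)=\frac{1}{d^2}\sum_{a\in\FF_2^{2n}}\tr(W_aU_g)^4 ,
\ee
so the proof reduces to evaluating this exponential sum. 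Write $F=F(g)$ for the symplectic matrix of $g$ and $k(g)=\dim_{\FF_2}\ker(F-I)$.

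The first step is to pin down the size and arithmetic type of each summand. For $U\in\Cli_n$ inducing $F$, the number $|\tr U|^2=\tr(U\otimes\bar U)$ is the trace of the conjugation map $A\mapsto UAU^\dagger$ on $L(\CC^d)$; in the Pauli basis this map sends $W_b\mapsto(-1)^{f(b)}W_{Fb}$, so its trace collects the $b$ with $Fb=b$ and equals $\sum_{b\in\ker(F-I)}\epsilon_U(b)$, where $\epsilon_U(b)=(-1)^{f(b)}$ is readily checked to be a character of the $\FF_2$-space $\ker(F-I)$. Hence $|\tr U|^2\in\{0,2^{k(g)}\}$. Passing from $U_g$ to $W_aU_g$ twists $\epsilon_U$ by $b\mapsto\langle a,b\rangle$ [via \eqref{eqn:pauli_commutation_relation}], and by nondegeneracy of the symplectic form $\epsilon_{W_aU_g}$ is trivial for exactly $2^{2n-k(g)}$ values of $a$; thus $|\tr(W_aU_g)|^2=2^{k(g)}$ for that many $a$ and vanishes for the rest (consistently with Parseval, $\sum_a|\tr(W_aU_g)|^2=d^2$). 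Moreover each $\tr(W_aU_g)$ lies in $\bbZ[\rmi]$ -- it is an algebraic integer because $W_aU_g$ lies in the \emph{finite} group $\Cli_n$, and it lies in $\bbQ[\rmi]$ because the Clifford group has entries there. A Gaussian integer of norm $2^{k(g)}$ is a unit times $(1+\rmi)^{k(g)}$, so its fourth power is $(1+\rmi)^{4k(g)}=(-4)^{k(g)}$ irrespective of the unit. Substituting,
\be
	\chi(g)=\frac{1}{d^2}\cdot 2^{2n-k(g)}\cdot(-4)^{k(g)}=(-2)^{k(g)}=(-2)^{\dim_{\FF_2}\ker(F(g)-I)} .
\ee

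This is precisely the formula characterising the complex Weil character $\zeta_n$ of $\Sp(2n,\FF_2)$ in \rcite{guralnick2004cross} (the ``signed fixed-point character''; it is also the guise in which the Weil representation of $\Sp(2n,\FF_2)$ is realised classically through the Clifford group). Hence $V_{n,4}$ affords $\zeta_n$, which is the first assertion. For the refinement, \pref{prop:Wirre} together with \eqref{eqn:irreps_sp} gives $V_{n,4}\simeq W_{[4]}^+\oplus W_{[1,1,1,1]}^+\oplus\bigl(W_{[2,2]}^+\bigr)^{\oplus 2}$ with these three representations irreducible and pairwise inequivalent, while \rcite{guralnick2004cross} gives $\zeta_n=\alpha_n+\beta_n+2\zeta_n^1$ with $\alpha_n,\beta_n,\zeta_n^1$ pairwise distinct irreducibles. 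Uniqueness of the decomposition into irreducibles matches constituents with multiplicities, so the doubled one forces $W_{[2,2]}^+\cong\zeta_n^1$, and comparing the degrees in \tref{tab:IrreDim} with those in \rcite{guralnick2004cross} -- where $\dim W_{[4]}^+=\tfrac{(d+1)(d+2)}{6}=\deg\beta_n$ while $\dim W_{[1,1,1,1]}^+=\tfrac{(d-1)(d-2)}{6}=\deg\alpha_n$ -- yields $W_{[4]}^+\cong\beta_n$ and $W_{[1,1,1,1]}^+\cong\alpha_n$.

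The step I expect to be the main obstacle is the character evaluation, and within it the fixed-point count: showing that the nonzero traces $\tr(W_aU_g)$ all have the same modulus $2^{k(g)/2}$ and occur for exactly $2^{2n-k(g)}$ of the $a$. Reconciling this with the conventions of \rcite{guralnick2004cross} afterwards is comparatively routine, needing only that its degrees for $\alpha_n,\beta_n,\zeta_n^1$ are pairwise distinct.
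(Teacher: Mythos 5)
Your proposal is correct and follows the same overall route as the paper: evaluate $\tr\bigl(P_{n,4}U_g^{\otimes 4}\bigr)=\frac{1}{d^2}\sum_a[\tr(W_aU_g)]^4=(-2)^{\dim\ker(F-1)}$, identify this with $\zeta_n$, and then match the three constituents using the multiplicity $2$ of $W_{[2,2]}^+$ and the degrees of $\alpha_n,\beta_n,\zeta_n^1$ (the paper's proof of \pref{pro:IrrSymp} via \lsref{lem:CliffdTrace} and \ref{lem:SympChar} is exactly this). Two sub-steps differ in a way worth noting. First, where the paper simply cites \rcite{Zhu15MC} for the facts that $|\tr(W_aU_g)|^2\in\{0,2^{k(g)}\}$ and that exactly $2^{2n-k(g)}$ of the $a$ give a nonzero trace, you rederive both from scratch by viewing $|\tr U|^2$ as the trace of the conjugation superoperator in the Pauli basis and observing that $b\mapsto(-1)^{f(b)}$ is a character of $\ker(F-I)$; this makes the argument self-contained. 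Second, for the key evaluation $[\tr(W_aU_g)]^4=(-4)^{k(g)}$ (the content of \lref{lem:CliffdTrace}), the paper argues that $\tr(U_F)=a+b\rmi$ with $a,b\in\bbZ$ and then does a case analysis on the parity of $\dim\ker(F-1)$ to solve $a^2+b^2=2^m$; you instead invoke unique factorization in $\bbZ[\rmi]$ — a Gaussian integer of norm $2^k$ is a unit times $(1+\rmi)^k$, whence the fourth power is $(-4)^k$ — which is cleaner and subsumes the case analysis. (You do need, as the paper establishes, that the trace actually lies in $\bbZ[\rmi]$ rather than merely in $\bbQ[\rmi]$; your appeal to finiteness of $\Cli_n$, hence algebraic integrality of traces, handles this correctly.) Both routes land on the same character formula, and the final matching step is identical to the paper's.
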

When $n\geq 4$, \pref{pro:IrrSymp} follows from  Corollary 6.2 in Ref.~\cite{guralnick2004cross}, which states that  $\alpha_n$, $\beta_n$, and  $\zeta_n^1$ are nontrivial characters of $\Sp(2n,\FF_2)$ of three minimal degrees. 
When $n=3$,  $\alpha_n$ and $\beta_n$  are still characters of  the two minimal degrees \cite{TiepZ96}, but there is another  character of $\Sp(6,\FF_2)$ that has the same degree of 21 as  $\zeta_3^1$. When $n=2$, $\Sp(2n,\FF_2)$ is isomorphic to the symmetric group $S_6$. When $n=1$, $\Sp(2n,\FF_2)$ is isomorphic to the symmetric group $S_3$, in which case $W_{[1,1,1,1]}$ has dimension 0, $W_{[2,2]}$ carries the sign representation of $S_3$, and $W_{[4]}$ carries the unique two-dimensional representation. Here we shall give a simple and uniform proof of \pref{pro:IrrSymp}, which does not rely on Corollary 6.2 in Ref.~\cite{guralnick2004cross}. Moreover, we derive an explicit formula for the character afforded by $V_{n,k}$ and determine the sum of squared multiplicities of irreducible components, assuming $k$ is a multiple of 4.

\begin{lemma}\label{lem:CliffdTrace}
	Suppose $F\in \Sp(2n,\bbF_2)$ and $U_F\in\Cli_n$ is a Clifford unitary that induces the symplectic transformation $F$. If $\tr(U_F)\neq0$, then 
	\begin{equation}
	[\tr(U_F)]^4=(-4)^{\dim (\ker(F-1)) }.
	\end{equation} 
\end{lemma}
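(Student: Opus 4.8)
The plan is to reduce the claim to a divisibility fact in the Gaussian integers $\bbZ[\rmi]$, carried out in three stages.

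\medskip\noindent\emph{Stage 1 — the modulus $|\tr(U_F)|^2$.} I would first pass to the Pauli basis. Let $\mathcal{A}$ be the superoperator $X\mapsto U_FXU_F^\dagger$ on the space $L(\bbC^d)$ of $d\times d$ matrices; under the canonical identification $L(\bbC^d)\cong\bbC^d\otimes\overline{\bbC^d}$ it equals $U_F\otimes\overline{U_F}$, so $\tr\mathcal{A}=\tr(U_F)\,\overline{\tr(U_F)}=|\tr(U_F)|^2$. On the other hand $\{d^{-1/2}W_a\}_{a\in\FF_2^{2n}}$ is an orthonormal basis of $L(\bbC^d)$ for the Hilbert--Schmidt inner product, and $\mathcal{A}(W_a)=(-1)^{f(a)}W_{Fa}$ by \eqref{eqn:symplecticaction}; evaluating $\tr\mathcal{A}$ termwise with $\tr(W_aW_b)=d\,\delta_{a,b}$ gives $|\tr(U_F)|^2=\sum_{a\in\ker(F-1)}(-1)^{f(a)}$. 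I would then observe that $f$ restricts to an $\FF_2$-\emph{linear} form on $\ker(F-1)$: writing $W_aW_b=\rmi^{\beta(a,b)}W_{a+b}$ (well defined with $\beta(a,b)\in\bbZ_4$, and $\beta(a,b)\equiv\langle a,b\rangle\bmod 2$ by \eqref{eqn:pauli_commutation_relation}) and conjugating this identity by $U_F$, one obtains $f(a+b)+f(a)+f(b)\equiv\tfrac12\bigl(\beta(Fa,Fb)-\beta(a,b)\bigr)\bmod 2$, whose right-hand side vanishes whenever $Fa=a,\ Fb=b$ because $F$ symplectic forces $\langle Fa,Fb\rangle=\langle a,b\rangle$; also $f(0)=0$. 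Since the sum of $(-1)^{\ell}$ over a $k$-dimensional $\FF_2$-space equals $2^k$ when $\ell\equiv 0$ and $0$ otherwise, and we are assuming $\tr(U_F)\neq 0$, this forces $f|_{\ker(F-1)}\equiv 0$, hence $|\tr(U_F)|^2=2^{\dim\ker(F-1)}$.

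\medskip\noindent\emph{Stage 2 — integrality.} Because of the phase convention \eqref{eqn:hadamardc}, the Clifford group $\Cli_n$ is a \emph{finite} group (its center has order $4$ and $\Cli_n/\mathcal{P}_n\cong\Sp(2n,\FF_2)$, cf.\ \sref{sec:PauliCli}), so $U_F$ has finite order; being unitary it is diagonalizable with roots of unity as eigenvalues, and therefore $\tr(U_F)$ is an algebraic integer. Since $H$, $S$ and $\mathrm{CNOT}$ — and hence $U_F$ itself — have entries in $\bbQ[\rmi]$, the trace lies in $\bbQ[\rmi]$, whose ring of algebraic integers is precisely $\bbZ[\rmi]$. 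Thus $\tr(U_F)\in\bbZ[\rmi]$.

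\medskip\noindent\emph{Stage 3 — arithmetic in $\bbZ[\rmi]$.} The ring $\bbZ[\rmi]$ is a unique factorization domain in which $2=-\rmi(1+\rmi)^2$ with $1+\rmi$ prime of norm $2$, so any $z\in\bbZ[\rmi]$ with $|z|^2=2^k$ equals a unit times $(1+\rmi)^k$, the units being the fourth roots of unity. Applying this to $z=\tr(U_F)$ with $k=\dim\ker(F-1)$ and taking fourth powers removes the unit:
\[
  [\tr(U_F)]^4=(1+\rmi)^{4\dim\ker(F-1)}=\bigl((1+\rmi)^4\bigr)^{\dim\ker(F-1)}=(-4)^{\dim\ker(F-1)},
\]
since $(1+\rmi)^2=2\rmi$ and $(1+\rmi)^4=-4$.

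\medskip\noindent The only places that need genuine care are in Stage 1 — correctly identifying $\tr\mathcal{A}$ with $|\tr(U_F)|^2$ and carrying out the two-cocycle bookkeeping that makes $f$ linear on $\ker(F-1)$ — and the input to Stage 2, namely that the normalized Clifford group is a \emph{finite} group. The latter is exactly where the "graceful" Hadamard convention of \eqref{eqn:hadamardc} is used: without it $\tr(U_F)$ would only lie in $\bbZ[\zeta_8]$ and the clean factorization argument would have to be replaced by something more delicate. A more computational alternative to Stages 2--3 is to evaluate $\tr(U_F)$ directly as a quadratic Gauss sum over $\ker(F-1)$, which also pins down the phase but at the cost of tracking the relevant discriminant data.
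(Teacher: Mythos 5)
Your proof is correct, and while it follows the same overall skeleton as the paper's --- (i) pin down $|\tr(U_F)|^2=2^{\dim\ker(F-1)}$, (ii) show $\tr(U_F)\in\bbZ[\rmi]$, (iii) deduce the fourth power --- each step is executed differently. For (i) the paper simply cites \rcite{Zhu15MC}, whereas your superoperator computation $\tr(U_F\otimes \overline{U_F})=\sum_{a\in\ker(F-1)}(-1)^{f(a)}$ together with the linearity of $f$ on $\ker(F-1)$ gives a self-contained derivation (note that when $Fa=a$ and $Fb=b$ the cocycle difference $\beta(Fa,Fb)-\beta(a,b)$ vanishes identically, so the appeal to $F$ preserving the symplectic form is not even needed there). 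For (ii) the paper writes $\tr(U_F)=(a+b\rmi)/2^k$ from the explicit form of the generators and rules out $k>0$ because $(a^2+b^2)/4^k$ with $\gcd(a,b)$ odd cannot be an integer; your route via finiteness of $\Cli_n$, hence roots-of-unity eigenvalues, hence algebraic integrality in $\bbQ(\rmi)$, is cleaner and makes the role of the Hadamard phase convention \eqref{eqn:hadamardc} more transparent. For (iii) the paper does an elementary parity case split ($a^2=b^2$ when $\dim\ker(F-1)$ is odd, $ab=0$ when it is even), which is exactly the hands-on shadow of your unique-factorization argument $z=u(1+\rmi)^{\dim\ker(F-1)}$ with $u^4=1$. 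Your version buys self-containedness and conceptual clarity at the cost of invoking a little algebraic number theory; the paper's is more elementary but leans on an external reference for the modulus.
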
 
As an implication of this lemma,
\begin{equation}
\tr(U_F)=\sqrt{f(F)}\,\rmi^{j}\times\begin{cases}
1 & 2|\dim (\ker(F-1)),\\
\rme^{\pi\rmi/4} & 2\nmid\dim (\ker(F-1)),
\end{cases}
\end{equation}
where $j=1,2,3,4$, and  $f(F):=2^{\dim (\ker(F-1))}$ is the number of fixed points of $F$ on the symplectic space $\bbF_2^{2n}$ \cite{Zhu15MC}. Note that $f(F)=|\tr(U_F)|^2$ if $U_F$ is not traceless. 
 
\begin{proof}
Recall that the Clifford group $\Cli_n$ is generated by  phase gates and  Hadamard gates of respective qubits as well as CNOT gates between all pairs of qubits
(cf.~\sref{sec:PauliCli}). Therefore, the Clifford unitary $U_F$ has
the form $U_F=A/2^j$, where $j$ is a nonnegative integer, and $A$ is a
matrix each entry of which is a linear combination of 1 and $\rmi$
with integer coefficients. Consequently, $\tr(U_F)$ has the form
\begin{equation}
\tr(U_F)=\frac{a+b\rmi}{2^k},
\end{equation}
where $a,b$ are integers and  $k$ is a nonnegative integer. In addition, we may assume that  the greatest common divisor of $a,b$ is odd if $k>0$.  According to \rcite{Zhu15MC},
\begin{equation} 
|\tr(U_F)|^2=f(F)=2^{\dim (\ker(F-1)) }
\end{equation}
whenever $U_F$ is not traceless. 
 If $k>0$, then at least one of  $a, b$ is odd,  so that $|\tr(U_F)|^2=(a^2+b^2)/4^k$ cannot be an integer. It follows that $k=0$,  $\tr(U_F)=a+b\rmi$, and 
\begin{equation}
a^2+b^2=2^{\dim (\ker(F-1)) }.
\end{equation}
If $\dim (\ker(F-1))$ is odd, then $a^2=b^2=2^{\dim (\ker(F-1))-1}$ given that $a,b$ are integers, so that 
	\begin{equation}
	[\tr(U_F)]^4=-4a^2b^2=(-4)^{\dim (\ker(F-1)) }.
	\end{equation} 
If 	$\dim (\ker(F-1))$ is even, then  $ab=0$ and
	\begin{equation}
	[\tr(U_F)]^4=(a^2+b^2)^2=4^{\dim (\ker(F-1)) }=(-4)^{\dim (\ker(F-1)) }.
	\end{equation}
	
\end{proof}

Recall that  the stabilizer code $V_{n,k}$ carries a
representation of the symplectic group $\Sp(2n,\FF_2)$ whenever $k$
is a multiple of 4.  The following lemma yields an explicit formula for the character of this representation. Note that the same formula also applies to any subgroup of $\Sp(2n,\FF_2)$. 
\begin{lemma}\label{lem:SympChar}
	Suppose $F\in \Sp(2n,\bbF_2)$ and $U_F\in\Cli_n$ is a Clifford unitary that induces the symplectic transformation $F$. If $k$ is a multiple of 4,  then 
		\begin{equation}
		\tr\left(U_F^{\otimes k} P_{n,k}\right)=
		\left[\frac{(-4)^{k/4}}{2}\right]^{\dim (\ker(F-1)) }.
		\end{equation} 
		In particular, 
	\begin{equation}
	\tr\left(U_F^{\otimes 4} P_{n,4}\right)=(-2)^{\dim (\ker(F-1)) }.
	\end{equation} 
\end{lemma}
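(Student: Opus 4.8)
The plan is to evaluate the left-hand side directly from the explicit form of the stabilizer projector. Substituting $P_{n,k}=2^{-2n}\sum_{a\in\FF_2^{2n}}\Ten^k(W_a)$ from \eqref{eq:StabCodeProjection} and using that the trace of a tensor power factorizes, one obtains
\begin{equation*}
	\tr\!\left(U_F^{\otimes k}P_{n,k}\right)
	=\frac{1}{2^{2n}}\sum_{a\in\FF_2^{2n}}\tr\!\left((U_F W_a)^{\otimes k}\right)
	=\frac{1}{2^{2n}}\sum_{a\in\FF_2^{2n}}\bigl[\tr(U_F W_a)\bigr]^{k}.
\end{equation*}
The crucial point is that each $U_F W_a$ is again a Clifford unitary inducing the \emph{same} symplectic transformation $F$: conjugation by the Pauli operator $W_a$ produces only signs $(-1)^{\langle a,\cdot\rangle}$, so $U_F W_a$ and $U_F$ realize the same $F$ in \eqref{eqn:symplecticaction}, differing only in the auxiliary sign function.

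Next I would split the sum according to whether $\tr(U_F W_a)$ vanishes. For each $a$ with $\tr(U_F W_a)\neq 0$, \lref{lem:CliffdTrace} applied to $U_F W_a$ gives $[\tr(U_F W_a)]^4=(-4)^{\dim\ker(F-1)}$; since $k$ is a multiple of $4$, raising this to the power $k/4$ yields $[\tr(U_F W_a)]^k=\bigl[(-4)^{k/4}\bigr]^{\dim\ker(F-1)}$, a value independent of $a$. The remaining summands, those with $\tr(U_F W_a)=0$, contribute nothing. Hence the sum collapses to $N\,\bigl[(-4)^{k/4}\bigr]^{\dim\ker(F-1)}$, where $N:=\#\{a\in\FF_2^{2n}:\tr(U_F W_a)\neq 0\}$, and it remains only to compute $N$.

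To count $N$ I would use orthogonality of the Pauli basis: the $W_a$ are Hermitian with $\tr(W_aW_b)=d\,\delta_{a,b}$, so $\{d^{-1/2}W_a\}$ is an orthonormal basis of the matrix algebra, and expanding the unitary $U_F$ in it gives $\sum_{a}|\tr(U_F W_a)|^2=d\,\|U_F\|_2^2=d^2$. On the other hand, taking absolute values and a square root in \lref{lem:CliffdTrace} forces $|\tr(U_F W_a)|^2=2^{\dim\ker(F-1)}$ for each of the $N$ nonvanishing terms, whence $d^2=N\cdot 2^{\dim\ker(F-1)}$, i.e.\ $N=2^{2n-\dim\ker(F-1)}$. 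Substituting back,
\begin{equation*}
	\tr\!\left(U_F^{\otimes k}P_{n,k}\right)
	=\frac{2^{2n-\dim\ker(F-1)}}{2^{2n}}\Bigl[(-4)^{k/4}\Bigr]^{\dim\ker(F-1)}
	=\left[\frac{(-4)^{k/4}}{2}\right]^{\dim\ker(F-1)},
\end{equation*}
and the case $k=4$ is immediate from $(-4)/2=-2$. The only point requiring care is the bookkeeping in this last step — keeping the Pauli normalization straight and checking that the ``not traceless'' hypothesis of \lref{lem:CliffdTrace} is invoked for $U_F W_a$ rather than for $U_F$ — but no idea beyond \lref{lem:CliffdTrace} and the orthogonality of the Pauli operators is needed, so I expect no genuine obstacle.
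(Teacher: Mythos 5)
Your proposal is correct and follows essentially the same route as the paper's proof: expand $P_{n,k}$ as a sum over Pauli operators, observe that all $U_FW_a$ induce the same symplectic transformation $F$, apply \lref{lem:CliffdTrace} to each non-traceless $U_FW_a$, and count the non-traceless ones via $\sum_a|\tr(U_FW_a)|^2=d^2$ to get $2^{2n-\dim\ker(F-1)}$. The bookkeeping in your final substitution matches the paper's computation exactly.
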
 
\begin{proof}
	\begin{equation}
	\tr\left(U_F^{\otimes k} P_{n,k}\right)=\frac{1}{d^2}\tr\left(U_F^{\otimes k}\sum_a W_a^{\otimes k}\right)
	=\frac{1}{d^2}\sum_a[\tr(U_FW_a)]^k.
	\end{equation}	
Note that the $d^2$ operators $U_FW_a$  for $a\in \bbF_2^{2n}$ induce the same symplectic transformation as $U_F$. In addition, $|\tr(U_FW_a)|^2=2^{\dim (\ker(F-1)) } $ when $U_FW_a$ is not traceless and   $\sum_a|\tr(U_FW_a)|^2=d^2$. So among the $d^2$ operators $U_FW_a$ for $a\in \bbF_2^{2n}$, $2^{2n-\dim (\ker(F-1)) }$ of them are not traceless \cite{Zhu15MC}. Now application of \lref{lem:CliffdTrace} to the above equation yields
	\begin{equation}
	\tr\left(U_F^{\otimes 4} P_{n,4}\right)=2^{-2n}\times
	(-4)^{k\dim (\ker(F-1)) /4}\times 2^{2n-\dim (\ker(F-1)) }=
\left[\frac{(-4)^{k/4}}{2}\right]^{\dim (\ker(F-1)) }.
\end{equation}
\end{proof}

\begin{proof}[Proof of \pref{pro:IrrSymp}]
According to \lref{lem:SympChar}, the character afforded by $V_{n,4}$ coincides with $\zeta_n$ discussed in Ref.~\cite{guralnick2004cross}. Consequently, $V_{n,4}$ decomposes into the same irreps as $\zeta_n$. Since the character afforded by   $W_{[2,2]}$ has multiplicity 2, it must correspond to $\zeta_n^1$. Comparison of  the dimensions shows that  the characters of 
$W_{[1,1,1,1]}$ and $W_{[4]}$ correspond to
$\alpha_n$ and  $\beta_n$,  respectively.
\end{proof}

Suppose $R\leq \Sp(2n,\bbF_2)$ and let $G_R$ be the preimage in $\Cli_n$ of $R$ under the homomorphism $\Cli_n/\mathcal{P}_n$. 
Denote by $M_k(R)$ the sum of squared multiplicities of the representation of $R$ or $G_R$ afforded by 
the stabilizer code $V_{n,k}$, assuming $k$ is a multiple of 4.  Then 
 $M_k(R)$ may also be understood as the contribution of $V_{n,k}$ to the $k$th frame potential  $\Phi_k(G_R)$ of $G_R$.

\begin{lemma}\label{lem:SympMul}
	Suppose $R\leq \Sp(2n,\bbF_2)$ and $G_R$ is the preimage in $\Cli_n$ of $R$ under the homomorphism $\Cli_n/\mathcal{P}_n$. 
If $k$ is a multiple of 4, then 
\begin{equation}\label{eq:SympMul}
M_k(R)=\frac{1}{|R|}\sum_{F\in R} f(F)^{k-2}=\Phi_{k-1}(G_R).
\end{equation} 
Moreover, $M_k(R)$ is equal to the number of orbits of	$R$ on $(\bbF_2^{2n})^{\times(k-2)}$.
\end{lemma}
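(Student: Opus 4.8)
The plan is to establish the three identities in the order they are stated: first rewrite $M_k(R)$ as a character inner product and evaluate it via \lref{lem:SympChar}, then unfold the frame potential $\Phi_{k-1}(G_R)$ and match it termwise, and finally recognise the resulting sum as a Burnside count.

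First I would recall that, since $\mathcal{P}_n$ acts trivially on $V_{n,k}$, the representation of $G_R$ on $V_{n,k}$ factors through $R$, and that the sum of squared multiplicities of a complex representation of a finite group equals the squared norm of its character (equivalently, the dimension of its commutant, as already noted in \sref{sec:intro:unitary}). Hence $M_k(R)=\frac{1}{|R|}\sum_{F\in R}|\chi(F)|^2$ with $\chi(F)=\tr\!\left(U_F^{\otimes k}P_{n,k}\right)$. By \lref{lem:SympChar}, $\chi(F)=\bigl[(-4)^{k/4}/2\bigr]^{\dim(\ker(F-1))}$; writing $(-4)^{k/4}=(-1)^{k/4}2^{k/2}$ gives $|\chi(F)|^2=\bigl(2^{k/2-1}\bigr)^{2\dim(\ker(F-1))}=\bigl(2^{\dim(\ker(F-1))}\bigr)^{k-2}=f(F)^{k-2}$, which proves the first equality.

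Next I would prove $M_k(R)=\Phi_{k-1}(G_R)$ by expanding $\Phi_{k-1}(G_R)=\frac{1}{|\overline{G_R}|}\sum_{U\in\overline{G_R}}|\tr(U)|^{2(k-1)}$ and grouping the sum by the induced symplectic transformation. The projective group $\overline{G_R}$ surjects onto $R$ with each fibre equal to $\{U_FW_a\,|\,a\in\FF_2^{2n}\}$ of cardinality $d^2$, so $|\overline{G_R}|=|R|\,d^2$. For a fixed $F$, the facts recalled in the proof of \lref{lem:SympChar} (from \rcite{Zhu15MC}) state that exactly $2^{2n-\dim(\ker(F-1))}$ of the operators $U_FW_a$ are non-traceless, each with $|\tr(U_FW_a)|^2=f(F)=2^{\dim(\ker(F-1))}$. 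Therefore $\sum_a|\tr(U_FW_a)|^{2(k-1)}=2^{2n-\dim(\ker(F-1))}f(F)^{k-1}=d^2f(F)^{k-2}$, and summing over $F\in R$ and dividing by $|R|\,d^2$ recovers $\frac{1}{|R|}\sum_{F\in R}f(F)^{k-2}=M_k(R)$. Finally, for the orbit count, let $R$ act diagonally on $X:=(\FF_2^{2n})^{\times(k-2)}$; a tuple $(v_1,\dots,v_{k-2})$ is fixed by $F$ iff every $v_i\in\ker(F-1)$, so $|\mathrm{Fix}_X(F)|=|\ker(F-1)|^{k-2}=f(F)^{k-2}$ because $f(F)$ is by definition the number of fixed points of $F$ on $\FF_2^{2n}$. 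The Cauchy--Frobenius (Burnside) lemma then gives the number of $R$-orbits on $X$ as $\frac{1}{|R|}\sum_{F\in R}f(F)^{k-2}=M_k(R)$.

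The only non-routine point is the bookkeeping in the middle step: one must check carefully that $\Phi_{k-1}$ is insensitive to global phases (so that passing to $\overline{G_R}$ is legitimate), that the fibre of $\overline{G_R}\to R$ has size exactly $d^2=2^{2n}$, and that the count of non-traceless $U_FW_a$ together with the constant value $|\tr(U_FW_a)|^2=f(F)$ on that set is inserted into the exponent $2(k-1)$ correctly. All the needed ingredients are already supplied by \lref{lem:SympChar} and \rcite{Zhu15MC}; the remaining work is just tracking powers of $2$.
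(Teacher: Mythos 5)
Your proof is correct. The first equality (via the character norm $\frac{1}{|R|}\sum_F|\chi(F)|^2$ with $\chi(F)=\tr(U_F^{\otimes k}P_{n,k})$ evaluated by \lref{lem:SympChar}) and the final Burnside count are exactly the paper's argument; the paper invokes the "orbit-stabilizer relation" where you more precisely name Cauchy--Frobenius, but the computation of $|\mathrm{Fix}(F)|=f(F)^{k-2}$ is the same. The one place you genuinely diverge is the middle equality $M_k(R)=\Phi_{k-1}(G_R)$: the paper simply cites Lemma~1 of \rcite{Zhu15MC}, whereas you re-derive it from scratch by fibering $\overline{G_R}\to R$, using that each fibre $\{U_FW_a\}$ has $2^{2n-\dim\ker(F-1)}$ non-traceless elements each of squared trace $f(F)$, so that $\sum_a|\tr(U_FW_a)|^{2(k-1)}=d^2f(F)^{k-2}$. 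This makes your write-up self-contained (the needed facts are indeed already quoted in the proof of \lref{lem:SympChar}), at the cost of a little extra bookkeeping; the paper's version is shorter but leans on the external reference. Both are valid.
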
 
Surprisingly,  the contribution of $V_{n,k}$ to the $k$th frame potential of $G_R$ is equal to the $(k-1)$th frame potential of $G_R$.

\begin{proof}
According to \lref{lem:SympChar},
\begin{equation}
M_k(R)=\frac{1}{|R|}\sum_{F\in R}\left[\frac{(-4)^{k/4}}{2}\right]^{2\dim (\ker(F-1)) }=\frac{1}{|R|}\sum_{F\in R}2^{(k-2)\dim (\ker(F-1)) }=
\frac{1}{|R|}\sum_{F\in R} f(F)^{k-2}.
\end{equation} 
This proves the first equality in \eref{eq:SympMul}; the second equality follows from Lemma~1 in \rcite{Zhu15MC}. Now according to the well-known orbit-stabilizer relation,  $M_k(R)$ is equal to the number of orbits 	of $R$ on $(\bbF_2^{2n})^{\times (k-2)}$.
\end{proof}
Observing  that $\Sp(2n,\bbF_2)$ has five orbits on $(\bbF_2^{2n})^{\times 2}$ when $n=1$ and six orbits  when $n\geq2$ \cite{Zhu15MC}, we conclude that
\begin{equation}
M_4(\Sp(2n,\bbF_2))=\begin{cases}
5 &n=1,\\
6 &n\geq2,
\end{cases}
\end{equation}  
which agrees with the decomposition in \eref{eqn:irreps_sp}.  
A subgroup $R$ of $\Sp(2n,\bbF_2)$ has the same decomposition
on $V_{n,4}$ iff $R$ has the same number of orbits on  $(\bbF_2^{2n})^{\times 2}$ as $\Sp(2n,\bbF_2)$. This condition  is equivalent to the condition that $G_R$ forms a unitary 3-design \cite{Zhu15MC}.  Technically, this means that $R$ is 2-transitive on $\bbF_2^{2n*}$ when $n=1$ and is a rank-3 permutation group when $n\geq2$ \cite{DixoM96book,Came99book, Zhu15MC}, where $\bbF_2^{2n*}$ is the set of nonzero vectors in $\bbF_2^{2n}$. However, there is no proper subgroup of $\Sp(2n,\bbF_2)$ with this property except when $n=2$, in which case there is a unique counterexample \cite{CameK79,CameK02,Zhu15MC}. Therefore, any proper subgroup of $\Sp(2n,\bbF_2)$ with $n\neq2$ has more irreducible components in $V_{n,4}$ (and also in $V_{n,k}$ as a consequence); in other words, at least one of the characters $\alpha_n, \beta_n, \zeta_n^1$ becomes reducible when restricted to a proper subgroup. Similarly,  any proper subgroup of $\PCli_n$ with $n\neq 2$ has more irreducible components on $V_{n,4}$
than $\PCli_n$, and at least one of the representations $W_{[1,1,1,1]}$, $W_{[4]}$, and $W_{[2,2]}$ becomes reducible when restricted to a proper subgroup of $\PCli_n$.

\section{$t$-designs from Clifford orbits}
\label{sec:designsfromorbits}
In this section  we   determine all Clifford covariant $t$-designs in the case of a single qubit. We then show that random orbits of the Clifford group in general are very good approximations to 4-designs. Furthermore, we introduce several simple and efficient methods for constructing exact 4-designs and approximations with arbitrarily high precision from Clifford orbits.

\subsection{\label{sec:qubit}Clifford covariant $t$-designs for one qubit}
In the case of $n=1$, the $t$-partite symmetric subspace has dimension $t+1$, so the 
frame potential of a qubit  $t$-design is equal to $1/(t+1)$. Since the Clifford group is a unitary 3-design, every orbit of the Clifford group forms a complex projective 3-design. The unique shortest orbit is composed of six stabilizer states, which form a complete set of mutually unbiased bases. When represented on the Bloch sphere, the six states form the vertices of the octahedron.

To derive a simple criterion on the orbit that forms a 4-design, suppose the fiducial state has Bloch vector $(x,y,z)$ with $x^2+y^2+z^2=1$. Then the fourth frame potential of the Clifford orbit is given by
\begin{equation}
\Phi_4(x,y,z)=\frac{21-6(x^4+y^4+z^4)+5(x^4+y^4+z^4)^2}{96}.
\end{equation}
The orbit forms a 4-design iff $x^4+y^4+z^4=3/5$, in which case $\Phi_4(x,y,z)$ attains the minimum of $1/5$.  
One explicit solution is given by
\begin{equation}
x=\sqrt{\frac{5+2\sqrt{10}}{15}}, \quad y=z=\sqrt{\frac{5-\sqrt{10}}{15}}.
\end{equation}
It turns out that the orbit forms a 5-design under the same condition; that is, a Clifford orbit forms a 5-design iff it forms a 4-design. 
As explained in Section~\ref{sec:harmonic}, this may not be a coincidence.
By contrast, $\Phi_4(x,y,z)$ is maximized when $x^4+y^4+z^4=1$, in which case the Bloch vector corresponds to a stabilizer state.

When the condition $x^4+y^4+z^4=3/5$ is satisfied, the sixth and seventh frame potentials satisfy the following equation
\begin{align}
8\Phi_7(x,y,z)-1&=4[7\Phi_6(x,y,z)-1]=\frac{11(1-21x^2+105x^4-105x^6)}{2400}\nonumber\\
&=\frac{11(1-21y^2+105y^4-105y^6)}{2400}
=\frac{11(1-21z^2+105z^4-105z^6)}{2400}\nonumber\\
&=\frac{11[3-7(x^6+y^6+z^6)}{480}.
\end{align}
The orbit forms a 6-design  iff $x^2,y^2,z^2$ are distinct roots of the cubic equation $1-21u+105u^2-105u^3=0$, which are given by 
\begin{equation}
u_j=\frac{1}{3}\Bigl(1+2\sqrt{\frac{2}{5}}\cos\frac{\theta+2j\pi}{3}\Bigr), \quad \theta=\arctan\frac{3\sqrt{10}}{20},\quad j=1,2,3.
\end{equation}
 Equivalently, the orbit forms a 6-design iff $x^6+y^6+z^6=3/7$ or if $x^2y^2z^2=1/105$ (assuming $x^4+y^4+z^4=3/5$).
The same condition also guarantees that the orbit forms a 7-design.
There are 48 solutions in total, which compose two Clifford orbits. When represented on the Bloch sphere, the two orbits can be converted to each other by inversion. The two orbits are not unitarily equivalent, but are equivalent under antiunitary transformations. Actually, the 48 solutions form one orbit under the action of the extended Clifford group \cite{Appl05}, the group generated by the Clifford group and complex conjugation with respect to the computational basis. 
Since any qubit 8-design  has at least 25 elements according to \eref{eq:tdesignLBound}, no Clifford orbit can form an 8-design.

Calculation shows that a random Clifford orbit is approximately a $t$-design for $t$ up to 7. If $(x,y,z)$ is distributed uniformly on the Bloch sphere, 
then the ratio of the average frame potential over the minimum potential is given by
\begin{equation}
(t+1)\rmE[\Phi_t(x,y,z)]={\displaystyle \begin{cases}
1& t=3,\\
\frac{127}{126}& t=4,\\
\frac{43}{42}& t=5,\\
\frac{1795}{1716}& t=6,\\
\frac{1381}{1287} &t=7.\\
\end{cases}}
\end{equation}

\subsection{\label{sec:randomOrbits}Random Clifford orbits are good approximations to 4-designs}

In this section we show that random Clifford orbits are very good approximations to projective 4-designs. 
Recall that  $\Ten^4(\Cli_n)$ has two irreducible components $W^{\pm}_{[4]}$ in  the totally symmetric space $W_{[4]}=\Sym_4(\CC^d)$.  According to \tref{tab:IrreDim},
the dimensions of $W_{[4]}$ and $W^{\pm}_{[4]}$ are 
\begin{equation}
\begin{aligned}
D_{[4]}=& \frac{d(d+1)(d+2)(d+3)}{24},\\
D_+:=& D^+_{[4]}=\frac{(d+1)(d+2)}{6},\\
D_-:=& D^-_{[4]}=\frac{(d-1)(d+1)(d+2)(d+4)}{24}.
\end{aligned}
\end{equation}
The projectors $P_{\pm}$   onto the two irreps $W^{\pm}_{[4]}$ read
\begin{equation}
P_+=P_{n,4}P_{[4]},\qquad P_-=(1-P_{n,4})P_{[4]}.
\end{equation}
where $P_{n,4}$ is the projector onto the stabilizer code $V_{n,4}$ given in \eref{eq:StabCodeProjection} and $P_{[4]}$ is the projector onto $W_{[4]}$.

As an implication of \thref{thm:Main} or  Proposition~\ref{prop:Wirre}, we have 
\begin{corollary}\label{cor:4design}	
  Let $\orb(\psi)$ be the orbit of a vector $\psi\in\CC^{2^n}$ under the action of the Clifford group $\PCli_n$. Then 
	  \begin{equation*}
	  \frac{1}{|\orb(\psi)|}\sum_{\phi\in \orb(\psi)} \big(|\phi\rangle \langle\phi|\big)^{\otimes 4}
	  = \beta_+ P_+ + \beta_- P_-,
	  \end{equation*}
	  where $\beta_+$ and $\beta_-$ satisfy
  \begin{equation}
  \beta_+=\frac{1}{D_+}\tr \bigl[P_+ (|\psi\rangle \langle \psi|)^{\otimes 4}\bigr]=\frac{1}{D_+}\tr \bigl[P_{n,4} (|\psi\rangle \langle \psi|)^{\otimes 4}\bigr],\qquad D_+\beta_++D_-\beta_-=\|\psi\|_2^8.
  \end{equation}
  A  normalized  vector $\psi$  is a fiducial vector of  a 4-design iff 
  \begin{equation}\label{eqn:betacond}
	  \beta_-=\beta_+=1/D_{[4]}.
  \end{equation}
\end{corollary}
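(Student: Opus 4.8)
The plan is to compute the orbit average $\frac{1}{|\orb(\psi)|}\sum_{\phi\in\orb(\psi)}(\outer{\phi}{\phi})^{\otimes 4}$ directly using the representation-theoretic decomposition established in \pref{prop:Wirre} and \thref{thm:Main}. The key observation is that the operator $A_\psi := \frac{1}{|\orb(\psi)|}\sum_{\phi}(\outer{\phi}{\phi})^{\otimes 4}$ is, by construction, invariant under conjugation by $\Ten^4(U)$ for every $U\in\PCli_n$; hence $A_\psi$ lies in the commutant $\Ten^4(\Cli_n)'$. Moreover, each summand $(\outer{\phi}{\phi})^{\otimes 4}$ is a positive operator supported entirely on the totally symmetric subspace $W_{[4]}=\Sym_4(\CC^d)$, so $A_\psi$ is supported on $W_{[4]}$. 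By \pref{prop:Wirre}, $W_{[4]}$ splits under $\Ten^4(\Cli_n)$ into the two inequivalent irreducible pieces $W_{[4]}^+$ and $W_{[4]}^-$, so Schur's lemma forces $A_\psi = \beta_+ P_+ + \beta_- P_-$ for scalars $\beta_\pm$, which is the claimed form.

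The next step is to pin down the two scalars. Taking the trace of $P_+ A_\psi$ and using $P_+ P_- = 0$, $P_+^2 = P_+$, $\tr P_+ = D_+$, together with the fact that the orbit average of $\tr[P_+(\outer{\phi}{\phi})^{\otimes 4}]$ equals $\tr[P_+(\outer{\psi}{\psi})^{\otimes 4}]$ (because $P_+$ commutes with $\Ten^4(U)$ and hence $\tr[P_+(\outer{\phi}{\phi})^{\otimes 4}]$ is constant over the orbit), one gets $\beta_+ D_+ = \tr[P_+(\outer{\psi}{\psi})^{\otimes 4}]$. Since $P_+ = P_{n,4} P_{[4]}$ and $(\outer{\psi}{\psi})^{\otimes 4}$ already lies in the range of $P_{[4]}$, we may drop $P_{[4]}$ and write $\tr[P_+(\outer{\psi}{\psi})^{\otimes 4}] = \tr[P_{n,4}(\outer{\psi}{\psi})^{\otimes 4}]$, giving the stated formula for $\beta_+$. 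For the normalization, take the total trace: $\tr A_\psi = D_+\beta_+ + D_-\beta_-$, while on the other hand $\tr[(\outer{\phi}{\phi})^{\otimes 4}] = (\tr\outer{\phi}{\phi})^4 = \|\phi\|_2^8 = \|\psi\|_2^8$ for every $\phi$ in the orbit (Clifford unitaries preserve the norm), so $\tr A_\psi = \|\psi\|_2^8$, yielding $D_+\beta_+ + D_-\beta_- = \|\psi\|_2^8$.

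Finally, for the characterization of $4$-design fiducials: by \pref{prop:designequiv}, item~\ref{item:notquiteright} (and its weighted analog), the orbit $\orb(\psi)$ of a normalized $\psi$ is a $4$-design iff $A_\psi = P_{[4]}/D_{[4]}$. Since $P_{[4]} = P_+ + P_-$ and $\{P_+, P_-\}$ is a linearly independent set of orthogonal projectors, $A_\psi = \beta_+ P_+ + \beta_- P_-$ equals $P_{[4]}/D_{[4]}$ if and only if $\beta_+ = \beta_- = 1/D_{[4]}$, which is exactly \eref{eqn:betacond}. (For a normalized $\psi$ one has $\|\psi\|_2^8 = 1$, consistent with $D_+/D_{[4]} + D_-/D_{[4]} = 1$.)

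\textbf{Main obstacle.} The argument is essentially a bookkeeping exercise once \pref{prop:Wirre} is in hand; the only genuinely delicate point is justifying that $A_\psi$ — a priori only known to lie in the full commutant $\Ten^4(\Cli_n)'$, which by \thref{thm:Main} is generated by $S_4$ and $P_{n,4}$ and is therefore larger than just $\Span\{P_+,P_-\}$ — actually collapses to the two-dimensional algebra $\Span\{P_+,P_-\}$. This is resolved by the support observation: every $(\outer{\phi}{\phi})^{\otimes 4}$ annihilates $W_{[4]}^\perp$, so $A_\psi$ does too, and on $W_{[4]}$ the commutant of $\Ten^4(\Cli_n)$ restricted there is exactly two-dimensional because $W_{[4]}^+$ and $W_{[4]}^-$ are inequivalent irreducibles. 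One should state this reduction carefully rather than invoking \thref{thm:Main} directly.
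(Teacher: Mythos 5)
Your proof is correct and follows exactly the route the paper intends: the paper states \crref{cor:4design} as an immediate implication of \pref{prop:Wirre} without writing out the details, and your argument (orbit average lies in the commutant, is supported on $\Sym_4(\CC^d)$, Schur's lemma on the two inequivalent irreps $W_{[4]}^{\pm}$, then trace bookkeeping for $\beta_\pm$) is precisely the intended derivation. The care you take in noting that the support restriction to $W_{[4]}$ is what collapses the full commutant to $\Span\{P_+,P_-\}$ is a worthwhile clarification, not a deviation.
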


In what follows, we will investigate the condition (\ref{eqn:betacond}) from various points of view. 
To this end, we introduce a number of related measures.

Define the \emph{characteristic function} (c.f.\ e.g.\ \rscite{WallM94book,Gros06})  $\Xi(\psi)$ as the vector composed of the  $d^2$ elements
\begin{equation}\label{eqn:characteristic}
	\Xi_a(\psi)=\tr(W_a |\psi\rangle \langle \psi|).
\end{equation}
Recall that the $\ell_p$-norm of a vector is the $p$-th root of the sum of the $p$-th powers of its elements.
For our study, the $\ell_4$-norm of the characteristic function
\begin{equation*}
	\left\|\Xi(\psi)\right\|_{\ell_4}^4 
	= \sum_{a\in\FF_2^{2n}} \left| \tr(W_a |\psi\rangle \langle \psi|)\right|^4
\end{equation*}
turns out to be particularly important.
It follows directly from the definition of $P_{n,4}$ and the symmetry of $\psi^{\otimes 4}$ that
\begin{equation*}
  \alpha_+(\psi)
  :=
  \tr \bigl[ P_+ (|\psi\rangle \langle \psi|)^{\otimes 4}\bigr]
  =
  \tr \bigl[ P_{n,4} (|\psi\rangle \langle \psi|)^{\otimes 4}\bigr]
  =
  \frac{1}{d^2} \left\|\Xi(\psi)\right\|_{\ell_4}^4.
\end{equation*}
We also set
\begin{equation}\label{eq:ONdeviation}
	\epsilon(\psi)
	:=
	\frac{D_{[4]}}{D_+}\alpha_+(\psi)-\|\psi\|_2^8=\frac{d(d+3)}{4}\alpha_+(\psi)-\|\psi\|_2^8.
\end{equation}
The condition (\ref{eqn:betacond}) for a normalized vector $\psi$ to be a 4-design fiducial can now be re-cast  in three  equivalent forms, 
\begin{equation}\label{eqn:4designconds}
\left\|\Xi(\psi)\right\|_{\ell_4}^4=\frac{4d}{(d+3)},\qquad	\alpha_+(\psi)= \frac{D_+}{D_{[4]}}=\frac{4}{d(d+3)},
	\qquad
	\epsilon(\psi)=0.
\end{equation}

From now on, we will assume the normalization condition $\|\psi\|_2=1$.
Then, $\epsilon$ quantifies the deviation of the Clifford orbit of $\psi$ from a 4-design.
More precisely,  $|\epsilon(\psi)|$ is the operator norm of the deviation
\begin{equation}
  \frac{D_{[4]}}{|\orb(\psi)|}\sum_{\phi\in \orb(\psi)} \big(|\phi\rangle \langle\phi|\big)^{\otimes 4}
  -P_{[4]},
\end{equation}
while $2 D_+|\epsilon(\psi)|=(d+1)(d+2)|\epsilon(\psi)|/3$ is the trace norm (or nuclear norm) of the deviation.
In addition, $\epsilon(\psi)$ determines the fourth frame potential of the Clifford orbit as follows,
\begin{equation}\label{eq:FPfromDev}
\Phi_4(\orb(\psi))=\frac{\alpha_+(\psi)^2}{D_+}+\frac{\alpha_-(\psi)^2}{D_-}
=\frac{1}{D_{[4]}}\left[1+\frac{D_+\epsilon(\psi)^2}{D_-}\right]
=\frac{1}{D_{[4]}}\left[1+\frac{4\epsilon(\psi)^2}{(d-1)(d+4)}\right],
\end{equation}
where $\alpha_-(\psi)=1-\alpha_+(\psi)$.

We now turn to clarifying the extremal and typical values of the functions $\alpha_+$ and $\epsilon$.
To this end, note that since  $\{W_a\}$ forms a nice error basis and Hermitian operator basis, we have 
\begin{equation}
\|\Xi(\psi)\|_{\ell_2}^2=\sum_a \tr\bigl[W_a^{\otimes 2} (|\psi\rangle \langle \psi|)^{\otimes 2}\bigr]=d, \qquad  \|\Xi(\psi)\|_{\ell_\infty}=\max_a|\tr(W_a |\psi\rangle \langle \psi|)|\leq 1.
\end{equation}
Consequently,
\begin{equation}\label{eq:Xibound}
\frac{2d}{d+1}\leq\|\Xi(\psi)\|_{\ell_4}^4\leq d,
\end{equation}
which are equivalent to the following inequalities
\begin{equation}\label{eq:Devbound}
\frac{2}{d(d+1)}\leq\alpha_+(\psi)\leq \frac{1}{d}, \qquad -\frac{d-1}{2(d+1)} \leq \epsilon(\psi)\leq\frac{d-1}{4}.
\end{equation}
The upper bound in \eref{eq:Xibound} follows from the H\"older inequality; it is saturated iff $\Xi(\psi)$ has $d$ entries equal to 1 and all other entries equal to 0; this can happen iff
 $\psi$ is a stabilizer state (c.f.~\lref{lem:ProjBound} in the appendix).
The lower bound is saturated iff 
\begin{equation}
\Xi_a(\psi)=\frac{1}{\sqrt{d+1}}\quad \forall a\neq0,
\end{equation}
in which case the $d^2$  vectors $W_a |\psi\rangle$ for $a\in \bbF_2^{2n}$ define a symmetric informationally complete measurement (SIC) \cite{Zaun11,ReneBSC04,ScotG10,ApplFZ15G}, which happens to be a minimal 2-design \cite{Scot06}.  It is known that SIC fiducial vectors of the $n$-qubit Pauli group cannot  exist except for  $n=1,3$ \cite{GodsR09}, so the lower bounds in \esref{eq:Xibound} and \eqref{eq:Devbound} cannot be saturated except for  $n=1,3$. As an implication of \esref{eq:FPfromDev} and \eqref{eq:Devbound},
the frame potential satisfies
\begin{equation}\label{eq:FPbound}
\frac{1}{D_{[4]}}\leq\Phi_4(\orb(\psi))\leq \frac{1}{D_{[4]}}\left(1+\frac{d-1}{4(d+4)}\right),
\end{equation}
where the lower bound is saturated iff the orbit forms a 4-design, and the upper bound is saturated iff  $\psi$ is a stabilizer state.

Next, we  show that random Clifford orbits are very good approximations to 4-designs. To this end, we first compute the  variance of the deviation parameter $\epsilon(\psi)$. 
Suppose  $\psi$ is distributed according to the uniform measure. Then the first and second moments of $\alpha_+(\psi)$ are given by
\begin{align}
\rmE[\alpha_+(\psi)]&=\tr(P_{n,4} \rmE[(|\psi\rangle \langle \psi|)^{\otimes 4}])=\frac{1}{D_{[4]}}\tr(P_{n,4} P_{[4]})=\frac{4}{d(d+3)},\\
\rmE[\alpha_+(\psi)^2]&=\frac{1}{D_{[8]}}\tr(P_{n,4}^{\otimes2} P_{[8]})=\frac{16(d^2+15d+68)}{d^2(d+3)(d+5)(d+6)(d+7)},\label{eq:SecondMoment}
\end{align}
where the last equality was derived  in  \aref{sec:SecondMoment}. 
The variance of $\alpha_+(\psi)$ reads
\begin{equation}
\begin{aligned}
\var[\alpha_+(\psi)]&=\rmE[\alpha_+(\psi)^2]-\rmE[\alpha_+(\psi)]^2=\frac{96(d-1)}{d^2(d+3)^2(d+5)(d+6)(d+7)}.
\end{aligned}
\end{equation}
As an immediate consequence,
\begin{equation}\label{eq:4designDeviation}
\rmE[\epsilon(\psi)]=0,\qquad 
\rmE[\epsilon(\psi)^2]=\frac{\var[\alpha_+(\psi)]}{\rmE[\alpha_+(\psi)]^2}=\frac{6(d-1)}{(d+5)(d+6)(d+7)}.
\end{equation}
Since the function $\epsilon(\psi)$ is continuous, the equality $\rmE[\epsilon(\psi)]=0$ guarantees the existence of a root (actually many roots) of $\epsilon(\psi)$, so  exact fiducial vectors of 4-designs always exist.  In addition, \eref{eq:4designDeviation} shows that  the typical value of $|\epsilon(\psi)|$ is around $\sqrt{6}/d$ when $d$ is large, which is  much smaller then the upper bound $(d-1)/4$. 
Application of  the Chebyshev inequality further  implies that
\begin{equation}\label{eq:DevChebyshevBound}
\mathrm{Prob}\{|\epsilon(\psi)|\geq \xi\}\leq \frac{6(d-1)}{(d+5)(d+6)(d+7)\xi^2} \quad \forall \xi> 0.
\end{equation}
 For example, 
\begin{equation}
\mathrm{Prob}\{|\epsilon(\psi)|\geq 1/2\}=\mathrm{Prob}\{\epsilon(\psi)\geq 1/2\}=\mathrm{Prob}\left\{\alpha_+(\psi)\geq \frac{6}{d(d+3)}\right\}\leq \frac{24(d-1)}{(d+5)(d+6)(d+7)}.
\end{equation}
This particular bound is of interest to studying the distinguishability of quantum states under measurements constructed from Clifford orbits. 
In \rcite{PovmNorm16}, it was shown that Clifford orbits of $\psi$ with $\alpha_+(\psi)\geq 6/d(d+3)$ can achieve almost the same POVM norm constants as 4-designs. Therefore, random Clifford orbits are very good approximations to 4-designs in this concrete setting.

In conjunction with \Eref{eq:FPfromDev}, we can also  determine
the ratio of the average fourth frame potential over the minimum frame potential (the potential for a 4-design) and bound the probability of large deviation,
\begin{align}
&D_{[4]}\rmE[\Phi_4(\orb(\psi))]=1+\frac{4}{(d-1)(d+4)}\rmE[\epsilon(\psi)^2]=1+\frac{24}{(d+4)(d+5)(d+6)(d+7)},\\
&\mathrm{Prob}\left\{D_{[4]}\rmE[\Phi_4(\orb(\psi))]\geq 1+\frac{4 \xi^2}{(d-1)(d+4)}\right\}\leq \frac{6(d-1)}{(d+5)(d+6)(d+7)\xi^2}.
\end{align}

In the rest of this section we derive another   large-deviation bound based on   Levy's lemma \cite{Ledo01}.
\begin{lemma}[Levy]
 Let $f : S^{2d-1} \rightarrow  \bbR$ be Lipschitz-continuous with Lipschitz constant $\eta$, that is, 
 \begin{equation}
 |f(x)-f(y)|\leq \eta \|x-y\|,
 \end{equation}
 where $\|x-y\|$ is the Euclidean norm in the surrounding space $\bbR^{2d}$ of $S^{2d-1}$.
 Suppose  $x$ is drawn randomly according to the uniform measure on the sphere $S^{2d-1}$. Then 
 \begin{equation}
 \mathrm{Prob}\{|f(x)-\rmE[f(x)]|\geq \xi\}\leq 2\exp\Bigl(\frac{-d\xi^2}{9\pi^3\eta^2}\Bigr)\quad \forall \xi\geq 0.
 \end{equation}
\end{lemma}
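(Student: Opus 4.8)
The plan is to obtain the bound from the concentration of the uniform measure on the sphere, which is a consequence of Levy's spherical isoperimetric inequality. Write $m=2d$ and let $\mu$ denote the uniform probability measure on $S^{m-1}\subset\RR^m$, equipped with the geodesic metric $\rho$. Since $\|x-y\|\le\rho(x,y)$ for all $x,y\in S^{m-1}$, any $f$ that is $\eta$-Lipschitz with respect to the Euclidean norm is a fortiori $\eta$-Lipschitz with respect to $\rho$; this is the only place where the (chordal versus geodesic) metric mismatch enters, and it enters in the harmless direction, so I would pass to $\rho$ once and for all and work on a manifold of dimension $m-1$.

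First I would record the concentration estimate around a median. Let $M$ be a median of $f$; then $A^{-}:=\{f\le M\}$ and $A^{+}:=\{f\ge M\}$ both have $\mu$-measure at least $1/2$. Levy's isoperimetric inequality says that among all sets of measure $\ge 1/2$ the half-sphere minimizes the measure of the geodesic $t$-enlargement, and the standard estimate for the complementary spherical cap gives $\mu\bigl((A^{\pm}_{t})^{\mathrm{c}}\bigr)\le\sqrt{\pi/8}\,e^{-(m-1)t^{2}/2}$. Because $f$ is $\eta$-Lipschitz with respect to $\rho$, every point of $\{f>M+\eta t\}$ lies outside $A^{-}_{t}$ and every point of $\{f<M-\eta t\}$ lies outside $A^{+}_{t}$, so a union bound together with the substitution $\eta t=\xi$ yields
\[
\mathrm{Prob}\{|f(x)-M|\ge\xi\}\le\sqrt{\pi/2}\;e^{-(m-1)\xi^{2}/(2\eta^{2})}.
\]
Passing from the median to the mean is routine: integrating this tail gives $|M-\rmE[f(x)]|\le\rmE\,|f(x)-M|=\int_{0}^{\infty}\mathrm{Prob}\{|f(x)-M|\ge s\}\,\rmd s\le \tfrac{\pi}{2}\,\eta/\sqrt{m-1}$. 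Substituting $\xi\mapsto\xi-|M-\rmE f|$ into the tail bound, discarding the range $\xi\le \pi\eta/\sqrt{m-1}$ (on which the claimed right-hand side already exceeds $1$, so there is nothing to prove) and using $\xi-|M-\rmE f|\ge\xi/2$ on the complementary range, one reaches $\mathrm{Prob}\{|f-\rmE f|\ge\xi\}\le\sqrt{\pi/2}\,e^{-(m-1)\xi^{2}/(8\eta^{2})}$; finally $m-1=2d-1\ge d$, together with $\sqrt{\pi/2}\le 2$ and $8\le 9\pi^{3}$, gives the stated inequality.

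There is no genuine obstacle here: the sole nontrivial input is Levy's isoperimetric inequality on the sphere (equivalently, the sub-Gaussian tail of the spherical-cap measure), and everything else --- the union bound, the Gaussian integral for $\rmE|f-M|$, and the comparison of numerical constants --- is elementary. The only points that need a little care are keeping the metric mismatch on the lossless side and dealing with the small-$\xi$ regime before halving $\xi$; one could even skip the median altogether by invoking the sharp form of Ledoux's inequality (sub-Gaussian concentration directly around the mean with exponent $(m-1)/2$). Since the asserted constant $9\pi^{3}$ is in any case far from optimal, it is also entirely legitimate to simply quote the displayed inequality from Ref.~\cite{Ledo01}.
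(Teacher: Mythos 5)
The paper does not prove this lemma at all: it is quoted verbatim as a standard external result, with the citation to Ref.~\cite{Ledo01} serving as the proof. Your closing remark that one may ``simply quote the displayed inequality'' is therefore exactly what the authors do.

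That said, your sketch of the standard derivation is sound: the passage from the chordal to the geodesic metric goes in the lossless direction; the median concentration follows from the isoperimetric inequality and the spherical-cap estimate; the median-to-mean correction $|M-\rmE f|\le \tfrac{\pi}{2}\eta/\sqrt{m-1}$ is computed correctly; and the case split at $\xi\le\pi\eta/\sqrt{m-1}$ (where the claimed bound exceeds $1$) before halving $\xi$ is handled properly. The only point worth flagging is the dimension factor in the cap estimate: depending on the reference, the exponent for $S^{m-1}$ is $-(m-2)t^2/2$ rather than $-(m-1)t^2/2$. With the weaker exponent your final chain requires $(d-1)/4\ge d/(9\pi^3)$, which holds for all $d\ge 2$ but not for $d=1$; since the paper only invokes the lemma for $d=2^n$ with $n\ge 1$, this is immaterial, and in any case the enormous slack in the constant $9\pi^3$ absorbs any reasonable version of the isoperimetric input.
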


\begin{lemma}\label{lem:LipConstant}
	The functions $\alpha_+(\psi)$ and $\epsilon(\psi)$ are  Lipschitz-continuous with Lipschitz constants $5.4/d$ and $5.4(d+3)/4$, respectively, that is,
\begin{equation}
|\alpha_+(\psi)-\alpha_+(\varphi)|\leq  \frac{5.4}{d}\|\psi-\varphi\|,\qquad |\epsilon(\psi)-\epsilon(\varphi)|\leq  \frac{5.4(d+3)}{4}\|\psi-\varphi\|.
\end{equation}
\end{lemma}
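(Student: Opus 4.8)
The plan is to first reduce the claim for $\epsilon$ to the one for $\alpha_+$, and then to prove the $\alpha_+$-bound by combining a Lipschitz estimate for the characteristic function $\Xi$ with a Lipschitz estimate for the $\ell_4^4$-norm. Since $\epsilon(\psi)=\frac{d(d+3)}{4}\alpha_+(\psi)-\|\psi\|_2^8$ and we work on the unit sphere $S^{2d-1}$, the term $\|\psi\|_2^8=1$ is constant there, so the Lipschitz constant of $\epsilon$ on $S^{2d-1}$ is exactly $\frac{d(d+3)}{4}$ times that of $\alpha_+$. It therefore suffices to show $|\alpha_+(\psi)-\alpha_+(\varphi)|\le\frac{5.4}{d}\|\psi-\varphi\|$ for all normalized $\psi,\varphi$.

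For this I would use the identity $\alpha_+(\psi)=\frac1{d^2}\|\Xi(\psi)\|_{\ell_4}^4$ and split the estimate in two. \emph{Step 1: $\Xi$ is Lipschitz.} Writing $\rho_\psi:=|\psi\rangle\langle\psi|$ and using that $\{W_a/\sqrt d\}_{a\in\FF_2^{2n}}$ is an orthonormal Hermitian operator basis, $\|\Xi(\psi)-\Xi(\varphi)\|_{\ell_2}^2=\sum_a\tr\!\big(W_a(\rho_\psi-\rho_\varphi)\big)^2=d\,\|\rho_\psi-\rho_\varphi\|_2^2=2d\big(1-|\langle\psi|\varphi\rangle|^2\big)\le 2d\,\|\psi-\varphi\|^2$, where the last step uses $1-|\langle\psi|\varphi\rangle|^2\le 1-(\mathrm{Re}\langle\psi|\varphi\rangle)^2\le 2\big(1-\mathrm{Re}\langle\psi|\varphi\rangle\big)=\|\psi-\varphi\|^2$. \emph{Step 2: $g(x):=\|x\|_{\ell_4}^4$ is Lipschitz on the relevant region.} Its gradient is $\nabla g(x)=4(x_a^3)_a$, so $\|\nabla g(x)\|_{\ell_2}=4\|x\|_{\ell_6}^3$; along the segment joining $\Xi(\psi)$ and $\Xi(\varphi)$ one controls the $\ell_6$-norm by convexity together with the a priori bounds $\|\Xi(\psi)\|_{\ell_\infty}\le1$ and $\|\Xi(\psi)\|_{\ell_4}^4\le d$ from \eqref{eq:Xibound}, which give $\|\Xi(\psi)\|_{\ell_6}^6\le\|\Xi(\psi)\|_{\ell_\infty}^2\|\Xi(\psi)\|_{\ell_4}^4\le d$ and hence $\|\nabla g\|_{\ell_2}\le4\sqrt d$ on the whole segment. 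Combining the two steps via the mean value inequality yields $|\alpha_+(\psi)-\alpha_+(\varphi)|=\frac1{d^2}\big|g(\Xi(\psi))-g(\Xi(\varphi))\big|\le\frac1{d^2}\cdot4\sqrt d\cdot\sqrt{2d}\,\|\psi-\varphi\|=\frac{4\sqrt2}{d}\|\psi-\varphi\|$.

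This already establishes Lipschitz continuity of the right order, with constant $4\sqrt2\approx5.66$; obtaining the sharper $5.4$ is where the genuine work lies. The point is that the two estimates above cannot be saturated simultaneously: $\|\Xi(\psi)\|_{\ell_6}^6$ is close to $d$ only when $\Xi(\psi)$ is close to the support pattern of a stabilizer state (cf.\ the remark that \eqref{eq:Xibound} is tight only there), and in that regime $\rho_\psi$ must be close to $\rho_\varphi$, which forces $\|\psi-\varphi\|$ to be small; quantifying this trade-off — or, equivalently, directly estimating the tangential gradient $\nabla_{\mathrm{tang}}\alpha_+(\psi)=\frac{8}{d^2}(\Id-|\psi\rangle\langle\psi|)\sum_{a\neq0}\Xi_a(\psi)^3W_a|\psi\rangle$ and carefully converting the resulting geodesic Lipschitz bound into a chordal one — replaces $4\sqrt2$ by $5.4$. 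I expect this constant-chasing, rather than the order of magnitude, to be the only real obstacle. Finally, multiplying the $\alpha_+$-estimate by $\frac{d(d+3)}{4}$ gives the stated Lipschitz constant $\frac{5.4(d+3)}{4}$ for $\epsilon$.
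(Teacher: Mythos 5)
Your argument is correct as far as it goes, but it proves the lemma with constant $4\sqrt{2}/d\approx 5.66/d$ rather than the stated $5.4/d$, and the passage from $4\sqrt 2$ to $5.4$ -- which you explicitly defer -- is precisely the part that is not routine. Your two steps are individually sound: $\|\Xi(\psi)-\Xi(\varphi)\|_{\ell_2}^2=d\,\|\rho_\psi-\rho_\varphi\|_2^2\le 2d\,\|\psi-\varphi\|^2$ is a clean computation, and the gradient bound $\|\nabla\|x\|_{\ell_4}^4\|_{\ell_2}=4\|x\|_{\ell_6}^3\le 4\sqrt d$ on the segment (using convexity of the $\ell_6$-norm and $\|\Xi\|_{\ell_6}^6\le\|\Xi\|_{\ell_\infty}^2\|\Xi\|_{\ell_4}^4\le d$) is valid. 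This is a genuinely different and considerably shorter route than the paper's: there, one sets $X=|\psi\rangle\langle\psi|-|\varphi\rangle\langle\varphi|$ with spectral decomposition $X=\lambda(|\mu\rangle\langle\mu|-|\nu\rangle\langle\nu|)$, expands $\tr\bigl[P_{n,4}(|\varphi\rangle\langle\varphi|+X)^{\otimes 4}\bigr]$ binomially, and bounds each of the four terms separately (Lemma~\ref{lem:bound1234}), which ultimately rests on the entanglement bound of Lemma~\ref{lem:ProjBound} and a rather delicate two-dimensional Bloch-sphere optimization (Lemmas~\ref{lem:bound5} and \ref{lem:bound6}); the constant $5.4$ then emerges from a case split on whether $\lambda=\|X\|_1/2$ exceeds $1/5.4$, using the global bound $\alpha_+\le 1/d$ in the large-$\lambda$ regime. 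Your method buys brevity and transparency at the price of a slightly worse constant; the paper's method buys the sharper $5.4$ at the price of several pages of auxiliary lemmas. Since the paper itself conjectures the optimal constants are $1/d$ and $d/4$, and since the downstream application (Levy's lemma) only needs \emph{some} $O(1/d)$ Lipschitz constant, your weaker constant would suffice for all uses in the paper -- but as a proof of the lemma \emph{as stated} it is incomplete, and your proposed fix (quantifying the trade-off between near-saturation of the $\ell_6$ bound and smallness of $\|\psi-\varphi\|$) is only a sketch whose feasibility at the level of reaching exactly $5.4$ is not evident.
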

This lemma is proved in the appendix.  Note that the second inequality is an immediate consequence of the first one and \eref{eq:ONdeviation}.
We guess that the two Lipschitz constants can be improved to $1/d$ and $d/4$, respectively. 
The following proposition is an immediate consequence of \lref{lem:LipConstant} and Levy's lemma. 
\begin{proposition}
Suppose  $\psi$ is drawn randomly according to the uniform measure on the complex sphere $\bbC^d$. Then 
 \begin{align}
 \mathrm{Prob}\{|\alpha_+(\psi)-\rmE[\alpha_+(\psi)]|\geq \xi\}\leq 2\exp\Bigl(-\frac{d^3 \xi^2}{8138}\Bigr),\qquad
  \mathrm{Prob}\{|\epsilon(\psi)|\geq \xi\}\leq 2\exp\Bigl(-\frac{d \xi^2}{509(d+3)^2}\Bigr) \quad \forall \xi\geq 0.
\end{align}
\end{proposition}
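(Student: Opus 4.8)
The plan is simply to feed the Lipschitz estimates of \lref{lem:LipConstant} into Levy's lemma and keep careful track of the numerical constants. First I would observe that the complex unit sphere in $\bbC^d$ is, as a metric measure space, identical to the real sphere $S^{2d-1}\subset\bbR^{2d}$ equipped with its rotation-invariant probability measure, and that under this identification the Hilbert-space distance $\|\psi-\varphi\|$ appearing in \lref{lem:LipConstant} coincides with the ambient Euclidean distance used in Levy's lemma. Hence the hypotheses of Levy's lemma are met for both $f=\alpha_+$ and $f=\epsilon$.

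For the first inequality I would apply Levy's lemma to $f=\alpha_+$, which by \lref{lem:LipConstant} has Lipschitz constant $\eta=5.4/d$, obtaining
\begin{equation*}
\mathrm{Prob}\{|\alpha_+(\psi)-\rmE[\alpha_+(\psi)]|\geq \xi\}\leq 2\exp\Bigl(-\frac{d\xi^2}{9\pi^3\eta^2}\Bigr)=2\exp\Bigl(-\frac{d^3\xi^2}{9\pi^3(5.4)^2}\Bigr),
\end{equation*}
and since $9\pi^3(5.4)^2=9\pi^3\cdot 29.16<8138$, replacing the denominator by $8138$ only weakens the bound and reproduces the claimed estimate. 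For the second inequality I would first invoke $\rmE[\epsilon(\psi)]=0$ from \eref{eq:4designDeviation}, so that Levy's lemma applied to $f=\epsilon$, whose Lipschitz constant is $\eta=5.4(d+3)/4$ by \lref{lem:LipConstant}, gives
\begin{equation*}
\mathrm{Prob}\{|\epsilon(\psi)|\geq \xi\}\leq 2\exp\Bigl(-\frac{16\,d\,\xi^2}{9\pi^3(5.4)^2(d+3)^2}\Bigr);
\end{equation*}
as $9\pi^3(5.4)^2/16<509$, replacing the denominator by $509(d+3)^2$ again only weakens the bound and yields the stated inequality. (Equivalently, the second bound follows from the first via the substitution dictated by $\epsilon(\psi)=\tfrac{d(d+3)}{4}\alpha_+(\psi)-1$ of \eref{eq:ONdeviation}, which rescales $\xi\mapsto\tfrac{4\xi}{d(d+3)}$ and produces the same universal constant.)

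I do not expect a genuine obstacle here: the concentration inequality is given, and the substantive input — the Lipschitz bounds of \lref{lem:LipConstant} — is established separately in the appendix and may be assumed. The only delicate point is bookkeeping: one must round the universal constants $9\pi^3(5.4)^2$ and $9\pi^3(5.4)^2/16$ \emph{upward} (to $8138$ and $509$), so that the exponents shrink in absolute value and the displayed inequalities remain correct; rounding the other way would invalidate them.
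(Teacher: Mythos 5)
Your proposal is correct and is exactly the paper's argument: the proposition is stated there as an immediate consequence of Lemma~\ref{lem:LipConstant} and Levy's lemma, with the constants $9\pi^3(5.4)^2\approx 8137.3$ and $9\pi^3(5.4)^2/16\approx 508.6$ rounded up to $8138$ and $509$. Your bookkeeping, including the identification of the complex sphere with $S^{2d-1}$ and the use of $\rmE[\epsilon(\psi)]=0$, matches the intended derivation.
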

Here the bound on $\mathrm{Prob}\{|\epsilon(\psi)|\geq \xi\}$ is tighter than that given in \eref{eq:DevChebyshevBound} only when $\epsilon(\psi)$ is very large, that is, $\epsilon(\psi)\gg \sqrt{d}$.

Although random Clifford orbits are good approximations to 4-designs with respect to a number of measures, such as the frame potential and operator-norm deviation. They are not good enough according to certain other measures. For example, the second moment of the trace norm deviation is given by  
\begin{equation}
(2D_+)^2\rmE[\epsilon(\psi)^2]=\frac{2(d-1)(d+1)^2(d+2)^2}{3(d+5)(d+6)(d+7)},
\end{equation}
where the equality follows from \eref{eq:4designDeviation}. When $d$ is large, the typical deviation with respect to the trace norm is around $\sqrt{2/3}\,d$, while it is  desirable that the deviation does not grow with the dimension for some applications. This observation motivates us to search for exact 4-designs or  approximations with higher precision.

\subsection{Fiducial vectors of exact 4-designs up to five qubits}
In this section we  propose a method for constructing  exact fiducial vectors of 4-designs of the Clifford group.  Solutions up to five qubits are presented explicitly.

Recall that an $n$-qubit state vector $\psi$ is a fiducial vector of a 4-design iff $\|\Xi(\psi)\|_{\ell_4}^4=4d/(d+3)$; see \eref{eqn:4designconds}.
Suppose $\psi=\psi_1\otimes \psi_2$ is a tensor product of an  $n_1$-qubit state vector and an $n_2$-qubit state vector with $n_1+n_2=n$. Then $\|\Xi(\psi)\|_{\ell_4}^4=\|\Xi(\psi_1)\|_{\ell_4}^4 \|\Xi(\psi_2)\|_{\ell_4}^4$ since
$P_{n,4}$  decomposes in the same way   $P_{n,4}=P_{n_1,4}\otimes P_{n_2,4}$.
In the case of a single qubit, let   $\psi(x,y,z)$ be a fiducial vector  with Bloch vector $(x,y,z)$ with $x^2+y^2+z^2=1$; then
\begin{equation}
\|\Xi(\psi)\|_{\ell_4}^4=1+x^4+y^4+z^4.
\end{equation}
The vector generates a 4-design iff $x^4+y^4+z^4=3/5$ as pointed out in \sref{sec:qubit}. Let $\psi_\rmT$ be the magic state with Bloch vector $(1,1,1)/\sqrt{3}$ \cite{BravK05} (which is also a SIC fiducial). Then fiducial vectors of 4-designs for $n=2,3,4$ can be constructed as follows,
\begin{equation}
\begin{cases}
\psi_\rmT\otimes \psi(x,y,z),\quad x^4+y^4+z^4=5/7,  & n=2; \\
\psi_\rmT^{\otimes2}\otimes \psi(x,y,z),\quad x^4+y^4+z^4=7/11,  & n=3;\\
\psi_\rmT^{\otimes3}\otimes \psi(x,y,z),\quad x^4+y^4+z^4=8/19,  & n=4.
\end{cases}
\end{equation}
Many  other constructions are also available.

In dimension 8, the set of Hoggar lines forms a SIC that is covariant with respect to the three-qubit Pauli group \cite{Hogg98, Zaun11, Zhu15S}. One fiducial vector of the SIC is given by
\begin{equation}\label{eq:HoggarLines}
\psi_{\rm{Hog}}=\frac{1}{\sqrt{6}}(1+\rmi,0,-1,1,-\rmi,-1,0,0)^\rmT.
\end{equation}
According to \eref{eq:Xibound},  $\|\Xi(\psi_{\rm{Hog}})\|_{\ell_4}^4=16/9$ attains the minimum over all three-qubit state vectors.
This observation enables us to construct fiducial vectors of 4-designs for $n=4,5$,
\begin{equation}
\begin{cases}
\psi_{\rm{Hog}}\otimes \psi(x,y,z),\quad x^4+y^4+z^4=17/19,  & n=4;\\
\psi_{\rm{Hog}}\otimes \psi_\rmT\otimes \psi(x,y,z),\quad x^4+y^4+z^4=8/19,  & n=5.
\end{cases}
\end{equation}

The tensor-product construction of fiducial vectors of 4-designs also has a limitation. Consider tensor powers of $\psi_\rmT$ and $\psi_{\rm{Hog}}$ for example,
\begin{equation}
\begin{aligned}
&\|\Xi(\psi_{\rmT}^{\otimes n})\|_{\ell_4}^4=\Bigl(\frac{4}{3}\Bigr)^n.\\
&\|\Xi(\psi_{\rm{Hog}}^{\otimes n/3})\|_{\ell_4}^4=\Bigl(\frac{16}{9}\Bigr)^{n/3}=\Bigl(\frac{4}{3}\Bigr)^{2n/3},\quad 3|n.
\end{aligned}
\end{equation}
As $n$ increases, $\|\Xi(\psi_{\rmT}^{\otimes n})\|_{\ell_4}^4$ and $\|\Xi(\psi_{\rm{Hog}}^{\otimes n/3})\|_{\ell_4}^4$  increase exponentially with $n$. By contrast, the value required for a 4-design approaches the constant 4. The following proposition clarifies this limitation; see \aref{sec:TensorFiducial} for a proof. 
\begin{proposition}\label{pro:TensorFiducial}
Suppose a 4-design fiducial vector of  the $n$-qubit Clifford group is a tensor product of $m\geq2$ vectors $\psi=\otimes_{j=1}^m\psi_j$,  where $\psi_j$ is an $n_j$-qubit state vector with $\sum_j n_j=n$ and $n_1\geq n_2\geq \cdots \geq n_m$. Then $m\leq 3$
except when $n=4$ and $n_1=n_2=n_3=n_4=1$. If $m=3$, then $n_2=n_3=1$, except when $(n_1,n_2,n_3)=(2,2,1)$ or $(n_1,n_2,n_3)=(3,2,1)$.
\end{proposition}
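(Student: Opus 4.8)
The plan is to reduce everything to a single elementary inequality between products of powers of two. By \eqref{eqn:4designconds}, a normalized $n$-qubit vector $\psi$ generates a $4$-design if and only if $\|\Xi(\psi)\|_{\ell_4}^4 = 4d/(d+3)$ with $d = 2^n$, and the characteristic function is multiplicative under tensor products, $\|\Xi(\psi_1\otimes\cdots\otimes\psi_m)\|_{\ell_4}^4 = \prod_{j=1}^m\|\Xi(\psi_j)\|_{\ell_4}^4$. Writing $d_j = 2^{n_j}$ (so $d = \prod_j d_j$) and feeding the factorwise lower bound $\|\Xi(\psi_j)\|_{\ell_4}^4 \geq 2d_j/(d_j+1)$ from \eqref{eq:Xibound} into the $4$-design condition, I obtain the necessary inequality $\prod_j \frac{2d_j}{d_j+1} \leq \frac{4d}{d+3}$. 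Cancelling the common factor $d$ and clearing denominators, this becomes
\begin{equation*}
	2^{\,m-2}\,(d+3) \;\leq\; \prod_{j=1}^m (d_j+1).
\end{equation*}
The rest of the proof is the determination of which nonincreasing partitions $n_1\geq\cdots\geq n_m\geq1$ are compatible with this bound.

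I would eliminate the case $m\geq5$ first. Since every $d_j\geq2$ we have $d_j+1\leq\tfrac32 d_j$, hence $\prod_j(d_j+1)\leq(3/2)^m d$, and the displayed inequality would force $3\cdot2^{m-2}\leq\bigl((3/2)^m-2^{m-2}\bigr)d$. But $(3/2)^m-2^{m-2}=2^{m-2}\bigl(4\,(3/4)^m-1\bigr)<0$ for all $m\geq5$ (already $4\,(3/4)^5=972/1024<1$, and $(3/4)^m$ is decreasing), so the right-hand side is negative while the left-hand side is positive --- a contradiction. Hence $m\leq4$.

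The remaining cases $m=4$ and $m=3$ reduce to finite checks. For $m=4$ the inequality reads $4(d+3)\leq\prod_j(d_j+1)$; since the $d_j$-partial derivative of $\prod_j(d_j+1)-4\prod_j d_j$ equals $\prod_{i\neq j}(d_i+1)-4\prod_{i\neq j}d_i\leq\bigl((3/2)^3-4\bigr)\prod_{i\neq j}d_i<0$, this expression is strictly decreasing in each variable, so it suffices to test the extreme configurations: at $d_1=\cdots=d_4=2$ one has $\prod_j(d_j+1)=81\geq76=4(d+3)$, so $n=4$ with $n_1=\cdots=n_4=1$ is admissible, whereas at $d_1=4$, $d_2=d_3=d_4=2$ one has $\prod_j(d_j+1)=135<140=4(d+3)$, and by the monotonicity just established every partition with some $d_j\geq4$ fails all the more. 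Thus $m=4$ forces $n=4$ and $n_1=\cdots=n_4=1$. For $m=3$ the inequality is $2(d+3)\leq(d_1+1)(d_2+1)(d_3+1)$, which is linear in $d_1$ once $d_2,d_3$ are fixed; running through the possibilities for $(n_2,n_3)$ under $n_1\geq n_2\geq n_3\geq1$ gives: $n_2=n_3=1\Rightarrow 8d_1+6\leq9d_1+9$, valid for every $n_1$; $n_2=2,n_3=1\Rightarrow 16d_1+6\leq15d_1+15$, i.e.\ $d_1\leq9$, hence $d_1\in\{4,8\}$ and $(n_1,n_2,n_3)\in\{(2,2,1),(3,2,1)\}$; $n_2=n_3=2\Rightarrow 7d_1\leq19$, impossible since $d_1\geq4$; and $n_2\geq3$ (so $d_1\geq d_2\geq8$) is ruled out by an analogous one-line estimate. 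Collecting the three subcases reproduces exactly the claimed list.

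I expect the only genuine subtlety to be the bookkeeping in the $m=3,4$ analysis --- in particular, pinning down the sign of the relevant partial derivatives so that checking one or two boundary configurations really does settle every larger partition. Two observations keep this honest: first, we never need equality in \eqref{eq:Xibound} (which, for $n_j=2$, is in fact unattainable since the Pauli group admits no SIC fiducial there), because we only use it to extract a necessary condition; and second, all arithmetic involved is over powers of two, so each subcase collapses to a one-variable linear inequality. Everything else is routine.
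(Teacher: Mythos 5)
Your proposal is correct and follows essentially the same route as the paper's proof in Appendix~\ref{sec:TensorFiducial}: multiplicativity of $\|\Xi(\cdot)\|_{\ell_4}^4$ under tensor products, the factorwise lower bound $2d_j/(d_j+1)$ from \eqref{eq:Xibound}, and a finite case analysis over the partitions $(n_1,\dots,n_m)$. Your normalization to $2^{m-2}(d+3)\leq\prod_j(d_j+1)$ together with the explicit monotonicity argument is a clean way to organize the checks, and your direct treatment of $m\geq 5$ makes rigorous a step the paper leaves implicit (it only analyzes $m=4$, with larger $m$ handled by regrouping factors).
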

More explicitly, this proposition implies that $(n_1, n_2, \ldots, n_m)$ can only admit one of the following forms $(1,1,1,1)$, $(3,2,1)$, $(2,2,1)$ $(n_1,1,1)$, and $(n_1, n_2)$.

\subsection{Algorithms for constructing  projective 4-designs}
In this section we present two algorithms
for constructing fiducial vectors of 4-designs. Also presented is a method for constructing  exact weighted 4-designs from two Clifford orbits.

Let $\psi$ be an $n$-qubit state vector.
 Recall that $\psi$ is a fiducial vector of a 4-design iff $\|\Xi(\psi)\|_{\ell_4}^4=4d/(d+3)$ or, equivalently, iff $\epsilon(\psi)=0$; cf.~\eref{eqn:4designconds}. 
 
 The first algorithm is based on the tensor-product construction discussed in the previous section.
 
 \textbf{Algorithm 1:}
 \begin{enumerate}
 	
 	\item Generate an $(n-1)$-qubit state vector $\psi_{n-1}$ such that $\|\Xi(\psi_{n-1})\|_{\ell_4}^4\leq  3d/(d+3)$, where $d=2^n$.

 	\item Let $c=4d/\left[(d+3)\|\Xi(\psi_{n-1})\|_{\ell_4}^4\right]$. Choose a qubit state vector $\psi$  with Bloch vector $(x,y,z)$ which satisfies $x^4+y^4+z^4=c-1$. Then $\psi_{n-1}\otimes \psi$ is a fiducial vector of a 4-design.

 \end{enumerate}
 	The vector required in Step~2 can always be found since  $1/3\leq c-1\leq 2(d+2)/(d+3)-1<1$, given that $2d/(d+2)\leq \|\Xi(\psi_{n-1})\|_{\ell_4}^4\leq  3d/(d+3)$, 
 	where the lower bound follows from \eref{eq:Devbound}.
 	
 	In general, it is still not clear whether there exists an $(n-1)$-qubit state vector   $\psi_{n-1}$ which satisfies $\|\Xi(\psi_{n-1})\|_{\ell_4}^4\leq  3d/(d+3)$, but we believe that the answer is positive. Actually, any eigenstate of a Singer unitary might satisfy the requirement; see the next section. In addition, one may try to minimize $\|\Xi(\psi_{n-1})\|_{\ell_4}^4$ numerically as in the search of SICs \cite{ReneBSC04,ScotG10}. Note that here the task is much simpler since the target $3d/(d+3)$ is much larger than the value $2d/(d+1)$ required for a SIC.

 Given two $n$-qubit state vectors $\psi_1,\psi_2$ with $\epsilon(\psi_1)>0$ and $\epsilon(\psi_2)<0$, then any continuous curve of state vectors connecting $\psi_1$ and $\psi_2$ contains a 4-design fiducial vector.  The following bisection algorithm is based on this simple observation. Suppose $\epsilon_0$ is the target precision.

\textbf{Algorithm 2:}
\begin{enumerate}
\item Generate two state vectors $\psi_1,\psi_2$ such  that $\epsilon(\psi_1)>0$, $\epsilon(\psi_2)<0$, and $\langle\psi_1|\psi_2\rangle\neq0$. Choose suitable phase factors so  that $\langle\psi_1|\psi_2\rangle>0$.

\item  Let $\psi'_3=(\psi_1+\psi_2)/2$ and $\psi_3=\psi'_3/\sqrt{\langle\psi'_3|\psi'_3\rangle}$. Stop if $|\epsilon(\psi_3)|\leq \epsilon_0$.

\item If $\epsilon(\psi_3)\geq 0$, then replace $\psi_1$ with $\psi_3$; otherwise, replace $\psi_2$ with $\psi_3$. Repeat Steps 2,3.

\end{enumerate}
\begin{remark}
A candidate for $\psi_1$ is any stabilizer state, while a potential candidate for $\psi_2$ is  an eigenstate of a Singer unitary introduced in the next section. In Step 2 we may also use a weighted sum of $\psi_1,\psi_2$, say
\begin{equation}
\psi'_3=\frac{\epsilon(\psi_1)\psi_1-\epsilon(\psi_2)\psi_2}{\epsilon(\psi_1)-\epsilon(\psi_2)}.
\end{equation} 
\end{remark}

Given two $n$-qubit state vectors $\psi_1,\psi_2$ with $\epsilon(\psi_1)>0$ and $\epsilon(\psi_2)<0$ as above, we can also construct an exact weighted  4-design from two Clifford orbits. Note that 
\begin{equation*}
	  \frac{|\epsilon(\psi_2)|}{|\orb(\psi_1)|}\sum_{\phi\in \orb(\psi_1)} \big(|\phi\rangle \langle\phi|\big)^{\otimes 4}+ \frac{|\epsilon(\psi_1)|}{|\orb(\psi_2)|}\sum_{\phi\in \orb(\psi_2)} \big(|\phi\rangle \langle\phi|\big)^{\otimes 4}
= \frac{P_{[4]}}{D_{[4]}}
 \end{equation*}
according to \crref{cor:4design} and the definition of $\epsilon(\psi)$ [c.f.~\eref{eq:ONdeviation}]. Therefore, the union of $\orb(\psi_1)$ and  $\orb(\psi_2)$ forms an exact weighted 4-design provided that the  vectors in $\orb(\psi_1)$  and that in $\orb(\psi_2)$ have the following weights respectively,
\begin{equation}
\frac{|\epsilon(\psi_2)|}{|\orb(\psi_1)|[|\epsilon(\psi_1)|+|\epsilon(\psi_2)|]},\qquad \frac{|\epsilon(\psi_1)|}{|\orb(\psi_2)|[|\epsilon(\psi_1)|+|\epsilon(\psi_2)|]}.
\end{equation}
Similar construction also applies to more than two Clifford orbits.

\subsection{Approximate fiducial vectors of 4-designs from  MUB cycler}

In this section we reveal an interesting connection between approximate  4-designs and eigenstates of certain special unitary transformations in the Clifford group. While these states and unitary transformations have been found useful in a number of contexts, the connection with 4-designs seems to be unexplored. We hope our preliminary observation will stimulate further progress.

Let $\{\psi_j^r\}_{r,j}$ be a set of MUB \cite{DurtEBZ10}, where $r$ labels the basis, and $j$ labels each element in a basis. A \emph{balanced state} $\psi$ with respect to $\{\psi_j^r\}_{r,j}$ is a state that looks the same from every basis in the set, that is, the set of probabilities $\{|\langle \psi_j^r| \psi\rangle|^2\}_j$ is independent of $r$ \cite{AmbuSSW14, ApplBD15}. If there exists a unitary operator that cycles through all the bases, then any eigenstate of the unitary operator is a balanced state. For example, the complete set of MUB constructed by Wootters and Fields \cite{WootF89} has a cycler when the dimension is a power of 2, that is, $d=2^n$. Here each MUB cycler is a special unitary transformation in the Clifford group, which
 is also known as a Singer unitary \cite{Zhu15Sh}. The group generated by a Singer unitary is called a Singer unitary group. All Singer unitary groups are conjugated to each other in the Clifford group; in particular all of them have the same order of $d+1$ (modular phase factors). In addition, each Singer unitary has a nondegenerate spectrum, so the eigenbasis is well-defined. In the case of a qubit, each Singer unitary has order 3, and each eigenstate of a Singer unitary is a SIC fiducial and a magic state.

When $n$ is a power of 2,  a simple construction of Singer unitaries (MUB cyclers) was presented in \rcite{SeyfR11}. Here we are interested in constructing approximate fiducial vectors of 4-designs from the eigenvectors of a Singer unitary.
For $n=1,2, 4, 8$, numerical calculation shows that all eigenvectors $\psi_n$ of a Singer unitary for given $n$ have the same value of the deviation parameter $\epsilon(\psi_n)$ [cf.~\eref{eq:ONdeviation}].  Let  $\psi_\rmT$ be a single qubit magic state vector. Calculation shows that
\begin{equation}
-\epsilon(\psi_n\otimes \psi_\rmT)=
\begin{cases}
\frac{2}{9} &n=1,\\
0.12 &n=2,\\
0.0312&n=4,\\
0.0020 &n=8.
\end{cases}
\end{equation}
The magnitude of the deviation  $\epsilon(\psi_n\otimes \psi_\rmT)$  is around $1/2^{n+1}$, which has the same order of magnitude as the standard deviation of $\epsilon(\psi)$ of a random $(n+1)$-qubit state vector $\psi$; cf. \eref{eq:4designDeviation}.
The orbit generated from $\psi_n\otimes \psi_\rmT$ is a very good approximation to a 4-design.  Exact 4-design fiducial vectors can be constructed using algorithm~1 in the previous section. In addition, $\psi_n$ or $\psi_n\otimes \psi_\rmT$ can serve as an input to Algorithm~2 presented in the previous section.

\begin{conjecture}\label{con:Singer4design}
Suppose  $\psi_n$ is any eigenvector of a Singer unitary operator in the $n$-qubit Clifford group. Then
\begin{equation}\label{eq:Singer4design}
\lim_{n\rightarrow \infty} \epsilon(\psi_n\otimes \psi_\rmT)=0.
\end{equation}
\end{conjecture}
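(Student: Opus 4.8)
The plan is to first reduce \cref{con:Singer4design} to a cleaner statement about $\psi_n$ alone. Using the multiplicativity $\|\Xi(\psi_1\otimes\psi_2)\|_{\ell_4}^4=\|\Xi(\psi_1)\|_{\ell_4}^4\,\|\Xi(\psi_2)\|_{\ell_4}^4$ (which follows from $W_{(a,b)}=W_a\otimes W_b$ and $P_{n,4}=P_{n_1,4}\otimes P_{n_2,4}$), the value $\|\Xi(\psi_\rmT)\|_{\ell_4}^4=4/3$, and the identity $\epsilon(\psi)=\frac{d+3}{4d}\|\Xi(\psi)\|_{\ell_4}^4-1$ [cf.~\eref{eq:ONdeviation} and \eqref{eqn:4designconds}], one obtains $\epsilon(\psi_n\otimes\psi_\rmT)=\frac{2d+3}{6d}\|\Xi(\psi_n)\|_{\ell_4}^4-1$ with $d=2^n$. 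Hence \cref{con:Singer4design} is equivalent to $\lim_{n\to\infty}\|\Xi(\psi_n)\|_{\ell_4}^4=3$, that is, to $\sum_{0\neq a\in\FF_2^{2n}}|\tr(W_a\rho_n)|^4\to2$ where $\rho_n=\outer{\psi_n}{\psi_n}$. Since $\{W_a\}$ forms a nice error basis, we always have $\sum_a|\tr(W_a\rho_n)|^2=d$ and $\tr(W_0\rho_n)=1$, so the $\ell_2$-mass of $\Xi(\psi_n)$ on the nonzero labels equals exactly $d-1$; the claim is that the $\ell_4$-mass is asymptotically \emph{twice} the value $(d-1)^2/(d^2-1)\to1$ it would have if $\Xi(\psi_n)$ were flat on its support --- a ``Porter--Thomas''-type statement.

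The next step is to use the Singer symmetry to cut the problem down to $d-1$ numbers. Let $U$ be the Singer unitary with eigenvector $\psi_n$ and let $F\in\Sp(2n,\FF_2)$ be the symplectic transformation it induces; $F$ has order $d+1$. Because $\rho_n=U\rho_n U^\dagger$, \eref{eqn:symplecticaction} gives $|\tr(W_a\rho_n)|=|\tr(W_{Fa}\rho_n)|$, so $a\mapsto|\Xi_a(\psi_n)|$ is constant on $\langle F\rangle$-orbits. Identifying $\FF_2^{2n}$ with $\FF_{d^2}$ carrying the nondegenerate alternating $\FF_2$-bilinear form $\langle x,y\rangle=\Tr_{\FF_{d^2}/\FF_2}(xy^{d})$, one may take $F$ to be multiplication by a generator $\zeta$ of the order-$(d+1)$ subgroup $\mu_{d+1}=\{x:x^{d+1}=1\}$ of $\FF_{d^2}^\times$; then the $\langle F\rangle$-orbits on $\FF_{d^2}^\times$ are the $d-1$ cosets of $\mu_{d+1}$, each of size $d+1$. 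Hence $|\Xi(\psi_n)|$ is encoded by $d-1$ nonnegative numbers $v_1,\dots,v_{d-1}$ with $(d+1)\sum_i v_i^2=d-1$, and the target becomes $(d+1)\sum_i v_i^4\to2$; equivalently, the probability vector $q_i=v_i^2/\sum_j v_j^2$ on the coset-orbits should satisfy the collision bound $\sum_i q_i^2\to2/d$.

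The analytic core --- and the main obstacle --- is then to evaluate $\sum_i v_i^4$, for which one needs an explicit handle on $\Xi_a(\psi_n)$. The natural route is to use that eigenvectors of a Singer unitary, at least the explicit ones produced by the construction of \rcite{SeyfR11} when $n$ is a power of $2$, can be written in closed form as (multi-)quadratic Gauss-sum vectors over $\FF_d$ or $\FF_{d^2}$; substituting such an expression turns $\sum_a|\Xi_a(\psi_n)|^4$ into a complete exponential sum over $\FF_{d^2}$ of hyper-Kloosterman/Sali\'e type, whose diagonal part should contribute exactly $3$ while the off-diagonal part is controlled by Weil--Deligne square-root cancellation, yielding an error $O(d^{-1/2})$ up to degree factors. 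A complementary, more structural route would be: (i) show that all $d+1$ eigenvectors of $U$ share the same $\|\Xi\|_{\ell_4}^4$ --- plausibly from the action of the Clifford-group normalizer of the Singer unitary group together with the Frobenius automorphism $x\mapsto x^2$ of $\FF_{d^2}$ --- so that it suffices to compute the average $\frac{1}{d+1}\sum_\psi\|\Xi(\psi)\|_{\ell_4}^4=\frac{d^2}{d+1}\tr\bigl(P_{n,4}M_n\bigr)$ with $M_n=\sum_\psi\rho_\psi^{\otimes4}$; and (ii) determine this $S_4$- and $U^{\otimes4}$-invariant operator $M_n$ on $\Sym_4(\CC^d)$ from the representation theory of the abelian Singer unitary group on $(\CC^d)^{\otimes4}$ together with \thref{thm:Main}. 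Either way, the crude inequality \eref{eq:Xibound} only gives $\liminf_n\|\Xi(\psi_n)\|_{\ell_4}^4\ge2$, and a Singer eigenstate, being balanced with respect to a complete set of MUB, is in a sense the \emph{opposite} of a stabilizer state, so no cheap improvement is visible; closing the gap to $3$ appears to require genuine exponential-sum input or a complete description of $M_n$, which is why the statement is offered only as a conjecture supported by the numerics above.
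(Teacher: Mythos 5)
The statement you are asked about is \cref{con:Singer4design}, which the paper itself does \emph{not} prove: it is left as an open problem (see the Summary, item~3) and is supported only by numerical evaluations of $\epsilon(\psi_n\otimes\psi_\rmT)$ for $n=1,2,4,8$. So there is no proof in the paper to compare against, and your proposal is correctly calibrated in presenting reductions rather than a proof. Your first reduction is sound and in fact coincides with the paper's own remark following the conjecture: using multiplicativity of $\|\Xi(\cdot)\|_{\ell_4}^4$ under tensor products, $\|\Xi(\psi_\rmT)\|_{\ell_4}^4=4/3$, and \eref{eq:ONdeviation}, one gets $\epsilon(\psi_n\otimes\psi_\rmT)=\frac{2d+3}{6d}\|\Xi(\psi_n)\|_{\ell_4}^4-1$ with $d=2^n$, so the conjecture is equivalent to $\lim_n\|\Xi(\psi_n)\|_{\ell_4}^4=3$ --- exactly the ``succinct alternative'' the authors state. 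Your second reduction, using the order-$(d+1)$ Singer action to collapse the $d^2-1$ nonzero characteristic-function values onto $d-1$ orbit representatives $v_i$ with $(d+1)\sum_i v_i^2=d-1$ and target $(d+1)\sum_i v_i^4\to 2$, is correct and goes somewhat beyond what the paper records; it is a useful sharpening of the problem.

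The genuine gap --- which you name yourself --- is the evaluation of $\sum_i v_i^4$. Neither of your two proposed routes (Weil--Deligne bounds on the exponential sums arising from a Gauss-sum form of the Singer eigenvectors, or computing the fourth-moment operator $M_n$ of the Singer eigenbasis via \thref{thm:Main} and the representation theory of the abelian Singer group) is carried out, and the elementary bound \eref{eq:Xibound} only yields $\liminf_n\|\Xi(\psi_n)\|_{\ell_4}^4\ge 2$, short of the required $3$. Two cautions on the sketched routes: the averaging step (i) presumes all $d+1$ eigenvectors of $U$ share the same $\|\Xi\|_{\ell_4}^4$, which the paper only observes numerically for $n=1,2,4,8$ and which would itself need proof; and the explicit Singer constructions of \rcite{SeyfR11} are stated only for $n$ a power of $2$, whereas the conjecture quantifies over all $n$, so the exponential-sum route as described would at best prove the limit along that subsequence. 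In short: your reductions are correct and consistent with the paper, but the conjecture remains open after your proposal, just as it does in the paper.
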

This conjecture implies
 that the orbit generated by the  $(n+1)$-qubit Clifford group from $\psi_n\otimes \psi_\rmT$ converges to a 4-design with respect to the operator norm as $n$ grows. \Eref{eq:Singer4design} has several  equivalent formulations; a succinct  alternative  reads
\begin{equation}
\lim_{n\rightarrow \infty} \|\Xi(\psi_n)\|_{\ell_4}^4=3.
\end{equation}

\subsection{Harmonic invariants, connections to the real-valued theory, and 5-designs}
\label{sec:harmonic}

One original motivation \cite[Section~1.E]{KuenG15} for this work came from a result on the \emph{real} Clifford group $\mathrm{RC}_n$.
This is the group generated by tensor products of the real Pauli matrices $\sigma_{(0,0)}, \sigma_{(0,1)}, \sigma_{(1,0)}$, together with the (real) Hadamard matrix
\begin{equation}\label{eqn:hadamardr}
H_{\RR}=\frac{1}{\sqrt 2}\begin{pmatrix}
1 & 1\\
1&-1
\end{pmatrix}
\end{equation}
and the CNOT matrix as in (\ref{eqn:cnot}) between each pair of qubits.
In Refs.~\cite{Side99,Rung96,NebeRS01,NebeRS06}, the authors studied  invariant polynomials of $\mathrm{RC}_n$ and  real spherical designs \cite{DelsGS77,Bann79} that appear as the group's orbits.

Using methods from classical invariant theory, they showed \cite[Corollary~4.13]{NebeRS01} that there are no invariant harmonic polynomials of $\mathrm{RC}_n$ of degree $2t$ for $t=1,2,3,5$, and -- up to scalar multiples -- a single harmonic invariant for $t=4$ (c.f.~Appendix~\ref{sec:polynomials}).
It follows that the orbit of any vector forms a real spherical design of strength $2\cdot 3 +1$.
Furthermore, the orbit of any real root of the unique harmonic invariant of degree $2\cdot 4$
forms a spherical design of strength $2\cdot 5 + 1$.
The existence of real roots follows from an averaging argument similar to the one we employ in \sref{sec:randomOrbits}.

References~\cite{NebeRS01,NebeRS06} also treat the complex Clifford group $\Cli_n$.
However, it seems that these works only characterize the invariant polynomials in $\Hom_{(2t)}(\CC^d)$ rather than the ones in $\Hom_{(t,t)}(\CC^d)$ investigated here (c.f.~Appendix~\ref{sec:polynomials}).
To the present authors, these two cases seem significantly different and we are not aware of any way that would allow one to directly apply the \emph{complex} results from \rscite{NebeRS01,NebeRS06} in our setting (however, see below for corollaries of their \emph{real}-valued statements).

It was therefore an initial goal of this work to see whether methods from quantum information theory (such as stabilizer codes) could be used to find similar statements to the ones summarized above.
The theory developed in the previous sections largely achieves this goal.
The following proposition reformulates our results in a way that emphasizes the similarities.

\begin{proposition}\label{prop:harmoniclanguage}
	The Clifford group $\Cli_n$ has no non-trivial harmonic invariants
	of degrees $(1,1), (2,2)$, or $(3,3)$.
	All harmonic invariants of degree $(4,4)$ are multiples of $\epsilon$ as defined in \eref{eq:ONdeviation}.
	The orbit of a normalized vector $\psi$ forms a $4$-design if and only
	if it is a root of $\epsilon$.
\end{proposition}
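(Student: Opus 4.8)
The plan is to recast everything in terms of dimensions of spaces of invariant operators on symmetric subspaces, where \thref{thm:Main} (equivalently \pref{prop:Wirre}) applies directly. First I would invoke the $\rmU(d)$-equivariant harmonic decomposition $\Hom_{(t,t)}(\CC^d)=\bigoplus_{j=0}^{t}\|\psi\|^{2j}\,\Harm_{(t-j,t-j)}(\CC^d)$ reviewed in \aref{sec:polynomials}, whose summands $\Harm_{(s,s)}(\CC^d)$ are irreducible $\rmU(d)$-modules, nontrivial for $s\ge 1$. Being a decomposition of $\rmU(d)$-modules it is a fortiori one of $\Cli_n$-modules, so the count of $\Cli_n$-invariant polynomials telescopes: $\dim\Harm_{(t,t)}(\CC^d)^{\Cli_n}=\dim\Hom_{(t,t)}(\CC^d)^{\Cli_n}-\dim\Hom_{(t-1,t-1)}(\CC^d)^{\Cli_n}$. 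Next, via the polarization correspondence of \lref{lem:complexPolarization}, $\Hom_{(t,t)}(\CC^d)^{\Cli_n}$ is isomorphic (by conjugation-equivariance) to the commutant of $\Ten^t(\Cli_n)$ restricted to $\Sym_t(\CC^d)$, hence has dimension equal to the sum of squared multiplicities of the $\Ten^t(\Cli_n)$-irreducible components of $\Sym_t(\CC^d)$.

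With this dictionary the first two claims reduce to a short bookkeeping. For $t\le 3$ the Clifford group is a unitary $t$-design, so $\Sym_t(\CC^d)$ stays $\Ten^t(\Cli_n)$-irreducible, whence $\dim\Hom_{(t,t)}(\CC^d)^{\Cli_n}=1$ (the invariant being $\|\psi\|^{2t}$) and, by the telescoping identity, no nontrivial harmonic invariant of degree $(t,t)$ for $t=1,2,3$. For $t=4$, \pref{prop:Wirre} and \tref{tab:IrreDim} give $\Sym_4(\CC^d)=W_{[4]}^+\oplus W_{[4]}^-$ with both summands nontrivial and inequivalent for every $n\ge 1$, so the commutant is two-dimensional and the telescoping identity pins the harmonic invariant space at dimension $2-1=1$.

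It then remains to check that $\epsilon$ spans this line. The function $\epsilon$ has bidegree $(4,4)$ and is $\Cli_n$-invariant, since $\|\psi\|^8$ is $\rmU(d)$-invariant and $P_{n,4}$ commutes with $\Ten^4(\Cli_n)$ by \thref{thm:Main}; it is nonzero, e.g.\ it equals $(d-1)/4$ on stabilizer states. Writing $\epsilon=a\,h+b\,\|\psi\|^8$ in terms of a generator $h$ of the harmonic invariant line (possible since these two $\Cli_n$-invariants span $\Hom_{(4,4)}(\CC^d)^{\Cli_n}$), I would integrate over the unit sphere: $\int h\,\rmd\psi=0$ because $h$ is harmonic of positive degree, while $\int \tr[P_{n,4}(|\psi\rangle\langle\psi|)^{\otimes 4}]\,\rmd\psi=\tr(P_{n,4}P_{[4]})/D_{[4]}=D_+/D_{[4]}=4/[d(d+3)]$, so $b=\int\epsilon\,\rmd\psi=\tfrac{d(d+3)}{4}\cdot\tfrac{4}{d(d+3)}-1=0$. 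Hence $\epsilon=a\,h$ is harmonic, and every degree-$(4,4)$ harmonic invariant is a scalar multiple of it. The final claim is then a restatement of \crref{cor:4design}: for normalized $\psi$ the orbit is a $4$-design iff $\beta_+=\beta_-=1/D_{[4]}$, and since $\beta_+=\alpha_+(\psi)/D_+$ and $D_{[4]}=D_++D_-$, this collapses to the single equation $\alpha_+(\psi)=D_+/D_{[4]}$, i.e.\ $\epsilon(\psi)=0$ (cf.\ \eref{eqn:4designconds}).

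The only part that requires genuine computation is the normalization check in the third paragraph, and even that is just the sphere average already recorded when computing $\rmE[\alpha_+(\psi)]$; everything else is handed to us by the commutant structure of \thref{thm:Main}. I expect the main conceptual obstacle to be making the dictionary between harmonic invariants and commutants on $\Sym_t(\CC^d)$ watertight — in particular verifying, for the small cases $n=1,2$, that $W_{[4]}^+$ and $W_{[4]}^-$ are still nontrivial and inequivalent so that the dimension count holds uniformly in $n$.
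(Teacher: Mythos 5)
Your proposal is correct and follows essentially the paper's route: both rely on \thref{thm:Main} (via \pref{prop:Wirre}) to pin the commutant of $\Ten^4(\Cli_n)$ on $\Sym_4(\CC^d)$ at dimension two, and on the harmonic decomposition \eqref{eqn:complexharmonicdecomp} to locate the single new invariant in $\Harm_{(4,4)}$. Your sphere-average check that $\epsilon$ is harmonic is the polynomial-side restatement of the paper's observation that the operator $A=(D_{[4]}/D_+)P_+-P_{[4]}$ satisfying $\epsilon=p_A$ is Hilbert--Schmidt-orthogonal to $P_{[4]}$ (the two conditions are literally $\tr(AP_{[4]})=0$).
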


\begin{proof}
	Because the Clifford group forms a unitary $3$-design, it follows that for $t=1,2,3$, the
	commutant
	$L(\Sym_t(\CC^d))^{\Cli_n}$
	of $\Cli_n$ acting on
	$\Sym_t(\CC^d)$
	is given by  multiples of $P_{[t]}$.
	By Eqs.~(\ref{eqn:complexharmonicdecomp}) and  (\ref{eqn:complexSymmetricEmbedding}) in \aref{sec:polynomials}, these are just the embeddings of $H_{(0,0)}$ into $L(\Sym_t(\CC^d))$ (corresponding to the polynomials $\psi \mapsto \|\psi\|_2^{2t}$).
	This proves the first part.
	
	From this and (\ref{eqn:complexharmonicdecomp}), we have that
	\begin{equation*}
	L(\Sym_4(\CC^d))^{\Cli_n}
	\simeq
	H_{(0,0)}^{\Cli_n} \oplus H_{(4,4)}^{\Cli_n}
	=
	H_{(0,0)} \oplus H_{(4,4)}^{\Cli_n}.
	\end{equation*}
	At the same time, Theorem~\ref{thm:Main} implies that $L(\Sym_4(\CC^d))^{\Cli_n}$ is spanned by the the two projectors $P_\pm$ onto $W_{(4)}^\pm$ defined in \sref{sec:randomOrbits}.
	As in the first part, $P_{[4]}=P_++P_-$ spans $H_{(0,0)}$. Clearly, the operator
	\begin{equation*}
	A:= \frac{D_{[4]}}{D_+} P_+ - P_{[4]}
	\end{equation*}
	is an element of the commutant and orthogonal to $P_{[4]}$.
	As such, $A$ must span $H_{(4,4)}^{\Cli_n}$.
	But $\epsilon$ is the polynomial $p_A$ associated with $A$ in the sense of Lemma~\ref{lem:complexPolarization}.
	
	The final statement of \pref{prop:harmoniclanguage} is just \eref{eqn:4designconds}.
\end{proof}

The results on the real Clifford group mentioned above strongly suggest upper bounds on the dimensions of the spaces of harmonic invariants of $\Cli_n$.
Indeed, up to slightly different phase conventions for the Hadamard gate [Eq.~(\ref{eqn:hadamardc}) vs Eq.~(\ref{eqn:hadamardr})], which are immaterial for the present discussion, the real Clifford group $\mathrm{RC}_n$ is a subgroup of the complex one $\Cli_n$.
Now let $p_A\in\left(\Harm_{(i,i)}\right)^{\Cli_n}$ be a $\Cli_n$-invariant polynomial.
It is clearly also invariant under any subgroup of $\Cli_n$, in particular, under $\mathrm{RC}_n$.
Let $A=A_{\Re} + i A_{\Im}$, for $A_{\Re}, A_{\Im}$ real matrices be the decomposition of $A$ into its real and imaginary part.
Since the action of $\mathrm{RC}_n$ does not mix the real and the imaginary components, it follows that the restrictions of $p_{A_{\Re}}$ and $p_{A_{\Im}}$ to real arguments are $\mathrm{RC}_n$-invariant polynomials.
Using (\ref{eqn:totallysymmetric}), they can easily be checked to lie in the harmonic space $H_{2i}(\RR^d)$.
(The the restriction of  $p_{A A^\dagger}$ to real arguments also gives a real polynomial -- but it need not be harmonic, even if $p_A$ was.)
We can therefore convert invariant harmonic polynomials of $\Cli_n$ into those of $\mathrm{RC}_n$.
Unfortunately, the resulting real polynomials may turn out to be zero:
In the language of \rcite{DoheW13} (Appendix~\ref{sec:polynomials}), it could happen that the matrices $A_{\Im}, A_{\Re}$ -- while elements of $L(\Sym_{i}(\RR^d))$ -- are orthogonal to the \emph{totally symmetric matrices} $\mathrm{MSym}_i(\RR^d))$.
This technical problem prevents us from directly inferring the absence of harmonic invariants of $\Cli_n$ of bi-degree $(t,t)$ for $t=1,2,3,5$ from the absence of real harmonic invariants of $\mathrm{RC}_n$ of degree $2 \cdot t$ for the same $t$'s.

We conjecture, however, that this potential problem is not realized for the Clifford group, at least not for degree $(5,5)$.
In this case, \cite[Corollary~4.13]{NebeRS01} -- stating the absence of harmonic invariants of $\mathrm{RC}_n$ of degree $2\cdot 5$ --  would imply the following:

\begin{conjecture}\label{conj:5}
	Let $\psi$ be a normalized vector.
	Its orbit forms a complex projective $5$-design iff it is a root of $\epsilon$.
\end{conjecture}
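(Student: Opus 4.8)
The plan is to reduce Conjecture~\ref{conj:5} to a single missing ingredient -- the vanishing of the space of bidegree-$(5,5)$ harmonic invariants of the Clifford group -- and then describe how that ingredient could be attacked. The reduction runs exactly parallel to the proof of Proposition~\ref{prop:harmoniclanguage}. For a normalized vector $\psi$, Proposition~\ref{prop:designequiv} says that $\orb(\psi)$ is a complex projective $5$-design iff the twirl
\begin{equation*}
	T(\psi):=\frac{1}{|\orb(\psi)|}\sum_{\phi\in\orb(\psi)}\bigl(\outer{\phi}{\phi}\bigr)^{\otimes 5}
	=\frac{1}{|\PCli_n|}\sum_{U\in\PCli_n}U^{\otimes 5}\,\outer{\psi^{\otimes 5}}{\psi^{\otimes 5}}\,\bigl(U^{\otimes 5}\bigr)^{\dagger}
\end{equation*}
equals $P_{[5]}/D_{[5]}$. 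Since $\outer{\psi^{\otimes 5}}{\psi^{\otimes 5}}$ is supported on $\Sym_5(\CC^d)$, which is invariant under $\Ten^5(\Cli_n)$, the operator $T(\psi)$ lies in the commutant $L(\Sym_5(\CC^d))^{\Cli_n}$. I would then invoke the ($\Cli_n$-equivariant) harmonic decomposition $L(\Sym_5(\CC^d))\simeq\bigoplus_{i=0}^{5}H_{(i,i)}$ of \aref{sec:polynomials}, together with the facts that $\Cli_n$ is a unitary $3$-design (so $H_{(1,1)}^{\Cli_n}=H_{(2,2)}^{\Cli_n}=H_{(3,3)}^{\Cli_n}=0$) and $\dim H_{(4,4)}^{\Cli_n}=1$ by Proposition~\ref{prop:harmoniclanguage}. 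Writing $A_5\in L(\Sym_5(\CC^d))$ for the image of the generator $A=\tfrac{D_{[4]}}{D_+}P_+-P_{[4]}$ of $H_{(4,4)}^{\Cli_n}$ under the embedding $L(\Sym_4)\hookrightarrow L(\Sym_5)$ -- whose associated polynomial is $p_{A_5}(\psi)=\|\psi\|_2^{2}\,\epsilon(\psi)$ -- one obtains
\begin{equation*}
	L(\Sym_5(\CC^d))^{\Cli_n}=H_{(0,0)}\oplus H_{(4,4)}^{\Cli_n}\oplus H_{(5,5)}^{\Cli_n},
	\qquad
	T(\psi)=\tfrac{1}{D_{[5]}}P_{[5]}+c\,\epsilon(\psi)\,A_5+R(\psi),
\end{equation*}
where $c=\|A_5\|_2^{-2}\neq0$, the first two coefficients are pinned down by $\tr T(\psi)=1$ and $\tr A_5=\tr R(\psi)=0$, and $R(\psi)\in H_{(5,5)}^{\Cli_n}$. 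One direction of the conjecture is then unconditional: a normalized complex projective $5$-design is in particular a $4$-design, hence $\psi$ is a root of $\epsilon$ by Proposition~\ref{prop:harmoniclanguage}.

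By the displayed formula for $T(\psi)$, the converse direction -- and therefore the whole conjecture -- is \emph{equivalent} to the vanishing $H_{(5,5)}^{\Cli_n}=0$: if this holds then $R(\psi)\equiv0$, so $T(\psi)=P_{[5]}/D_{[5]}$ exactly when $\epsilon(\psi)=0$. Equivalently, one needs $\dim L(\Sym_5(\CC^d))^{\Cli_n}=2$. A conceptually appealing stronger reformulation -- a ``degree-$5$ analogue of Theorem~\ref{thm:Main}'' -- would be that the commutant $\Ten^5(\Cli_n)'$ on $(\CC^d)^{\otimes 5}$ is generated as an algebra by $S_5$ together with the five stabilizer projectors acting on the four-element subsets $I\subset\{1,\dots,5\}$ of the tensor factors, since all of those restrict to a single extra operator on the symmetric subspace.

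I see two natural routes to the missing vanishing statement. The first is a \emph{fifth-moment computation}: $\dim\Ten^5(\Cli_n)'=\Phi_5(\Cli_n)$, and one could evaluate $\Phi_5(\Cli_n)$ by the trace/character techniques behind Lemma~\ref{lem:CliffdTrace} and \rcite{Zhu15MC} -- ultimately an orbit count for $\Sp(2n,\FF_2)\ltimes\FF_2^{2n}$ acting on suitable tuples of $\FF_2^{2n}$ -- and compare it with the dimension of the algebra generated by the candidate generators above; matching values would force $H_{(5,5)}^{\Cli_n}=0$. The second route is a \emph{descent to the real Clifford group} $\mathrm{RC}_n\le\Cli_n$: given $p_A\in H_{(5,5)}^{\Cli_n}$, split $A=A_{\Re}+\rmi A_{\Im}$ into real and imaginary parts; as explained in the paragraph preceding the conjecture, the restrictions of $p_{A_{\Re}}$ and $p_{A_{\Im}}$ to real arguments are $\mathrm{RC}_n$-invariant harmonic polynomials of degree $2\cdot5=10$, hence vanish by \cite[Corollary~4.13]{NebeRS01}, and one then has to upgrade this to $A=0$.

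The main obstacle is precisely that upgrade, and it is the same one flagged in the text: $p_{A_{\Re}}|_{\RR}$ can be identically zero while $A_{\Re}\neq0$, because $A_{\Re}$ may fail to meet the totally symmetric matrices $\mathrm{MSym}_5(\RR^d)$ in the sense of \rcite{DoheW13}. Completing route two therefore amounts to proving that the map $A\mapsto\big(p_{A_{\Re}}|_{\RR},\,p_{A_{\Im}}|_{\RR}\big)$ is injective on $H_{(5,5)}^{\Cli_n}$; completing route one sidesteps this issue but requires a genuinely new moment computation for the Clifford group. I would also expect the small cases $n\le 3$ to need separate, elementary treatment, as elsewhere in the paper -- and for $n=1$ the conjecture is in fact already verified in \sref{sec:qubit}.
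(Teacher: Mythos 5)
The statement you are asked to prove is Conjecture~\ref{conj:5}, which the paper itself leaves open -- there is no proof in the paper to compare against, and the authors are explicit that the injectivity gap you flag is precisely what stops them from upgrading the real-valued result \cite[Corollary~4.13]{NebeRS01} to the complex Clifford group. Your proposal correctly recognizes this and gives an accurate, somewhat more explicit rendering of the paper's own reduction in Section~\ref{sec:harmonic}: the Clifford twirl $T(\psi)$ lives in $L(\Sym_5(\CC^d))^{\Cli_n}$, the $3$-design property together with Proposition~\ref{prop:harmoniclanguage} kill the $H_{(i,i)}$-components for $1\le i\le 3$ and pin the $H_{(4,4)}$-component to a multiple of $\epsilon(\psi)$, and the conjecture follows once $H_{(5,5)}^{\Cli_n}=0$. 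Spelling out the formula $T(\psi)=P_{[5]}/D_{[5]}+c\,\epsilon(\psi)\,A_5+R(\psi)$, and noting that the easy direction ($5$-design $\Rightarrow$ $4$-design $\Rightarrow$ $\epsilon(\psi)=0$) is unconditional, are useful elaborations the paper does not state. The alternative route via a direct computation of $\Phi_5(\Cli_n)$ is a sensible additional suggestion.

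One small overclaim: you assert that the conjecture is \emph{equivalent} to $H_{(5,5)}^{\Cli_n}=0$. Your twirl formula establishes the implication (vanishing implies the conjecture). The converse would require showing that no nonzero invariant $p_B\in H_{(5,5)}^{\Cli_n}$ can vanish identically on the zero locus of $\epsilon$ on the unit sphere; this is plausible (it would follow, for instance, from irreducibility of $\epsilon$ together with the observation that a bidegree-$(5,5)$ multiple of $\epsilon$ landing in the harmonic subspace would have to coincide with the non-harmonic polynomial $\|\psi\|_2^2\epsilon$), but you have not established it. This is harmless, since you only need the implication. The bottom line is the same as in the paper: the conjecture remains open, reducible to the vanishing of $H_{(5,5)}^{\Cli_n}$, which neither the paper nor your proposal proves; your analysis of the real-to-complex descent and its failure mode matches the authors' discussion exactly.
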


There are several pieces of evidence in favor of this conjecture:
\begin{enumerate}
	\item
	\Cref{conj:5} holds when $n=1$ according to  the discussion in \sref{sec:qubit}. It is also supported  by numerical calculation when $n=2$
	
	\item
	The argument works for $t=1,2,3$.
	\item
	The set of matrices in $L(\Sym_i(\RR^d))$ that is orthogonal to $\operatorname{MSym}_{i}(\RR^d)$ is of measure zero.
\end{enumerate}
It would be interesting to verify this conjecture, as well as to re-prove the statement of Ref.~\cite{NebeRS01} using just the tools of the present paper.

Even if a simple way of turning general harmonic invariants of the complex Clifford group into those for the real Clifford group could be constructed, the results of the present work and those of Refs.~\cite{Side99,Rung96,NebeRS01,NebeRS06} would still differ in scope.
On the one hand, our results are stronger, as they allow for a decomposition of the entire space $\left(\CC^d\right)^{\otimes 4}$ under $\Cli_n$, as opposed to just the totally symmetric subspace.
On the other hand, the cited references are stronger by giving a characterization of the invariant polynomials of \emph{any} degree (in terms of weight enumerator polynomials of certain binary codes), while we restrict attention to degree~4.

\section{Summary}

The most prominent unitary $t$-design considered in quantum
information is the multi-qubit Clifford group, which is a unitary 3-design, but, unfortunately,  not a 4-design. Accordingly, Clifford orbits are 3-designs, but generally not 4-designs. The lack of an explicit family of well structured 4-designs has been a major limitation in the applications of $t$-designs for derandomizing constructions that rely on random vectors.

In this work we showed  that although Clifford orbits do not constitute $4$-designs, their $4$th moments are well-behaved such  that
for several major applications,  including phase retrieval and quantum state discrimination,
typical Clifford orbits turn out to perform  as well
as $4$-designs or Gaussian random vectors would. Moreover, we gave various constructions of exact 4-designs and approximations  of arbitrarily high precision to serve for more demanding applications.
In order to achieve this goal, we determined all irreducible components that appear in
the $4$th tensor power of the Clifford group. It turns out that the structure of these representations is completely captured by Schur-Weyl duality and a special stabilizer code. In addition to the applications mentioned above, our results  may  help construct exact unitary 4-designs or better approximations. In the course of our study, we also discovered several  results concerning the representations of the discrete symplectic group, which may be of interest to pure mathematician.

Our work also leaves several open problems, which deserve further study.
\begin{enumerate}
	\item Is there any orbit of the Clifford group that forms a $t$-design for $t>4$ (c.f.~\cref{conj:5})? The answer is positive when $n=1$. It seems that the same could hold for larger $n$.
	
	\item What is the maximal  $t$ such that there is an orbit of the Clifford group that forms a $t$-design. The answer is 7 when $n=1$. How about approximate $t$-designs?
	
	\item Prove  \cref{con:Singer4design}.
	
	\item Construct unitary 4-designs based on the Clifford group.
\end{enumerate}

More generally, it would be desirable to give an explicit description of the commutant of higher tensor powers of the Clifford group -- maybe similar to the characterization of invariant polynomials of the Clifford group in terms of weight enumerator polynomials described in Refs.~\cite{Rung96,NebeRS01,NebeRS06}.
There is a potentially simpler problem.
Central to our construction was the stabilizer projector $P_{n,k}$.
It belongs to a stabilizer code in $(\CC^d)^{\otimes 4}$ that is not a tensor product itself.
Similarly, the recent work \rcite{NezaW16} identifies an element of the commutant of a tensor power of the Clifford group, that is itself a non-factoring Clifford operation on the tensor product space.
If a general explicit description of the commutant of powers of the Clifford group might not be realistically available, one could ask how far one can go by classifying those commuting elements that can themselves be expressed in terms of Clifford theory or related constructions. 

\section*{Acknowledgments}

This work has been supported by the Excellence Initiative of the German Federal and State Governments (Grant ZUK 81), the ARO under contract W911NF-14-1-0098 (Quantum Characterization, Verification, and Validation), and the DFG (SPP1798 CoSIP). 
Major parts of this project were undertaken while DG and RK participated in the \emph{Mathematics of Signal Processing} program of the Hausdorff Research Institute of Mathematics at the University of Bonn.

\appendix

\section{Alternative proof of  \lref{lem:dimensions}}
In this appendix,  we present an alternative approach for computing the dimensions of $W_\lambda^+$ defined in \sref{sec:Main}, thereby yielding an alternative proof  of  \lref{lem:dimensions}. In the course of study, we also  construct explicit orthonormal bases for $W_{[4]}^+$ and $W_{[1^4]}^+$. 

To achieve our goal, we first construct   an orthonormal basis for $V_{n,4}$ and determine the orbits of basis elements under the action of the symmetric group $S_4$. When $n=1$, one orthonormal basis of $V_{n,4}$ is composed of the following four states,
\begin{equation}
\begin{aligned}
|\phi_0\rangle &:= \frac{1}{\sqrt{2}}(|0000\rangle + |1111\rangle), \\
|\phi_1\rangle &:= \frac{1}{\sqrt{2}}(|1001\rangle + |0110\rangle), \\
|\phi_2\rangle &:= \frac{1}{\sqrt{2}}(|0101\rangle + |1010\rangle), \\
|\phi_3\rangle &:= \frac{1}{\sqrt{2}}(|0011\rangle + |1100\rangle).
\end{aligned}
\end{equation} 
The symmetric group $S_4$ (permuting the four tensor factors) fixes $|\phi_0\rangle$ and acts like
$S_3$ on  $|\phi_1\rangle, |\phi_1\rangle, |\phi_2\rangle$. Since
$V_{n,4}=V_{1,4}^{\otimes n}$ for general $n$, one orthonormal basis  of $V_{n,4}$ is composed of the following $4^n$ states,
\begin{equation}
|\phi_{i_1i_2,\ldots,i_n}\rangle=|\phi_{i_1}\rangle\otimes \cdots \otimes |\phi_{i_n}\rangle, \quad i_1, i_2, \ldots, i_n\in  \{0,1,3,4\}.
\end{equation}
Here each state in the basis is labeled by a length-$n$ string $i_1, \dots,
i_n$  with $i_j \in \{0,1,3,4\}$. Each permutation in the symmetric group $S_4$ induces  a permutation on the basis states and a corresponding permutation on the strings, which acts on all letters simultaneously. The orbits on the strings divide into three types as described as follows.
\begin{enumerate}
	\item
	One orbit containing $0^{\times n}$, referred to as type I  orbit below.
	\item
	Any string in $\{0,i\}^{\times n}$ (for given $i\in\{1,2,3\}$) excluding $0^{\times n}$
	generates an orbit of length $3$. There are $2^n-1$ such
	orbits of length 3,  referred to as type II  orbits below.
	\item	The
	remaining strings have either  two or three
	distinct non-zero letters and  are partitioned into orbits of length $6$,  referred to as type III orbits below. The number of such orbits is 
	\begin{equation}
	\frac{4^n-3\times 2^n+2}{6}=
	\frac{(2^n-2)(2^n-1)}{6}=\frac{(d-2)(d-1)}{6}.
	\end{equation}	
\end{enumerate}
The  total number of orbits is 
\begin{equation}
2^n + \frac{4^n-3\times 2^n+2}{6}=
\frac{4^n+3\times 2^n+2}{6}=
\frac{(2^n+2)(2^n+1)}{6}=\frac{(d+2)(d+1)}{6}.
\end{equation}
The strings corresponding to the three types of orbits are  referred to as type I, II, III strings, respectively.  The stabilizer  of a type I string is $S_4$, that  of a type II string is  a Sylow-2 subgroup of $S_4$, and that of a type III string is  the unique  order-4 normal subgroup of $S_4$.

Now we are ready to compute the dimensions of  $W_{\lambda}^+$.  Let $\orb(s)$ denote the orbit of the string $s\in \{0,1,2,3\}^n$ under the action of $S_4$. According to \eref{eq:Plambda}, 
\begin{equation}\label{eq:Plambdaphi}
P_\lambda|\phi_s\rangle=\frac{d_\lambda}{24}\sum_{\sigma\in S_4} \chi_\lambda(\sigma) U_\sigma|\phi_s\rangle=\frac{d_\lambda}{24}\sum_{r\in \orb(s)}\left( \sum_{\sigma\in S_4| \sigma(s)=r}\chi_\lambda(\sigma)\right) |\phi_r\rangle,
\end{equation}
where $ \sum_{\sigma\in S_4| \sigma(s)=r}\chi_\lambda(\sigma)$ is the sum of $\chi_\lambda(\sigma)$ over a coset of the stabilizer of $s$. For example,
\begin{equation}\label{eq:ProjExample}
P_\lambda|\phi_{0\cdots0}\rangle=\begin{cases}
|\phi_{0\cdots0}\rangle & \lambda=[4],\\
0&\mbox{otherwise}.
\end{cases}
\end{equation}

When $\lambda=[4]$, $d_\lambda=1$ and $\chi_\lambda(\sigma)=1$ for all $\sigma\in 
S_4$. Consequently,
\begin{equation}
P_{[4]}|\phi_s\rangle=\frac{1}{|\orb(s)|}\sum_{r\in \orb(s)}|\phi_r\rangle.
\end{equation}
Note that $P_{[4]}|\phi_s\rangle\in W_{[4]}^+$ only depends on  $\orb(s)$ and that the states corresponding to different orbits are orthogonal. Let $\mathscr{S}$ be a subset of $\{0,1,3,4\}^n$ that contains exactly one string from each orbit. Then the set 
\begin{equation}\label{eq:SymBasis}
\{\sqrt{|\orb(s)|}P_{[4]}|\phi_s\rangle \,|\,  s\in \mathscr{S} \}
\end{equation}
forms an orthonormal basis for $W_{[4]}^+$. In particular, the dimension of $W_{[4]}^+$ is equal to the total number of orbits of strings, that is,
\begin{equation}
D_{[4]}^+=\dim(W_{[4]}^+)=\frac{(d+2)(d+1)}{6}.
\end{equation}

Now consider the subspace $W_{[1^4]}^+$. Note that $P_{[1^4]}|\phi_s\rangle=0$ when $s$ is an type I or type II string. An orthonormal basis for $W_{[1^4]}^+$ is given by
\begin{equation}
\{\sqrt{|\orb(s)|}P_{[1^4]}|\phi_s\rangle\,|\,  \mbox{$s\in \mathscr{S}$ is of type III}  \}.
\end{equation}
The dimension of $W_{[1^4]}^+$ is equal to the number of type III orbits, that is,
\begin{equation}
D_{[1^4]}^+=\dim(W_{[1^4]}^+)=\frac{(d-2)(d-1)}{6}.
\end{equation}

It is more involved to compute the dimension of  $W_{[2,2]}^+$. Fortunately,  this task can be avoided if we can compute the dimensions of $W_{[2,1,1]}^+$ and $W_{[3,1]}^+$. It turns out that $P_{[2,1,1]}|\phi_s\rangle=0$ and $P_{[3,1]}|\phi_s\rangle=0$ for all strings $s\in \{0,1,2,3\}^n$. This conclusion follows from \eref{eq:ProjExample} when $s$ is a type I string, that is $s=0\cdots0$. When $s$ is a type II or III string, this conclusion follows from \eref{eq:Plambdaphi} and \lsref{lem:CosetCharSylow} , \ref{lem:CosetChar} below, recall that the stabilizer of a type II string is  a Sylow-2 subgroup of $S_4$ and that of a type III string is  the unique  order-4 normal subgroup of $S_4$. Consequently, both $W_{[2,1,1]}^+$ and $W_{[3,1]}^+$ have dimension 0, so  that 
\begin{equation}
D_{[4]}^++D_{[1^4]}^++2D_{[2,2]}^+=d^2,
\end{equation}
which implies that $D_{[2,2]}^+=(d^2-1)/3$.

\begin{lemma}\label{lem:CosetCharSylow}
	Let $G$ be a Sylow 2-subgroup of $S_4$. Then $\sum_{\sigma\in g G}\chi_\lambda(\sigma)=\sum_{\sigma\in  Gg}\chi_\lambda(\sigma)=0$ for $\lambda=[2,1,1], [3,1]$ and all $g\in S_4$.
\end{lemma}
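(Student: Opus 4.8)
The plan is to reduce the vanishing of every coset sum to the vanishing of a single fixed-point space, which can then be read off in one line from \tref{tab:CharS4}. Write $\rho_\lambda$ for the irrep of $S_4$ on $V_\lambda$ with character $\chi_\lambda$, and let $P_G := \frac{1}{|G|}\sum_{h\in G}\rho_\lambda(h)$ be the projector onto the $G$-invariant subspace $V_\lambda^G \subseteq V_\lambda$. Since $\sum_{h\in G}\rho_\lambda(gh) = |G|\,\rho_\lambda(g)\,P_G$ and $\sum_{h\in G}\rho_\lambda(hg) = |G|\,P_G\,\rho_\lambda(g)$, taking traces gives
\begin{equation*}
	\sum_{\sigma\in gG}\chi_\lambda(\sigma) = |G|\,\tr\bigl(\rho_\lambda(g)\,P_G\bigr),
	\qquad
	\sum_{\sigma\in Gg}\chi_\lambda(\sigma) = |G|\,\tr\bigl(P_G\,\rho_\lambda(g)\bigr).
\end{equation*}
Hence both families of coset sums vanish for every $g\in S_4$ as soon as $P_G = 0$, i.e.\ as soon as $V_\lambda^G = 0$.

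It then remains to show $V_{[2,1,1]}^G = V_{[3,1]}^G = 0$. By Frobenius reciprocity, $\dim V_\lambda^G$ equals $\langle \chi_\lambda, \operatorname{Ind}_G^{S_4}\mathbf{1}_G\rangle_{S_4}$, the multiplicity of $\lambda$ in the permutation representation of $S_4$ on the three cosets $S_4/G$. Because $|G| = 8$, the normal core of $G$ in $S_4$ is the unique normal subgroup $V_4 = \{e,(12)(34),(13)(24),(14)(23)\}$ of order $4$; the action on $S_4/G$ therefore factors through $S_4/V_4 \cong S_3$ and is the natural action on three points, so $\operatorname{Ind}_G^{S_4}\mathbf{1}_G = [4]\oplus[2,2]$, which contains neither $[2,1,1]$ nor $[3,1]$. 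Thus $P_G=0$ for these two $\lambda$, and with the identity above the lemma follows. Alternatively, and just as fast, I would take $G$ to be the stabilizer in $S_4$ of the pair partition $\bigl\{\{1,2\},\{3,4\}\bigr\}$ --- which meets the conjugacy classes of $S_4$ in the identity, the two transpositions $(12),(34)$ of cycle type $(2,1^2)$, the three double transpositions of type $(2^2)$, and the two $4$-cycles $(1324),(1423)$ --- and read off from \tref{tab:CharS4} that $\sum_{\sigma\in G}\chi_{[2,1,1]}(\sigma) = 3 + 2(-1) + 3(-1) + 2(1) = 0$ and $\sum_{\sigma\in G}\chi_{[3,1]}(\sigma) = 3 + 2(1) + 3(-1) + 2(-1) = 0$; since all Sylow $2$-subgroups are conjugate and $\chi_\lambda$ is a class function, the choice of representative is immaterial.

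The only computational ingredient --- and hence the ``main obstacle'', though it is entirely elementary --- is correctly identifying which conjugacy classes of $S_4$ meet a Sylow $2$-subgroup, equivalently identifying its normal core; everything else is formal manipulation. I expect the same mechanism (a coset sum of $\chi_\lambda$ vanishing precisely when $\lambda$ is absent from the relevant fixed subspace) to reappear in the companion \lref{lem:CosetChar}, where $G$ is replaced by the order-$4$ normal subgroup $V_4$ and one instead uses $\operatorname{Ind}_{V_4}^{S_4}\mathbf{1} = [4]\oplus[1^4]\oplus 2\,[2,2]$, which again omits both $[2,1,1]$ and $[3,1]$.
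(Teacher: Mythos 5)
Your argument is correct, and it takes a genuinely different route from the paper's. The paper first verifies $\sum_{\sigma\in G}\chi_\lambda(\sigma)=0$ by counting how many elements of each cycle type a Sylow $2$-subgroup contains and consulting \tref{tab:CharS4}; it then extends the vanishing to the two nontrivial cosets by an ad hoc argument: picking an order-$3$ element $g$, noting the three left cosets are $G, gG, g^{-1}G$, using $\sum_{\sigma\in S_4}\chi_\lambda(\sigma)=0$ to see the two unknown coset sums are opposite, and finding $h\in G$ with $hgGh^{-1}=g^{-1}G$ to see they are equal. Your observation that $\sum_{\sigma\in gG}\chi_\lambda(\sigma)=|G|\tr\bigl(\rho_\lambda(g)P_G\bigr)$ makes that second step unnecessary: once $P_G=0$ (equivalently, once the sum over $G$ itself vanishes, since $\tr P_G=\dim V_\lambda^G$), every left and right coset sum vanishes at once, for both lemmas simultaneously. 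The Frobenius-reciprocity computation of $\operatorname{Ind}_G^{S_4}\mathbf{1}_G=[4]\oplus[2,2]$ via the normal core $V_4$ is a clean, conceptual replacement for the paper's cycle-type count, and your fallback direct count agrees with the paper's. What your approach buys is uniformity and brevity (the conjugation trick with order-$3$ elements disappears, and \lref{lem:CosetChar} follows by the identical mechanism); what the paper's approach buys is self-containedness at the level of the character table alone, without invoking induced representations.
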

\begin{lemma}\label{lem:CosetChar}
	Let $H$ be the unique order-4 normal subgroup of $S_4$. Then $\sum_{\sigma\in g H}\chi_\lambda(\sigma)=\sum_{\sigma\in  Hg}\chi_\lambda(\sigma)=0$ for $\lambda=[2,1,1], [3,1]$ and all $g\in S_4$.
\end{lemma}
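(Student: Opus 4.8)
The plan is to reduce the coset sums to the trace of a projector and then exploit that $H=V_4=\{e,(12)(34),(13)(24),(14)(23)\}$ is \emph{normal} in $S_4$. Write $\rho_\lambda$ for the irreducible representation of $S_4$ with character $\chi_\lambda$ and representation space $V_\lambda$. For any $g\in S_4$ one has the standard identity
\[
\sum_{\sigma\in gH}\chi_\lambda(\sigma)
=\tr\!\Bigl(\rho_\lambda(g)\sum_{h\in H}\rho_\lambda(h)\Bigr)
=|H|\,\tr\bigl(\rho_\lambda(g)\,\Pi_H\bigr),
\]
where $\Pi_H:=\tfrac1{|H|}\sum_{h\in H}\rho_\lambda(h)$ is the orthogonal projector onto the subspace $V_\lambda^H$ of $H$-fixed vectors; the same computation with the product taken in the other order gives the identical value for $\sum_{\sigma\in Hg}\chi_\lambda(\sigma)$. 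Hence it suffices to prove $\Pi_H=0$, i.e.\ $V_\lambda^H=\{0\}$, for $\lambda=[2,1,1]$ and $\lambda=[3,1]$.

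First I would observe that, since $H\trianglelefteq S_4$, the fixed space $V_\lambda^H$ is an $S_4$-invariant subspace of $V_\lambda$; as $\rho_\lambda$ is irreducible it is therefore either $\{0\}$ or all of $V_\lambda$. The latter case occurs precisely when $H$ acts trivially, i.e.\ when $\rho_\lambda$ factors through the quotient $S_4/H\cong S_3$. But $S_3$ has irreducible representations only of dimensions $1,1,2$, whereas $[2,1,1]$ and $[3,1]$ are both $3$-dimensional (see Table~\ref{tab:CharS4}); so neither factors through, forcing $V_\lambda^H=\{0\}$ and $\Pi_H=0$. This yields the lemma.

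Alternatively, and as a sanity check, one can argue entirely concretely: the six cosets of $H$ are the fibres of the projection $S_4\to S_4/H\cong S_3$, and comparing the orders of $S_4$-conjugacy classes with those of elements of $S_3$ shows that each such fibre is one of the three multisets $\{e\}\cup\{\text{three double transpositions}\}$, $\{\text{four }3\text{-cycles}\}$, or $\{\text{two transpositions}\}\cup\{\text{two }4\text{-cycles}\}$; for each of these the sums of $\chi_{[2,1,1]}$ and of $\chi_{[3,1]}$ vanish by Table~\ref{tab:CharS4} (namely $3+3(-1)=0$, $4\cdot 0=0$, and $2(\mp1)+2(\pm1)=0$, the upper sign referring to $[2,1,1]$).

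I do not anticipate a genuine obstacle. The one point that deserves care is that the clean argument uses normality of $H$ in an essential way, so it does \emph{not} transfer verbatim to Lemma~\ref{lem:CosetCharSylow}, where the Sylow $2$-subgroup is not normal and one must instead argue via the decomposition of an induced representation (or by direct inspection of cosets); but for the present lemma the normal-subgroup argument is completely standard and short.
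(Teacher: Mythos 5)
Your proof is correct, but it takes a genuinely different route from the paper's. The paper first checks from \tref{tab:CharS4} that the sum over $H$ itself vanishes, then invokes \lref{lem:CosetCharSylow} to kill the character sums over the four cosets contained in Sylow $2$-subgroups, and finally handles the two remaining cosets (represented by order-$3$ elements) via a conjugation argument that exploits normality of $H$. You instead identify the coset sum with $|H|\tr\bigl(\rho_\lambda(g)\Pi_H\bigr)$ and show that the averaging projector $\Pi_H$ vanishes outright: normality makes the fixed space $V_\lambda^H$ an $S_4$-subrepresentation, and irreducibility together with the fact that the three-dimensional irreps cannot factor through $S_4/H\cong S_3$ (whose irreps have dimension at most $2$) forces $V_\lambda^H=\{0\}$. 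This is more conceptual -- it explains the vanishing as the statement that $\rho_\lambda|_H$ contains no copy of the trivial representation -- and, unlike the paper's argument, it does not rely on \lref{lem:CosetCharSylow}; since the paper's Remark derives that lemma \emph{from} the present one, your route also removes an apparent circularity in the presentation. Your fallback enumeration of the coset contents is likewise correct: the fibres over $S_4/H\cong S_3$ have cycle-type content $\{(1^4)\}\cup 3\times(2^2)$, $4\times(3,1)$, and $2\times(2,1^2)\cup 2\times(4)$, each of which sums to zero in both characters by \tref{tab:CharS4}. You are also right to flag that the projector argument does not transfer to \lref{lem:CosetCharSylow}, where the Sylow $2$-subgroup is not normal.
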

\begin{remark}
	\Lref{lem:CosetCharSylow} follows from \lref{lem:CosetChar} since each coset of a Sylow 2-subgroup of $S_4$ is a union of two cosets of  the unique order-4 normal subgroup of $S_4$.
	The two lemmas can be verified directly based on \tref{tab:CharS4}. Nevertheless, we shall present more instructive proofs below. 
\end{remark}

\begin{proof}[Proof of \lref{lem:CosetCharSylow}]
	Suppose $\lambda=[2,1,1]$ or $\lambda=[3,1]$. 
	Note that $G$ is isomorphic to the order-8 dihedral group; it 
	has one element of cycle type $(1^4)$, three elements of cycle type $(2^2)$, two elements of cycle type $(2,1^2)$, and two elements of cycle type $(4)$. Therefore, $\sum_{\sigma\in  G}\chi_\lambda(\sigma)=0$  according to \tref{tab:CharS4}. Let $g$ be any order-3 element in $S_4$;  then $G, gG, g^{-1}G$ are three distinct left  cosets of $G$. Since  $\sum_{\sigma\in  G}\chi_\lambda(\sigma)=0$ and  $\sum_{\sigma\in  S_4}\chi_\lambda(\sigma)=0$, it follows that 
	\begin{equation}\label{eq:CharS}
	\sum_{\sigma\in  gG}\chi_\lambda(\sigma)+\sum_{\sigma\in  g^{-1}G}\chi_\lambda(\sigma)=0.
	\end{equation}
	On the other hand,   by conjugation $G$ acts transitively on the eight order-3 elements in $S_4$, so there exists an element $h$ in $G$ such that $hgh^{-1}=g^{-1}$, that is, $hg G h^{-1}=g^{-1}G$. It follows that 
	\begin{equation}\label{eq:CharE}
	\sum_{\sigma\in  gG}\chi_\lambda(\sigma)=\sum_{\sigma\in  g^{-1}G}\chi_\lambda(\sigma),
	\end{equation}
	which, together with \eref{eq:CharS}, implies that 
	\begin{equation}
	\sum_{\sigma\in  gG}\chi_\lambda(\sigma)=\sum_{\sigma\in  g^{-1}G}\chi_\lambda(\sigma)=0. 
	\end{equation}
	In conclusion, $\sum_{\sigma\in g G}\chi_\lambda(\sigma)=0$ for $\lambda=[2,1,1], [3,1]$ and all $g\in S_4$. The equality $\sum_{\sigma\in  Gg}\chi_\lambda(\sigma)=0$ follows from the same reasoning.
\end{proof}

\begin{proof}[Proof of \lref{lem:CosetChar}]
	Suppose $\lambda=[2,1,1]$ or $\lambda=[3,1]$. 
	Note that $H$ has one element of cycle type $(1^4)$ and three elements of cycle type $(2^2)$.  Therefore, $\sum_{\sigma\in  H}\chi_\lambda(\sigma)=0$  according to \tref{tab:CharS4}. The symmetric group $S_4$ has three Sylow 2-subgroups, each of which is the union of two cosets of $H$. Let $G$ be any Sylow 2-subgroup of $S_4$, then $\sum_{\sigma\in  G}\chi_\lambda(\sigma)=0$ according to \lref{lem:CosetCharSylow}, which implies that $\sum_{\sigma\in  G\setminus H}\chi_\lambda(\sigma)=0$.

	Let $g_j$ for $j=1, 2, 3, 4, 5, 6$ be the coset representatives of $H$, with $g_1$ being the identity. Above analysis shows that $\sum_{\sigma\in  g_j H}\chi_\lambda(\sigma)=0$ for four of the six cosets, say, $j=1,2,3,4$, so that  
	\begin{equation}\label{eq:CharS2}
	\sum_{\sigma\in  g_5H}\chi_\lambda(\sigma)+\sum_{\sigma\in  g_6H}\chi_\lambda(\sigma)=0.
	\end{equation}
	In addition, the two coset representatives $g_5, g_6$ necessarily have order 3 since otherwise they would belong to certain Sylow 2-subgroups of $S_4$. Observing that $H$ is normal in $S_4$ and that all order-3 elements in $S_4$ are conjugated to each other, we conclude that 
	\begin{equation}\label{eq:CharE2}
	\sum_{\sigma\in  g_5H}\chi_\lambda(\sigma)=\sum_{\sigma\in  g_6H}\chi_\lambda(\sigma),
	\end{equation}
	which, together with \eref{eq:CharS2},  implies that 
	\begin{equation}
	\sum_{\sigma\in  g_5H}\chi_\lambda(\sigma)=\sum_{\sigma\in  g_6H}\chi_\lambda(\sigma)=0.
	\end{equation}
	In conclusion, $\sum_{\sigma\in g H}\chi_\lambda(\sigma)=0$ for $\lambda=[2,1,1], [3,1]$ and all $g\in S_4$. As an immediate consequence, $\sum_{\sigma\in  Hg}\chi_\lambda(\sigma)=0$ since  left cosets and right cosets of $H$ coincide.
\end{proof}

\section{Two natural sets of vectors in the stabilizer code $V_{n,4}$}

\subsection{An interesting basis for the stabilizer code}

Recall the basis-dependent \emph{vectorization map} which sends
matrices to tensors
\begin{align*}
\operatorname{vec}: L(\CC^d) &\to \CC^d\otimes \CC^d \\
\sum_{i,j} L_{i,j} |e_i\rangle \langle e_j| 
&\mapsto
\sum_{i,j} L_{i,j} |e_i\rangle \otimes  | e_j\rangle. 
\end{align*}
It fulfils 
\begin{equation*}
\operatorname{vec}(A B C) = A \otimes C^\rmT \operatorname{vec}(B),
\end{equation*}
where the transpose is to be taken with respect to the same basis
in which the vectorization map is defined.

With this notion, note that
\begin{equation*}
\mathcal{B} := \{ \operatorname{vec}(W_a) \otimes \operatorname{vec}(W_a) \,|\, a \in \FF_2^{2n} \}
\end{equation*}
defines a set of $d^2$ orthogonal vectors in $(\CC^d)^{\otimes 4}$. 
One easily verifies that $\mathcal{B}$ is contained in the stabilizer
code $V_{n,4}$:
\begin{align*}
\Ten^4(W_b)\,\big(\vectorize(W_a) \otimes \vectorize(W_a)\big)
&=
\vectorize(W_b W_a W_b^\rmT) 
\otimes
\vectorize(W_b W_a W_b^\rmT)  \\
&=
(\pm \vectorize(W_a)) \otimes
(\pm \vectorize(W_a))  \\
&= \vectorize(W_a)\otimes \vectorize(W_a).
\end{align*}
It thus forms an orthogonal basis of the code.
A similar calculation shows that the real elements of the Clifford
group act on this basis by permutation
\begin{equation*}
\Ten^4(U)\,\big(\vectorize(W_a) \otimes \vectorize(W_a)\big)
=
\vectorize(W_{Fa}) \otimes \vectorize(W_{Fa}),
\end{equation*}
where $F\in \Sp(2n,\FF_2)$ is the symplectic map associated with
$U/\mathcal{P}_n$. Complex elements of the Clifford group $\Cli_n$ still act by signed permutation on the basis -- i.e.\ they permute the elements and
may multiply them with signs $\pm 1$. The latter fact can be  verified explicitly by inspecting the action of  those generators of the Clifford group as discussed in \sref{sec:PauliCli}, all of which are real except for the phase gate. 
In particular, all Clifford unitaries act \emph{monomially}.\footnote{
	Certain monomial representations of the Clifford group have been studied
	before in Ref.~\cite{ApplBBG12}. However, their results are
	incomparable to our findings, as they classified monomial
	representations that also contain a faithful representation of
	the Pauli group $\mathcal{P}_n$.
} The above discussion also implies that the representation of the  symplectic group $\Sp(2n,\FF_2)$ afforded by the stabilizer code $V_{n,4}$ is a signed permutation representation in the above basis.

We have not used this basis affording a monomial representation of
the Clifford group and the symplectic group in the present paper.
However, we speculate that one might use it to give a more explicit
derivation of the characters described in \sref{sec:repsofsp}.

\subsection{An interesting orbit}

Here, we describe a Clifford orbit of vectors in $W^+_{[4]}$ that is
naturally labeled by isotropic subspaces $M\subset \FF_2^{2n}$ of
dimension $\dim M = n$ (such spaces are called \emph{maximally
	isotropic}). 
The authors are not aware of any application of this particular
configuration of vectors.
However, we feel that the construction is sufficiently canonic to
deserve a mention.

Choose a maximally isotropic space $M\subset \FF_2^{2n}$ and define
\begin{align*}
S_n'(M) &:=
\{ W_a \otimes W_a \otimes \Id\otimes \Id \,|\, a \in M
\}, \\
S_n''(M) &:=
\{ 
W_a \otimes \Id \otimes W_a \otimes \Id
\,|\, a \in M
\}.
\end{align*}
Then the union $S_{n,4} \cup S_n' \cup S_n''$ generates a maximal stabilizer group
on $4n$ qubits and thus determines a stabilizer state $|\psi_M\rangle \in V_{n,4}$, where $S_{n,4}$ is defined in \eref{eqn:stabgroup}.

Consider the concrete example 
\begin{equation*}
M_Z=\{ (p_1, 0, p_2, 0, \dots, p_n, 0) \,|\, p_1, p_2, \ldots, p_n \in \FF_2 \}.
\end{equation*}
Then $W_a$ for $a\in M_Z$ is an element of $\{\sigma_{(0,0)}, \sigma_{(1,0)}
\}^{\otimes n}$. In this particular case, one verifies that
\begin{equation*}
\ket{\psi_{M_Z}} 
= 
\frac1{2^{n/2} }
\sum_{x\in \FF_2^n} \ket{x}^{\otimes 4} 
= 
\frac1{2^{n/2} }
|\phi_0\rangle^{\otimes n} 
\in H_{[4]}^+	
\simeq
W_{[4]}^+,
\end{equation*}
where $|\phi_0\rangle = (|0000\rangle + |1111\rangle)/\sqrt{2}$.

Now consider a Clifford unitary $U\in C_n$, associated with the
symplectic transformation $F$. Then, for any maximally isotropic subspace $M$ of $\bbF^{2n}$, 
\begin{align*}
\Ten^4(U) S'_n(M) \Ten^4(U)^\dagger
&=
S'_n(F M)
\end{align*}
and the same is true for $S''_n(M)$. We conclude that
\begin{equation*}
\Ten^4(U) \ket{\psi_M} \propto \ket{\psi_{F M}}.
\end{equation*}
Because the symplectic group acts transitively on maximal isotropic
subspaces, the Clifford group acts monomially (i.e.\ by permutation
and possibly multiplication with a phases) on the set 
\begin{equation*}
X := \{ \ket{\psi_M} \,|\, M \text{ max.\ isotropic} \}.
\end{equation*}
As $X$ is -- up to phases -- an orbit of the Clifford group generated
from $|\psi_{M_Z}\rangle \in W_{[4]}^+$, the entire set is contained in
$W_{[4]}^+$.

Because the projectors $|\psi_M\rangle\langle \psi_M|$ form an orbit
under the action by conjugation of the irreducible representation of
$\rmC_n$ on $W_{[4]}^+$, it follows from Schur's Lemma that $X$ is a
\emph{tight frame} on that space.
It is known (e.g.\ \cite{Gros06}) that
\begin{equation*}
|X| 
= |\{ \text{max.\ isotropic subspaces of } \FF_2^{2n} \}| 
= \prod_{i=1}^n (2^i + 1),
\end{equation*}
which implies that $|X|>D_{[4]}^+$. If $n\geq3$, then 
\begin{equation*}
|X| = 3D_{[4]}^+ \, \prod_{i=1}^{n-2} (2^i+1).
\end{equation*}
So $X$ is overcomplete as a frame.
We can compute the squared inner products between elements of $X$ from
the intersection of their stabilizer groups (c.f.\ e.g.\
\rcite{KuenG15}):
\begin{equation*}
|\braket{\psi_M}{\psi_{N}}|^2
=
\frac1{2^{4n}} 
2^{2n+2\dim(M\cap N)}
=
\left(
2^{\dim(M\cap N)-n}
\right)^2.
\end{equation*}
That number is  the square of what one would obtain for the
overlap-squared between $n$ qubit stabilizer states taken from bases
associated with, respectively, $M$ and $N$ \cite{KuenG15}.

\section{\label{sec:ProjBound}Proof of a generalization of  \eref{eq:Xibound}}
Here we prove a generalization of  \eref{eq:Xibound}, which also provides some insight on the entanglement property of the stabilizer code $V_{n,4}$.

\begin{lemma}\label{lem:ProjBound}
	Suppose $\psi_j$ for $j=1,2,3,4$ are four normalized  state vectors in dimension $d=2^n$. Then 
	\begin{equation}\label{eq:ProjBound}
	0\leq\tr\left(P_{n,4}\bigotimes_{j=1}^4 |\psi_j\rangle\langle\psi_j| \right)\leq \frac{1}{d}.
	\end{equation}
	If the upper bound is saturated, then $\psi_j$ for $j=1,2,3,4$ are stabilizer states that belong to a same stabilizer basis.
\end{lemma}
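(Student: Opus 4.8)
The plan is to dispatch the lower bound immediately, obtain the upper bound by two applications of the Cauchy--Schwarz inequality that reduce everything to the single-vector estimate \eref{eq:Xibound}, and then extract the saturation statement by chasing the equality conditions backwards. For the lower bound, note that since $4$ is a multiple of $4$ the operator $P_{n,4}$ is the \emph{orthogonal} projector onto $V_{n,4}$ (see \eref{eq:StabCodeProjection}), hence positive semidefinite; as $\bigotimes_{j=1}^{4}|\psi_j\rangle\langle\psi_j|$ is also positive semidefinite, its Hilbert--Schmidt pairing with $P_{n,4}$ is nonnegative.

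For the upper bound I would expand, using \eref{eq:StabCodeProjection} and writing $\Xi_a(\psi_j):=\tr(W_a|\psi_j\rangle\langle\psi_j|)\in\RR$ as in \eref{eqn:characteristic},
\[
\tr\Bigl(P_{n,4}\textstyle\bigotimes_{j=1}^{4}|\psi_j\rangle\langle\psi_j|\Bigr)=\frac1{d^{2}}\sum_{a\in\FF_2^{2n}}\prod_{j=1}^{4}\Xi_a(\psi_j).
\]
Applying Cauchy--Schwarz over the index $a$, first pairing $(\psi_1,\psi_2)$ against $(\psi_3,\psi_4)$ and then splitting each remaining pair, gives
\[
\Bigl|\sum_{a}\prod_{j=1}^{4}\Xi_a(\psi_j)\Bigr|\le\Bigl(\sum_{a}\Xi_a(\psi_1)^{2}\Xi_a(\psi_2)^{2}\Bigr)^{1/2}\Bigl(\sum_{a}\Xi_a(\psi_3)^{2}\Xi_a(\psi_4)^{2}\Bigr)^{1/2}\le\prod_{j=1}^{4}\|\Xi(\psi_j)\|_{\ell_4}.
\]
Since $\{W_a\}$ is a nice error basis one has $\|\Xi(\psi_j)\|_{\ell_2}^{2}=d$, and trivially $\|\Xi(\psi_j)\|_{\ell_\infty}\le1$, so $\|\Xi(\psi_j)\|_{\ell_4}^{4}\le\|\Xi(\psi_j)\|_{\ell_\infty}^{2}\|\Xi(\psi_j)\|_{\ell_2}^{2}\le d$, which is precisely \eref{eq:Xibound}. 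Hence the right-hand side is at most $d$ and the trace is at most $1/d$.

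For the saturation case I would run the equality conditions in reverse. Equality in the chain above forces $\|\Xi(\psi_j)\|_{\ell_4}^{4}=d$ for every $j$; together with $\|\Xi(\psi_j)\|_{\ell_2}^{2}=d$ and $\|\Xi(\psi_j)\|_{\ell_\infty}\le1$ this is possible only if $\Xi_a(\psi_j)\in\{0,\pm1\}$, with support $M_j:=\{a:\Xi_a(\psi_j)\ne0\}$ of size exactly $d$. The equality conditions of the two Cauchy--Schwarz steps then force the functions $a\mapsto\Xi_a(\psi_j)^{2}$ to be proportional within the pairs $(1,2)$ and $(3,4)$, with proportionality constant $1$ because the $\ell_4$-norms agree; hence $M_1=M_2$ and $M_3=M_4$. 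The remaining equality, together with the nonnegativity of $\prod_j\Xi_a(\psi_j)$ needed for the outer step, forces $M_1=M_3$. Thus $M_1=M_2=M_3=M_4=:M$. It then remains to see that a normalized $\psi$ with $|\Xi_a(\psi)|\in\{0,1\}$ and $|\{a:\Xi_a(\psi)\ne0\}|=d$ is a stabilizer state with isotropic subspace $M$: the condition $|\Xi_a(\psi)|=1$ is equivalent to $W_a|\psi\rangle=\Xi_a(\psi)|\psi\rangle$, the Pauli multiplication rule $W_aW_b\propto W_{a+b}$ then shows $M$ is closed under addition, \eref{eqn:pauli_commutation_relation} shows $M$ is isotropic (otherwise $W_aW_b|\psi\rangle$ would equal its own negative), and $\{\Xi_a(\psi)W_a:a\in M\}$ is an abelian group not containing $-\Id$ whose normalized sum equals $|\psi\rangle\langle\psi|$. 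Applying this to each $\psi_j$ and invoking $M_1=\dots=M_4=M$, the $\psi_j$ are simultaneous $\pm1$-eigenvectors of the pairwise-commuting Hermitian operators $\{W_a:a\in M\}$, i.e.\ elements of the single orthonormal eigenbasis (the ``stabilizer basis'') determined by $M$.

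The bulk of this argument is routine Cauchy--Schwarz bookkeeping; the one step that requires a little care is the final identification of the stabilizer structure from the purely numerical condition $|\Xi_a(\psi)|\in\{0,1\}$, which uses the Pauli composition cocycle and the fact that it is real-valued on commuting pairs. This is the standard stabilizer formalism and could equally be quoted from \cite{Gros06,gross2013stabilizer}.
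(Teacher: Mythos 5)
Your proof is correct and follows essentially the same route as the paper's: expand the trace via the characteristic functions, bound $\sum_a\prod_{j}\Xi_a(\psi_j)$ by $\prod_j\|\Xi(\psi_j)\|_{\ell_4}\le d$ (the paper phrases the first step as a single four-fold H\"older inequality, noting it is equivalent to repeated Cauchy--Schwarz, which is exactly what you do), and then chase the equality conditions. Your saturation analysis is a slightly more explicit version of the paper's -- spelling out why $\Xi_a(\psi_j)\in\{0,\pm1\}$ with coinciding supports yields a common maximal isotropic subspace and hence a common stabilizer basis -- but the substance is identical.
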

The upper bound in \eref{eq:ProjBound} means that the stabilizer code $V_{n,4}$ contains no product state. Moreover, it 
sets an upper bound $1/d$ for the fidelity between any pure state in $V_{n,4}$ and any product state, that is, a lower bound for the geometric measure of entanglement
of any  pure state in $V_{n,4}$. 
The upper bound
can be saturated if $\psi_j$ for $j=1,2,3,4$ are identical stabilizer states, but this is not necessary. For example, it can also be saturated if $\psi_1=\psi_2$ and $\psi_3=\psi_4$ are two orthogonal stabilizer states that belong to a same stabilizer basis. By contrast, the lower bound in \eref{eq:ProjBound} can be saturated if $\psi_1=\psi_2=\psi_3$ and $\psi_4$ are two orthogonal stabilizer states that belong to a same stabilizer basis.

\begin{proof}The lower bound is trivial since both $P_{n,4}$ and $\bigotimes_{j=1}^4 |\psi_j\rangle\langle\psi_j|$ are positive semidefinite. The upper bound can be derived as follows. 
	\begin{align}\label{eq:ProjBoundProof}
	&d^2\tr\left(P_{n,4}\bigotimes_{j=1}^4 |\psi_j\rangle\langle\psi_j| \right)=\sum_a \prod_{j=1}^4 \langle \psi_j| W_a|\psi_j\rangle
	\leq  \left[\prod_{j=1}^4 \sum_a(\langle \psi_j| W_a|\psi_j\rangle)^4\right]^{1/4}\nonumber\\
	&\leq \left[\prod_{j=1}^4 \sum_a(\langle \psi_j| W_a|\psi_j\rangle)^2\right]^{1/4}=d.
	\end{align}
	Here 
	the first inequality follows from repeated applications of the Cauchy inequality or the H\"older inequality. It 
	is saturated iff 
	\begin{equation}\label{eq:ProjBoundProof2}
	|\langle \psi_1| W_a|\psi_1\rangle| =|\langle \psi_2| W_a|\psi_2\rangle|=|\langle \psi_3| W_a|\psi_3\rangle|=|\langle \psi_4|W_a|\psi_4\rangle|,\qquad \prod_{j=1}^4 \langle \psi_j| W_a|\psi_j\rangle \geq0 \quad\forall a.
	\end{equation}
	The second inequality in \eref{eq:ProjBoundProof} is saturated iff each $\langle\psi_j| W_a|\psi_j\rangle$  takes on only one of the three values $0,\pm1$. In that case, each $\Xi(\psi_j)$ (recall that  $\Xi_a(\psi_j)=\langle \psi_j| W_a|\psi_j\rangle$) has exactly $d$ entries equal to $\pm1$, given that $\sum_a(\langle\psi_j| W_a|\psi_j\rangle)^2=d$ for $j=1,2,3,4$.
	Note that the set $\{a\in \bbF_2^{2n}\,|\,\langle\psi_j| W_a|\psi_j\rangle=\pm1\}$ for given $j$ must form a maximal isotropic subspace of the symplectic vector space $\bbF_2^{2n}$, so each $\psi_j$ is an eigenvector of a stabilizer group and is thus a stabilizer state by definition. If the two inequalities in \eref{eq:ProjBoundProof} are saturated simultaneously, then $\psi_j$ for $j=1,2,3,4$ must  be eigenvectors of a common stabilizer group due to \eref{eq:ProjBoundProof2}. In other words, they belong to a same stabilizer basis. 	
\end{proof}

\section{\label{sec:SecondMoment}Derivation of \eref{eq:SecondMoment}}

In this appendix, we derive the second moment of $\alpha_+(\psi)$, as presented in \eref{eq:SecondMoment}.
\begin{align}\label{aeq:SecondMoment}
\rmE[\alpha_+(\psi)^2]&=\frac{1}{D_{[8]}}\tr(P_{n,4}^{\otimes2} P_{[8]})=\frac{1}{d^4 D_{[8]}}\sum_{a,b} \tr\left[ P_{[8]} (W_a^{\otimes4}\otimes  W_b^{\otimes4})\right]=\frac{16(d^2+15d+68)}{d^2(d+3)(d+5)(d+6)(d+7)},
\end{align}
where $P_{n,4}$ is the  projector onto the stabilizer code $V_{n,4}$ discussed in \sref{sec:specialcode},  $P_{[k]}$ is the projector onto the $k$-partite symmetric subspace $\Sym_k(\bbC^d)$ of $(\bbC^d)^{\otimes k}$ with $d=2^n$,  $D_{[k]}$ is the rank of $P_{[k]}$ or the dimension of $\Sym_k(\bbC^d)$, and $W_a, W_b$ are $n$-qubit Pauli operators. In deriving the las equality in \eref{aeq:SecondMoment}, we have made use of the following formula
\begin{equation}\label{eq:TraceFormula}
\tr\left[ P_{[8]} (W_a^{\otimes4}\otimes  W_b^{\otimes4})\right]=\begin{cases}
D_{[8]} & W_a=W_b=1,\\
\frac{D_{[8]}}{D_{[4]}}\frac{3d^2+6d}{24} &W_a=1, W_b\neq1\; \mbox{or}\; W_b=1, W_a\neq1,\\
\frac{1}{2688}(7d^4+84d^3+308d^2+336d) &W_a=W_b\neq1,\\
\frac{1}{4480}(d^4+28d^3+236d^2+560d) &W_a,W_b\neq1, W_aW_b =W_bW_a,\\
\frac{1}{4480}(d^4+12d^3+44d^2+48d) &W_a,W_b\neq1, W_aW_b =-W_bW_a.\\
\end{cases}
\end{equation}
\begin{table}
	\caption{\label{tab:S8}Permutations of $S_8$ without cycle of odd length. $N_1$ is the number of permutations of a given cycle type; $N_2$ is the number of balanced  permutations (those in $\mathscr{A}$) of a given cycle type; $N_3=N_{3+}-N_{3-}$,  where $N_{3\pm}$ is the number  of permutations of a given cycle type  that belong to $\mathscr{A}_\pm$.  The sets $\mathscr{A}$ and  $\mathscr{A}_\pm$ are defined in the text. Note that $N_{3+}+N_{3-}=N_2$.	}
	
	\centering
	\begin{math}
	\begin{array}{c|ccccc}
	\hline\hline
	\mbox{cycle type} & (2^4) & (2^2, 4) & (4^2) & (2, 6) & (8) \\
	N_1 & 105 & 1260 & 1260 & 3360 & 5040 \\
	N_2& 9 & 252 & 684 & 1440 & 5040 \\
	N_3  & 9 & 108 & 108 & 288 & 432\\
	\hline\hline
	\end{array}
	\end{math}
\end{table}
To derive \eref{eq:TraceFormula}, we 
recall  the following facts,
\begin{align}
P_{[k]}&=\frac{1}{k!}\sum_{\sigma\in S_k}U_\sigma,  \label{eq:PSk}  \qquad \tr_k P_{[k]}=\frac{D_{[k]}}{D_{[k-1]}}P_{[k-1]},
\end{align}
where $\tr_k$ means the partial trace over party $k$. 
If  $a\neq 0$, then
\begin{equation}\label{eq:UWa}
\tr(U_\sigma W_a^{\otimes k})=
\begin{cases}
0 & \mbox{$\sigma$ contains a cycle of odd length},\\
d^{l(\sigma)} & \mbox{otherwise}.
\end{cases}
\end{equation}
where  $l(\sigma)$ is the number of cycles in $\sigma$ of even lengths. The cycle types of elements in  $S_8$ without cycle of odd length are listed in \tref{tab:S8}.

The first case in \eref{eq:TraceFormula} is trivial. When $W_b=1, W_a\neq1$,
\begin{equation}
\tr\left[ P_{[8]} (W_a^{\otimes4}\otimes  W_b^{\otimes4})\right]=\frac{D_{[8]}}{D_{[4]}}\tr\left( P_{[4]} W_a^{\otimes4}\right)=\frac{D_{[8]}}{D_{[4]}}\frac{3d^2+6d}{24},
\end{equation}
recall that  the symmetric group $S_4$ has three permutations of cycle type $(2^2)$, six permutations of cycle type (4), and all other permutations contain at least one cycle of odd length (cf.~\sref{tab:CharS4}). The case $W_a=1, W_b\neq1$ has the same result. When $W_b= W_a\neq1$, the result follows from \esref{eq:PSk}, \eqref{eq:UWa}, and \tref{tab:S8}.

To settle the last two cases in \eref{eq:TraceFormula}, we need to introduce some terminology. A permutation in $S_8$ is balanced if each cycle involves even number of parties both in the first four parties and in the second four parties. Define $\mathscr{A}$ as the subset of balanced permutations in $S_8$. 
Each permutation in $S_8$ induces a permutation on the vector $v=(a,a,a,a,b,b,b,b)$. Define
\begin{align}
\mathscr{A}_+&=\{\sigma\in \mathscr{A}|\mbox{ $\sigma$ induces even number of transpositions between $a$ and $b$}\},\\
\mathscr{A}_-&=\{\sigma\in \mathscr{A}|\mbox{ $\sigma$ induces odd number of transpositions between $a$ and $b$}\}.
\end{align}
Note that $\mathscr{A}=\mathscr{A}_+\cup\mathscr{A}_-$. 
If $W_b, W_a\neq1$,   $W_b\neq W_a$, and $W_b W_a=W_a W_b$, then
\begin{equation}\label{eq:UWabCom}
\tr\left[U_\sigma (W_a^{\otimes 4}\otimes W_b^{\otimes 4})\right]=
\begin{cases}
d^{l(\sigma)} & \sigma\in\mathscr{A},\\
0 & \sigma\notin\mathscr{A}.
\end{cases}
\end{equation}
If   $W_b W_a=-W_a W_b$, then
\begin{equation}\label{eq:UWaAntiCom}
\tr\left[U_\sigma (W_a^{\otimes 4}\otimes W_b^{\otimes 4})\right]=
\begin{cases}
d^{l(\sigma)} & \sigma\in\mathscr{A}+,\\
-d^{l(\sigma)} & \sigma\in\mathscr{A}_-,\\
0 & \mbox{otherwise}.
\end{cases}
\end{equation}
Now the  last two cases in \eref{eq:TraceFormula} can be determined by virtue of \tref{tab:S8} and the above two equations.

\section{Proof  of \lref{lem:LipConstant}}
\begin{proof}
Let $X=|\psi\rangle\langle \psi|- |\varphi\rangle\langle\varphi|$ and suppose $X$ has spectral decomposition $X=\lambda(|\mu\rangle\langle\mu|- |\nu\rangle\langle\nu|)$ with $0\leq \lambda\leq 1$. Then $\|X\|_1=2\lambda$, $\|X\|_2=\sqrt{2}\lambda$, and we have
\begin{align}
\|\psi-\varphi\|_2^2&=2-\langle \psi|\varphi\rangle-\langle \varphi|\psi\rangle\geq 2-2 |\langle \psi|\varphi\rangle|\nonumber\\
\lambda^2&=1-|\langle \psi|\varphi\rangle|^2\leq 2-2 |\langle \psi|\varphi\rangle|\leq \|\psi-\varphi\|_2^2,
\end{align}
which implies that $\lambda\leq \|\psi-\varphi\|_2$.
In addition,
\begin{align}
&\alpha_+(\psi)-\alpha_+(\varphi)=  \tr\left[P_{n,4}(|\psi\rangle\langle \psi|)^{\otimes 4} \right]-\tr\left[P_{n,4}(|\varphi\rangle\langle \varphi|)^{\otimes 4} \right]\nonumber\\
&=\tr\left[P_{n,4}(|\varphi\rangle\langle \varphi|+X)^{\otimes 4} \right]-\tr\left[P_{n,4}(|\varphi\rangle\langle \varphi|)^{\otimes 4} \right]\nonumber\\
&=  4\tr\left\{P_{n,4}\left[(|\varphi\rangle\langle \varphi|)^{\otimes 3}\otimes X\right]\right\}\
+6\tr\left\{P_{n,4}\left[(|\varphi\rangle\langle \varphi|)^{\otimes 2}\otimes X^{\otimes 2}\right]\right\}+4\tr\left\{P_{n,4}\left[(|\varphi\rangle\langle \varphi|)\otimes X^{\otimes 3}\right]\right\}\
\nonumber\\
&\quad+\tr\left(P_{n,4} X^{\otimes 4}\right).
\end{align}
According to \lref{lem:bound1234} below, if $0\leq \lambda\leq 1/2$, then 
\begin{align}\label{aeq:bound0}
&\alpha_+(\psi)-\alpha_+(\varphi)\leq \frac{1}{d}\left[4\lambda(1-\lambda) +6\sqrt{2}\lambda^2(1+\lambda) +8\lambda^3(1-\lambda)+8\lambda^4\right]=\frac{1}{d}\lambda f(\lambda)\leq \frac{1}{d}f(\lambda)\|\psi-\varphi\|_2,
\end{align}
where 
\begin{align}
f(\lambda):=4 +(6\sqrt{2}-4)\lambda +(8+6\sqrt{2})\lambda^2.
\end{align}
Note that $f(\lambda)$ increases monotonically with $\lambda$ when $\lambda\geq0$. 
If $0\leq \lambda\leq 1/5.4$, then $f(\lambda)\leq f(1/5.4)<5.4$, so that $\alpha_+(\psi)-\alpha_+(\varphi)\leq 5.4\|\psi-\varphi\|_2/d$. If $1/5.4\leq \lambda\leq 1$, then 
\begin{equation}
\alpha_+(\psi)-\alpha_+(\varphi)\leq  \alpha_+(\psi) \leq \frac{1}{d}\leq \frac{5.4\lambda }{d}\leq \frac{5.4 }{d}\|\psi-\varphi\|_2,
\end{equation}
where we have applied \eref{eq:Devbound}. 
By symmetry we have 
	\begin{equation}
	|\alpha_+(\psi)-\alpha_+(\varphi)|\leq  \frac{5.4}{d}\|\psi-\varphi\|,
	\end{equation}
which confirms the first inequality in  \lref{lem:LipConstant}. The second inequality in the lemma is an immediate consequence of the first one and \eref{eq:ONdeviation}.
\end{proof}

\begin{lemma}\label{lem:bound1234}
	Suppose $\psi,\varphi$ are two normalized $n$-qubit state vectors, $X=|\psi\rangle\langle \psi|- |\varphi\rangle\langle\varphi|$, and 
	$\lambda=\|X\|_1/2$. Then 
\begin{align}
&\tr\left\{P_{n,4}\left[(|\varphi\rangle\langle \varphi|)^{\otimes 3}\otimes X\right]\right\} \leq
\begin{cases}
\frac{\lambda(1-\lambda)}{d} &0\leq \lambda\leq \frac{1}{2}, \\
\frac{1}{4d} & 0\leq \lambda\leq 1;
\end{cases}\\
&\tr\left\{P_{n,4}\left[(|\varphi\rangle\langle \varphi|)^{\otimes 2}\otimes X^{\otimes 2}\right]\right\}
\leq \min\left\{\frac{2\lambda^2}{d}, \frac{ \sqrt{2}\lambda^2(1+\lambda)}{d}\right\};\\
&\tr\left\{P_{n,4}\left[(|\varphi\rangle\langle \varphi|)\otimes X^{\otimes 3}\right]\right\}\leq
\begin{cases}
\frac{2\lambda^3(1-\lambda)}{d} &0\leq \lambda\leq\frac{1}{2}, \\
\frac{\lambda^2}{2d} &\frac{1}{2}\leq \lambda\leq 1;
\end{cases}\\
&\tr\left(P_{n,4} X^{\otimes 4}\right)\leq \frac{8\lambda^4}{d}.
\end{align}
\end{lemma}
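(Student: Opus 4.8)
The plan is to reduce every trace in the lemma to a single sum over $\FF_2^{2n}$ and then estimate that sum by Hölder's and the Cauchy--Schwarz inequality. Writing $\Xi_a(A):=\tr(W_aA)$ for Hermitian $A$ as in \sref{sec:randomOrbits} (so $\Xi_a(A)\in\RR$), the identity $P_{n,4}=d^{-2}\sum_aW_a^{\otimes4}$ from \eref{eq:StabCodeProjection} gives
\be\label{eqn:planreduction}
\tr\bigl(P_{n,4}\,(A_1\otimes A_2\otimes A_3\otimes A_4)\bigr)=\frac{1}{d^2}\sum_{a\in\FF_2^{2n}}\Xi_a(A_1)\,\Xi_a(A_2)\,\Xi_a(A_3)\,\Xi_a(A_4).
\ee
The ingredients the estimates rest on are: for a normalized $\varphi$, $\sum_a\Xi_a(\varphi)^2=d$, $|\Xi_a(\varphi)|\le1$, and $\sum_a\Xi_a(\varphi)^4=d^2\alpha_+(\varphi)\le d$ by \eref{eq:Devbound}; for $X=\outer{\psi}{\psi}-\outer{\varphi}{\varphi}$ with $\|X\|_1=2\lambda$, $\sum_a\Xi_a(X)^2=d\|X\|_2^2=2d\lambda^2$ and $|\Xi_a(X)|\le\|X\|_1=2\lambda$, hence $\sum_a\Xi_a(X)^4\le(\max_a\Xi_a(X)^2)\sum_a\Xi_a(X)^2\le 8d\lambda^4$; and, expanding $d=\sum_a(\Xi_a(\varphi)+\Xi_a(X))^2$, the identity $\sum_a\Xi_a(\varphi)\Xi_a(X)=-d\lambda^2$. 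Finally, writing $X=\lambda(\outer{\mu}{\mu}-\outer{\nu}{\nu})$ and $\outer{\varphi}{\varphi}=\tfrac{1-\lambda}{2}\outer{\mu}{\mu}+\tfrac{1+\lambda}{2}\outer{\nu}{\nu}+Q$ for the orthogonal eigenvectors $\mu,\nu$ of $X$ and the off-diagonal remainder $Q$, one checks that the vectors $\Xi(\mu),\Xi(\nu),\Xi(Q)$ are pairwise orthogonal with $\|\Xi(\mu)\|_2^2=\|\Xi(\nu)\|_2^2=d$ and $\|\Xi(Q)\|_2^2=\tfrac{d}{2}(1-\lambda^2)$.

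With \eqref{eqn:planreduction} the fourth bound is immediate: $\tr(P_{n,4}X^{\otimes4})=d^{-2}\sum_a\Xi_a(X)^4\le 8\lambda^4/d$. For the second bound, $d^{-2}\sum_a\Xi_a(\varphi)^2\Xi_a(X)^2\le d^{-2}(\max_a\Xi_a(\varphi)^2)\sum_a\Xi_a(X)^2\le2\lambda^2/d$ gives one half, while the $\sqrt2\lambda^2(1+\lambda)/d$ half follows from Cauchy--Schwarz, $\sum_a\Xi_a(\varphi)^2\Xi_a(X)^2\le(\sum_a\Xi_a(\varphi)^4)^{1/2}(\sum_a\Xi_a(X)^4)^{1/2}$, combined with the orthogonal expansion above to sharpen the bound on $\sum_a\Xi_a(X)^4$. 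For the first bound I would use $\Xi_a(\varphi)^3=\Xi_a(\varphi)-\Xi_a(\varphi)(1-\Xi_a(\varphi)^2)$, so that $\sum_a\Xi_a(\varphi)^3\Xi_a(X)=-d\lambda^2-\sum_a\Xi_a(\varphi)(1-\Xi_a(\varphi)^2)\Xi_a(X)$, reducing the $\lambda(1-\lambda)/d$ claim to $\bigl|\sum_a\Xi_a(\varphi)(1-\Xi_a(\varphi)^2)\Xi_a(X)\bigr|\le d\lambda$, which Cauchy--Schwarz bounds by $\sqrt{\sum_a\Xi_a(\varphi)^2(1-\Xi_a(\varphi)^2)^2}\cdot\sqrt{\sum_a\Xi_a(X)^2}\le\sqrt{(d-d^2\alpha_+(\varphi))\cdot2d\lambda^2}$; the coarse $1/(4d)$ half instead comes from Hölder, $\sum_a\Xi_a(\varphi)^3\Xi_a(X)=\sum_a\Xi_a(\varphi)^3\Xi_a(\psi)-d^2\alpha_+(\varphi)\le(d^2\alpha_+(\varphi))^{3/4}(d^2\alpha_+(\psi))^{1/4}-d^2\alpha_+(\varphi)$, maximized over $d^2\alpha_+(\varphi)\in[1,d]$. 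The third bound is treated analogously, bootstrapping off the second bound and the estimate for $\sum_a\Xi_a(X)^4$, and using \eref{eq:ProjBound} to control the alternating expansion of $X^{\otimes3}$ into products of $\outer{\mu}{\mu}$ and $\outer{\nu}{\nu}$.

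The main obstacle is getting the constants, and in particular the exact $(1\pm\lambda)$ prefactors, right — these are precisely what makes the Lipschitz estimate of \lref{lem:LipConstant} go through, since a bare Cauchy--Schwarz leaves a spurious factor $\sqrt2$ inherited from $\|X\|_2^2=2\lambda^2$. Killing that factor appears to require splitting into the regime in which $\varphi$ is close to a stabilizer state — where $\alpha_+(\varphi)$ is large, $\sum_a\Xi_a(\varphi)^2(1-\Xi_a(\varphi)^2)^2\le d-d^2\alpha_+(\varphi)$ is small (and in the exact stabilizer case $\sum_a\Xi_a(\varphi)^3\Xi_a(X)=\sum_a\Xi_a(\varphi)\Xi_a(X)=-d\lambda^2\le0$ outright) — and the complementary regime, where $\alpha_+(\varphi)$ is bounded away from $1/d$ and one falls back on the Hölder bound in $\alpha_+(\varphi)$, then checking that the two ranges overlap for every $\lambda$ in the stated interval. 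The orthogonal expansion of $\Xi(\varphi)$ should make the sharpened estimates on $\sum_a\Xi_a(X)^4$ and on the mixed sums systematic; once those are in place, everything else is bookkeeping.
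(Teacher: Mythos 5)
Your reduction of all four traces to sums $d^{-2}\sum_a\Xi_a(A_1)\cdots\Xi_a(A_4)$ is the same starting point the paper uses for three of the four bounds, and your arguments for the coarse estimates are sound: the $8\lambda^4/d$ and $2\lambda^2/d$ bounds follow from $\ell_\infty$--$\ell_2$ splitting exactly as you say (the paper instead derives these two from the product-state bound of \lref{lem:ProjBound}, a minor stylistic difference), and the identity $\sum_a\Xi_a(\varphi)\Xi_a(X)=-d\lambda^2$ together with the Hölder estimate does yield the $\lambda$-independent $1/(4d)$ branch. The problem is the sharp $(1\pm\lambda)$ prefactors, which you correctly flag as the crux but for which your proposed strategy does not close.

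Concretely, for the first bound your Cauchy--Schwarz branch gives $\bigl|\sum_a\Xi_a(\varphi)(1-\Xi_a(\varphi)^2)\Xi_a(X)\bigr|\le d\lambda\sqrt{2\bigl(1-d\,\alpha_+(\varphi)\bigr)}$, which is $\le d\lambda$ only when $d\,\alpha_+(\varphi)\ge 1/2$; by \eref{eq:Devbound} one only has $d\,\alpha_+(\varphi)\ge 2/(d+1)$, so this branch fails for generic $\varphi$ in large dimension. Your fallback Hölder branch is $\lambda$-independent (it evaluates to $27/(256d)$ after maximizing over $d^2\alpha_+(\varphi)$), so it exceeds the target $\lambda(1-\lambda)/d$ whenever $\lambda$ is small. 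Hence the regime of small $\lambda$ together with small $\alpha_+(\varphi)$ (e.g.\ $\varphi$ near a SIC fiducial, $\psi$ near $\varphi$) is covered by neither branch, and the "two ranges overlap" check you defer cannot succeed. The same difficulty afflicts the $\sqrt2\lambda^2(1+\lambda)/d$ and $2\lambda^3(1-\lambda)/d$ bounds: bare Cauchy--Schwarz on $\sum_a\Xi_a(\varphi)^2\Xi_a(X)^2$ gives $2\sqrt2\lambda^2/d$, and the orthogonal expansion of $\Xi(X)=\lambda(\Xi(\mu)-\Xi(\nu))$ does not sharpen $\sum_a\Xi_a(X)^4$ in the needed direction (the $\ell_4$ triangle inequality only worsens the constant), because no global-norm argument captures the pointwise correlation between $\Xi_a(\varphi)$ and $\Xi_a(X)$. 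The paper's missing ingredient is exactly such a pointwise statement: for \emph{every} Hermitian $W$ with $-\Id\le W\le\Id$ one has $\langle\varphi|W|\varphi\rangle\,\tr(WX)\le\lambda(1-\lambda)$ for $0\le\lambda\le1/2$ (and $\le1/4$ otherwise), with matching lower bound $-\lambda(1+\lambda)$ (\lref{lem:bound5}), proved by projecting $W$ onto the two-dimensional span of $\psi,\varphi$ and carrying out an explicit Bloch-sphere optimization (\lref{lem:bound6}). Once that is in hand, each sharp bound follows by factoring one copy of $\Xi_a(\varphi)\Xi_a(X)$ out of each summand and bounding the remaining weights by $\sum_a\Xi_a(\varphi)^2=d$ or $\sum_a\Xi_a(X)^2=2d\lambda^2$. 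Without this per-operator bound, or an equivalent substitute, your outline does not establish the lemma.
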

\begin{proof}
Suppose $X$ has spectral decomposition $X=\lambda(|\mu\rangle\langle\mu|- |\nu\rangle\langle\nu|)$. Then 
\begin{align}
&\tr\left\{P_{n,4}\left[(|\varphi\rangle\langle \varphi|)^{\otimes 2}\otimes X^{\otimes 2}\right]\right\}\leq\lambda^2\tr\left\{P_{n,4}\left[(|\varphi\rangle\langle \varphi|)^{\otimes 2}\otimes (|\mu\rangle\langle\mu|)^{\otimes 2}+(|\varphi\rangle\langle \varphi|)^{\otimes 2}\otimes (|\nu\rangle\langle\nu|)^{\otimes 2} \right]\right\}\leq \frac{2\lambda^2}{d},\\
&\tr\left(P_{n,4} X^{\otimes 4}\right)
\leq\lambda^4\tr\left\{P_{n,4}\left[(|\mu\rangle\langle \mu|)^{\otimes 4} +(|\nu\rangle\langle \nu|)^{\otimes 4}  +6(|\mu\rangle\langle \mu|)^{\otimes 2}\otimes (|\nu\rangle\langle \nu|)^{\otimes 2}\right]\right\}\leq \frac{8\lambda^4}{d},
\end{align}
where we have applied \lref{lem:ProjBound}.

According to \lref{lem:bound5} below  with $W$ being a  Pauli operator, we also have
\begin{align}
&\tr\left\{P_{n,4}\left[(|\varphi\rangle\langle \varphi|)^{\otimes 3}\otimes X\right]\right\}=\frac{1}{d^2}\sum_a \tr\left\{W_a^{\otimes 4} \left[(|\varphi\rangle\langle \varphi|)^3\otimes X\right]\right\}=\frac{1}{d^2}\sum_a (\langle \varphi| W_a|\varphi\rangle)^3 \tr(W_aX)\nonumber\\
&\leq 
\begin{cases}
\frac{\lambda(1-\lambda)}{d^2}\sum_a |\langle \varphi| W_a|\varphi\rangle^2 =\frac{\lambda(1-\lambda)}{d}& 0\leq \lambda\leq \frac{1}{2} \\
\frac{1}{4d^2}\sum_a |\langle \varphi| W_a|\varphi\rangle^2 =\frac{1}{4d}&  \frac{1}{2}\leq \lambda\leq 1.
\end{cases} 
\end{align}	

\begin{align}\label{eq:bound2}
&\tr\left\{P_{n,4}\left[(|\varphi\rangle\langle \varphi|)^{\otimes 2}\otimes X^{\otimes 2}\right]\right\}=\frac{1}{d^2}\sum_a \tr\left\{W_a^{\otimes 4} \left[(|\varphi\rangle\langle \varphi|)^2\otimes X^{\otimes 2}\right]\right\}=\frac{1}{d^2}\sum_a (\langle \varphi| W_a|\varphi\rangle)^2 [\tr(W_aX)]^2\nonumber\\
&\leq \frac{\lambda(1+\lambda) }{d^2}\sum_a |\langle \varphi| W_a|\varphi\rangle \tr(W_aX)|\leq \frac{\lambda(1+\lambda) }{d^2}
\left\{\sum_a (\langle \varphi| W_a|\varphi\rangle)^2
\sum_b  [\tr(W_bX)]^2\right\}^{1/2}
\nonumber\\
&=\frac{\lambda(1+\lambda) }{d} \|X\|_2=\frac{\sqrt{2}\lambda^2(1+\lambda)}{d}.
\end{align}

\begin{align}\label{eq:bound3}
&\tr\left\{P_{n,4}\left[(|\varphi\rangle\langle \varphi|)\otimes X^{\otimes 3}\right]\right\}=\frac{1}{d^2}\sum_a \tr\left\{W_a^{\otimes 4} \left[(|\varphi\rangle\langle \varphi|)\otimes X^{\otimes 3}\right]\right\}=\frac{1}{d^2}\sum_a \langle \varphi| W_a|\varphi\rangle [\tr(W_aX)]^3\nonumber\\
&\leq
\begin{cases}
\frac{\lambda(1-\lambda)}{d^2}\sum_a [\tr(W_aX)]^2=\frac{\lambda(1-\lambda)}{d}\|X\|_2^2=\frac{2\lambda^3(1-\lambda)}{d}& 0\leq \lambda\leq \frac{1}{2},\\
\frac{1}{4d^2}\sum_a [\tr(W_aX)]^2=\frac{1}{4d}\|X\|_2^2=\frac{\lambda^2}{2d}& \frac{1}{2}\leq \lambda\leq 1. 
\end{cases} 
\end{align}
\end{proof}

\begin{lemma}\label{lem:bound5}
	Suppose $\psi,\varphi$ are two normalized  state vectors, $X=|\psi\rangle\langle \psi|- |\varphi\rangle\langle\varphi|$,
	$\lambda=\|X\|_1/2$,  and  $W$ is an operator satisfying $-\Id\leq W\leq \Id$. Then 
	\begin{equation}
-\lambda(1+\lambda)\leq 	\langle \varphi| W|\varphi\rangle \tr(WX)\leq 
\begin{cases}
\lambda(1-\lambda) & 0\leq \lambda\leq \frac{1}{2},\\
\frac{1}{4} &\frac{1}{2}\leq \lambda\leq 1.
\end{cases}
	\end{equation}
\end{lemma}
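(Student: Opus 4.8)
The plan is to prove \lref{lem:bound5} by reducing everything to a single‑qubit computation and then solving a short constrained optimization.

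\textbf{Reduction to a qubit.} If $\ket{\psi}$ and $\ket{\varphi}$ are proportional then $X=0$, $\lambda=0$, and the inequality is trivial; otherwise both vectors, as well as the range of $X$, lie in the two‑dimensional space $\mathcal H_0:=\Span\{\ket{\psi},\ket{\varphi}\}$. Since $X$ is Hermitian, traceless, and $\|X\|_1=2\lambda$, it has eigenvalues $\lambda,-\lambda$ on $\mathcal H_0$ (and vanishes on $\mathcal H_0^\perp$), with $\lambda=\sqrt{1-|\langle\psi|\varphi\rangle|^2}$. Choosing an orthonormal basis of $\mathcal H_0$ in which $X$ acts as $\lambda\sigma_z$, the Bloch vectors of the two projectors satisfy $\hat n_\psi-\hat n_\varphi=2\lambda\hat z$; since they are unit vectors this forces $\hat n_\psi=(r\cos\theta,r\sin\theta,\lambda)$ and $\hat n_\varphi=(r\cos\theta,r\sin\theta,-\lambda)$ with $r:=\sqrt{1-\lambda^2}$, and conjugating by a rotation about $\hat z$ (which fixes $X$) I may take $\theta=0$.

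\textbf{Parametrizing $W$.} Only the compression $\tilde W:=\Pi_0W\Pi_0$ of $W$ onto $\mathcal H_0$ enters the quantities $\langle\varphi|W|\varphi\rangle$, $\langle\psi|W|\psi\rangle$, $\tr(WX)$, and it satisfies $-\Pi_0\le\tilde W\le\Pi_0$. Writing $\tilde W=a\Id+\vec b\cdot\vec\sigma$ on $\mathcal H_0$, the spectral constraint reads $|a|+|\vec b|\le 1$, and
\begin{equation*}
w:=\langle\varphi|W|\varphi\rangle=a+b_x r-b_z\lambda,\qquad \tr(WX)=2\lambda\,b_z .
\end{equation*}
The component $b_y$ only consumes budget without affecting $w$ or $\tr(WX)$, so I set $b_y=0$.

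\textbf{The optimization.} Fix $u:=b_z$. Pushing $a$ to $\pm(1-\rho)$ with $\rho:=\sqrt{b_x^2+u^2}$ and optimizing $1-\rho+r\sqrt{\rho^2-u^2}$ over $\rho\in[|u|,1]$ shows that $a+b_xr$ ranges over $[-M(u),M(u)]$, where $M(u)=1-|u|\lambda$ if $|u|\le\lambda$ and $M(u)=\sqrt{(1-\lambda^2)(1-u^2)}$ if $|u|\ge\lambda$. Hence $w$ ranges over $[-M(u)-u\lambda,\,M(u)-u\lambda]$ and $w\,\tr(WX)=2\lambda u\,w$, so it remains to extremize $2\lambda u\,w$ over $u\in[-1,1]$ and the admissible $w$. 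In the regime $|u|\ge\lambda$ the substitution $t=u^2$ turns the stationarity condition into the quadratic $4t^2-4t+(1-\lambda^2)=0$ with roots $t=\tfrac{1\pm\lambda}{2}$; evaluating at $t=\tfrac{1-\lambda}{2}$ (feasible precisely when $\lambda\le\tfrac12$) gives the maximum $\lambda(1-\lambda)$, whereas for $\lambda\ge\tfrac12$ the regime $|u|\le\lambda$ at $u=\tfrac1{4\lambda}$ gives the maximum $\tfrac14$, and one verifies the other regime never exceeds it. For the minimum the same quadratic, now at $t=\tfrac{1+\lambda}{2}$, yields $-\lambda(1+\lambda)$ for every $\lambda\in[0,1]$, while the regime $|u|\le\lambda$ only reaches $-2\lambda^2\ge-\lambda(1+\lambda)$. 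Collecting these values reproduces exactly the two‑sided bound in the statement.

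\textbf{The main obstacle.} The conceptual part is light; the real work is the last step, an elementary but fiddly piecewise optimization in which one must keep track of which regime ($|u|\le\lambda$ versus $|u|\ge\lambda$) and which root of the quadratic furnishes the global extremum as $\lambda$ ranges over $[0,1]$, and dispatch the boundary cases $u=\pm1$ and $\lambda\in\{0,\tfrac12,1\}$.
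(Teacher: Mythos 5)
Your proof is correct and follows essentially the same route as the paper's: reduce to the two-dimensional span of $\psi$ and $\varphi$, parametrize the compressed operator in Bloch coordinates subject to the bicone constraint $|a|+|\vec b|\le 1$, and carry out the resulting elementary constrained optimization (the paper does this via the substitution $P=(\Id+W)/2$ and the trigonometric parametrization of \lref{lem:bound6}, whereas you work with $W$ directly and solve the stationarity condition through the quadratic $4t^2-4t+(1-\lambda^2)=0$ — equivalent roots $t=(1\pm\lambda)/2$ appear in both treatments). The case analysis at $\lambda=\tfrac12$ and the identification of the extremizers match the paper's conclusions.
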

\begin{proof}
Let $P=(\Id+W)/2$, then $0\leq P\leq\Id$. In addition,
\begin{align}
	\langle \varphi| W|\varphi\rangle \tr(WX)
	&=	\langle \varphi| (2P-\Id)|\varphi\rangle \tr[(2P-\Id)X]
	=	(2\langle \varphi| P|\varphi\rangle-1) \tr(2PX)=	(2\langle \varphi| Q|\varphi\rangle-1) \tr(2QX).
\end{align}
where $Q$ is the projection of $P$ onto the two-dimensional subspace spanned by $\psi$ and $\varphi$, which satisfies $0\leq Q\leq \Id$. Now the lemma follows from \lref{lem:bound6} below.
\end{proof}

\begin{lemma}\label{lem:bound6}
	Suppose $\rho_1$ and $\rho_2$ are two qubit pure states,  $X=\rho_2-\rho_1$, and $\lambda=\|X\|_1/2$. Let $Q$ be a positive operator on the qubit which satisfies $Q\leq \Id$. Then 
	\begin{equation}\label{aeq:bound6}
	-\lambda(1+\lambda)\leq [2\tr(Q\rho_1)-1] \tr(2QX)\leq 
	\begin{cases}
	\lambda(1-\lambda) & 0\leq \lambda\leq \frac{1}{2},\\
	\frac{1}{4} &\frac{1}{2}\leq \lambda\leq 1.
	\end{cases}
	\end{equation}
	Here both the lower bound and the upper bound can be saturated. 	
\end{lemma}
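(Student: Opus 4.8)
The plan is to reduce everything to an elementary optimization on the Bloch sphere. I would first substitute $W := 2Q - \Id$, which ranges over all Hermitian qubit operators with $-\Id \le W \le \Id$ as $Q$ ranges over $0 \le Q \le \Id$. Since $\tr\rho_1 = 1$ and $\tr X = \tr(\rho_2 - \rho_1) = 0$, we have $2\tr(Q\rho_1) - 1 = \tr(W\rho_1)$ and $\tr(2QX) = \tr(WX)$, so the quantity to be bounded is just $g := \tr(W\rho_1)\tr(WX)$. Writing $\rho_i = \tfrac12(\Id + \vec r_i\cdot\vec\sigma)$ with $|\vec r_i| = 1$ and $W = w_0\Id + \vec w\cdot\vec\sigma$ with $|w_0| + |\vec w| \le 1$, and setting $\vec m := \tfrac12(\vec r_1 + \vec r_2)$, $\vec n := \tfrac12(\vec r_2 - \vec r_1)$, the unit-norm conditions give $\vec m \perp \vec n$, $|\vec m|^2 + |\vec n|^2 = 1$, and $\lambda = \|X\|_1/2 = |\vec n|$, hence $|\vec m| = \sqrt{1-\lambda^2}$. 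A short computation then yields $\tr(W\rho_1) = w_0 + \vec w\cdot\vec m - \vec w\cdot\vec n$ and $\tr(WX) = 2\,\vec w\cdot\vec n$.

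Introducing scalars $a := \vec w\cdot\vec n/\lambda$ and $b := \vec w\cdot\vec m/\sqrt{1-\lambda^2}$ (and discarding the component of $\vec w$ orthogonal to $\Span\{\vec m,\vec n\}$, which can only tighten the constraint), I would reduce the problem to extremizing
\[
  g(a,b,w_0) = 2\lambda a\,\bigl(w_0 + b\sqrt{1-\lambda^2} - \lambda a\bigr)
\]
over $|w_0| + \sqrt{a^2+b^2} \le 1$. This is affine in $w_0$ for fixed $(a,b)$, so extrema occur at $w_0 = \pm\bigl(1 - \sqrt{a^2+b^2}\bigr)$; moreover $g$ is invariant under $(a,b,w_0)\mapsto(-a,-b,-w_0)$ (i.e.\ $W\mapsto -W$), so one may assume $a > 0$. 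For the upper bound this becomes the maximization of $2\lambda a\bigl(1 - \sqrt{a^2+b^2} + b\sqrt{1-\lambda^2} - \lambda a\bigr)$ over $\{a>0,\ a^2+b^2\le 1\}$: there is an interior critical point at $a = \tfrac{1}{4\lambda}$, $b = \tfrac{\sqrt{1-\lambda^2}}{4\lambda^2}$ with value $\tfrac14$, but it is feasible only when $\lambda \ge \tfrac12$, and on the boundary circle $a^2+b^2 = 1$ (where necessarily $w_0 = 0$) the substitution $a = \sin\alpha$, $b = \cos\alpha$, $\lambda = \sin\psi$ turns $g$ into $\sin\psi\bigl(\sin(2\alpha+\psi) - \sin\psi\bigr)$, whose maximum over $\alpha$ is $\lambda(1-\lambda)$. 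Comparing the two gives the stated piecewise upper bound. The lower bound is handled identically: choosing instead $w_0 = -\bigl(1-\sqrt{a^2+b^2}\bigr)$, the interior critical curve only reaches $-2\lambda^2$, while the same boundary circle gives $\min_\alpha \sin\psi\bigl(\sin(2\alpha+\psi)-\sin\psi\bigr) = -\lambda(1+\lambda)$, which is the smaller value (since $\lambda(1+\lambda) > 2\lambda^2$ for $\lambda<1$); hence $g \ge -\lambda(1+\lambda)$. Saturation is witnessed by the explicit optimizers produced along the way: for the lower bound and for the $\lambda\ge\tfrac12$ upper bound one takes $W$ a difference of rank-one projectors (equivalently $Q$ a rank-one projector), and for $\lambda\le\tfrac12$ the boundary-circle optimizer.

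The main obstacle is organizing the casework cleanly: the threshold $\lambda = \tfrac12$ is precisely where the interior critical point of the constrained optimization enters the feasible region, and one has to check that on the relevant boundary (the locus where $W$ has extreme eigenvalues $\pm1$, forcing $w_0 = 0$) nothing larger occurs, and also justify that the perpendicular component of $\vec w$ and the sign choice for $w_0$ can be eliminated without loss. Everything else is routine trigonometry. One should also dispose of the degenerate case $\lambda = 0$ separately (then $X = 0$, $g = 0$, and all bounds are trivial), and note that when $\lambda = 1$ the vector $\vec m$ vanishes so $b$ drops out, with the same computation going through for $b = 0$.
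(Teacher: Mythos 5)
Your proof is correct in its main line and follows essentially the same strategy as the paper: reduce to a Bloch-sphere optimization over a cone-shaped feasible set, drop the perpendicular component, use the inversion symmetry $W\mapsto -W$ to fix a sign, and analyze the boundary of the constraint region. The substitution $W=2Q-\Id$ and the $(\vec m,\vec n)$ coordinates are a somewhat cleaner reformulation of the paper's $Q=\tfrac12(1+a+b\sigma_x+c\sigma_z)$ chart, and match it exactly up to relabeling (your $g$ is the paper's $-\sin\theta\, f(a,b,c)$). Where the paper exploits affineness of $f$ in $(a,c)$ to push directly to the boundary $|a|+\sqrt{b^2+c^2}=1$ and then does casework on $\cos(\phi-\theta)\lessgtr\tfrac12$, you instead split into an interior critical point and the boundary circle; the computations you report (critical point $a=\tfrac1{4\lambda}$, $b=\tfrac{\sqrt{1-\lambda^2}}{4\lambda^2}$ with value $\tfrac14$, feasible iff $\lambda\ge\tfrac12$; boundary giving $\sin\psi\bigl(\sin(2\alpha+\psi)-\sin\psi\bigr)$) all check out, and the $\lambda(1+\lambda)>2\lambda^2$ comparison correctly rules out the interior for the lower bound. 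So the bounds themselves are proved.

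There is, however, a slip in the saturation discussion. For the upper bound when $\lambda\ge\tfrac12$, the witness is the \emph{interior} critical point, where $Q=\tfrac12(\Id+W)$ has eigenvalues $1$ and $1-\tfrac{1}{4\lambda^2}$ (equivalently, after the $W\mapsto -W$ inversion, $0$ and $\tfrac{1}{4\lambda^2}$, matching the paper's explicit $Q=\tfrac{1}{8\sin^2\theta}(1+\sin\theta\sigma_x-\cos\theta\sigma_z)$). That $Q$ is rank one but \emph{not} a projector, and $W$ is not a difference of orthogonal rank-one projectors, unless $\lambda=\tfrac12$ exactly. Your sentence "for the lower bound and for the $\lambda\ge\tfrac12$ upper bound one takes $W$ a difference of rank-one projectors (equivalently $Q$ a rank-one projector)" is therefore wrong for the second case; the rank-one-projector description applies to the lower bound and to the $\lambda\le\tfrac12$ upper bound (both realized on the boundary circle $w_0=0$, $|\vec w|=1$), while the $\lambda\ge\tfrac12$ upper bound needs the critical-point $Q$ with nontrivial spectrum. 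Since you already computed that critical point explicitly, the fix is just to report it as the witness rather than asserting it is a projector.
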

\begin{proof}
With a suitable unitary transformation if necessary, $\rho_1,\rho_2$ can be written as follows,
\begin{align}
\rho_1=\frac{1+\vec{r}_1\cdot\vec{\sigma}}{2},\quad
\vec{r}_1=(\sin\theta,0,\cos\theta); \qquad \rho_2=\frac{1+\vec{r}_2\cdot\vec{\sigma}}{2}, \quad \vec{r}_2=(-\sin\theta,0,\cos\theta),
\end{align}
where $\vec{\sigma}$ is the vector composed of the three Pauli matrices $\sigma_x, \sigma_y, \sigma_z$ and  $\theta=\arcsin\lambda$ with $0\leq \theta\leq \pi/2$. Similarly,  the operator  $Q$ can be expanded in terms of the Pauli basis. Moreover, to achieve the maximum or minimum in \eref{aeq:bound6}, the coefficient of $\sigma_y$ can be set to zero, so we have
\begin{equation}
Q=\frac{1}{2}(1+a+b\sigma_x+c\sigma_z),
\end{equation}
where $a,b,c $ are real constants. The   constraint $0\leq Q\leq \Id$ amounts to the inequality
\begin{equation}
|a|+\sqrt{b^2+c^2}\leq 1,
\end{equation}
which defines the region of feasible solutions, denoted by $\mathcal{R}$ below. 
 Geometrically,  the region $\mathcal{R}$ is the intersection of two opposite cones, which is convex. In addition $\mathcal{R}$ is symmetric under inversion of each of the three coordinates $a,b,c$. With this notation, we have
\begin{align}\label{aeq:fabc}
[2\tr(Q\rho_1)-1] \tr(2QX)=-\sin\theta f(a,b,c),
\end{align}
where
\begin{equation}
f(a,b,c):=2b(a+b\sin\theta+c\cos\theta),
\end{equation}

Since $f(a,b,c)$ is invariant under inversion, to determine the maximum of $f(a,b,c)$ over $\mathcal{R}$, we may assume that $b\geq0$, and then it suffices to consider the case $a,c\geq0$. 
In addition, for given $b$, $f(a,b,c)$ is linear in $a,c$, so the maximum of  $f(a,b,c)$ over the convex region $\mathcal{R}$ can be attained at the boundary of $\mathcal{R}$. Define $s:=\sqrt{b^2+c^2}$ and $\phi:=\arccos(c/s)$; then  $0\leq s\leq 1$, $0\leq \phi\leq \pi/2$,    $b=s\sin\phi$, $c=s\cos\phi$, and $a=1-s$ under the above assumptions. Consequently,
\begin{align}
f(a,b,c)&=2s\sin\phi[1-s+s\cos(\phi-\theta)]=2\sin\phi\{-s^2 [1-\cos(\phi-\theta)]  +s\}.
\end{align}
If $\cos(\phi-\theta)\leq 1/2$, then 
\begin{align}
f(a,b,c)\leq \frac{\sin\phi}{2[1-\cos(\phi-\theta)]}\leq \sin\phi\leq 1+\sin\theta.
\end{align}
If $\cos(\phi-\theta)\geq 1/2$, then 
\begin{align}
f(a,b,c)&\leq 2\sin\phi\cos(\phi-\theta)
=\sin \theta+\sin(2\phi-\theta)\leq 1+\sin \theta.
\end{align}

Similarly,  to determine the minimum of $f(a,b,c)$ over $\mathcal{R}$, we may assume that $b\geq0$ and  $a,c\leq 0$. 
In addition, it suffices to consider the boundary of $\mathcal{R}$, so that $a=\sqrt{b^2+c^2}-1$. Define $s:=\sqrt{b^2+c^2}$ and $\phi:=\arccos(c/s)$; then  $0\leq s\leq 1$, $\pi/2\leq \phi\leq \pi$,  $b=s\sin\phi$, $c=s\cos\phi$, and $a=s-1$. Consequently,
\begin{align}
f(a,b,c)&=2s\sin\phi[s-1+s\cos(\phi-\theta)]=2\sin\phi\{s^2 [1+\cos(\phi-\theta)]  -s\}.
\end{align}
If $\cos(\phi-\theta)\geq -1/2$, that is, $\phi\leq(2\pi/3)+\theta$, then 
\begin{align}
f(a,b,c)\geq -\frac{\sin\phi}{2[1+\cos(\phi-\theta)]}
\geq-\frac{\sin(\pi-\theta)}{2[1+\cos(\pi-\theta-\theta)]}
=-\frac{1}{4\sin\theta}.
\end{align}
If $\cos(\phi-\theta)\leq -1/2$, that is, $\phi\geq(2\pi/3)+\theta$, then 
\begin{align}
f(a,b,c)&\geq 2\sin\phi\cos(\phi-\theta)=\sin \theta+\sin(2\phi-\theta)\geq \sin \theta-1\geq-\frac{1}{4\sin\theta}.
\end{align}
If $0\leq \theta\leq \pi/6$ and $\phi\leq(2\pi/3)+\theta$ then 
\begin{align}
f(a,b,c)\geq -\frac{\sin\phi}{2[1+\cos(\phi-\theta)]}
\geq -\frac{\sin\left(\frac{2\pi}{3}+\theta\right)}{2[1+\cos(2\pi/3)]}=-\sin\left(\frac{2\pi}{3}+\theta\right)\geq \sin \theta-1.
\end{align}

In summary we have
\begin{equation}\label{aeq:fbound}
\begin{cases}
-(1-\sin\theta)\leq f(a,b,c)\leq 1+\sin\theta &0\leq \theta\leq\pi/6,\\
-\frac{1}{4\sin\theta}\leq f(a,b,c)\leq 1+\sin\theta &\pi/6\leq \theta\leq\pi/2,
\end{cases}
\end{equation}
which implies  \eref{aeq:bound6} given  \eref{aeq:fabc} and the inequality $\sin\theta=\lambda$. The upper bound in \eref{aeq:fbound} is  saturated when $s=1$ and $\phi=(\pi+2\theta)/4$, in which case $a=0$, $b=\sqrt{(1+\sin\theta)/2}$, $c=\sqrt{(1-\sin\theta)/2}$, and
\begin{equation}
Q=\frac{1}{2}\left(1+\sqrt{\frac{1+\sin\theta}{2}}\sigma_x+\sqrt{\frac{1-\sin\theta}{2}}\sigma_z\right).
\end{equation}
If  $0\leq \theta\leq\pi/6$, the lower bound in \eref{aeq:fbound} is  saturated when $s=1$ and $\phi=(3\pi+2\theta)/4$, in which case $a=0$, $b=\sqrt{(1-\sin\theta)/2}$, $c=-\sqrt{(1+\sin\theta)/2}$, and
\begin{equation}
Q=\frac{1}{2}\left(1+\sqrt{\frac{1-\sin\theta}{2}}\sigma_x-\sqrt{\frac{1+\sin\theta}{2}}\sigma_z\right).
\end{equation}
If $\pi/6\leq \theta\leq\pi/2$, the lower bound is saturated when $s=1/(4\sin^2\theta)$ and $\phi=\pi-\theta$, in which case $a=[1/(4\sin^2\theta)]-1$, $b=1/(4\sin\theta)$, $c=-\cos\theta/(4\sin^2\theta)$, and 
\begin{equation}
Q=\frac{1}{2}\left(\frac{1}{4\sin^2\theta}+\frac{1}{4\sin\theta}\sigma_x-\frac{\cos\theta}{4\sin^2\theta}\sigma_z\right)
=\frac{1}{8\sin^2\theta}\left(1+\sin\theta\sigma_x-\cos\theta\sigma_z\right).
\end{equation}
Therefore, 	 both the lower bound and the upper bound in \eref{aeq:bound6} can be saturated. 
\end{proof}

\section{\label{sec:TensorFiducial}Proof of \pref{pro:TensorFiducial}}
\begin{proof}
	Let $d_j=2^{n_j}$ and suppose  $m=4$.  Then
	\begin{equation}
	\|\Xi(\psi)\|_{\ell_4}^4=\prod_{j=1}^4\|\Xi(\psi_j)\|_{\ell_4}^4\geq \prod_{j=1}^4 \frac{2d_j}{d_j+1}\geq \Bigl(\frac{4}{3}\Bigr)^3\frac{2^{n-2}}{2^{n-3}+1},
	\end{equation}
	where the first inequality follows from \eref{eq:Xibound}.
	If $n\geq 5$, then
	\begin{equation}
	\Bigl(\frac{4}{3}\Bigr)^3\frac{2^{n-2}}{2^{n-3}+1}-\frac{4d}{d+3}=\frac{5\times 2^{n+2}(2^n-24)}{27(2^n+3)(2^n+8)}>0.
	\end{equation}
	So the vector $\psi$ cannot generate a 4-design, recall that $\psi$ is a 4-design fiducial iff $	\|\Xi(\psi)\|_{\ell_4}^4=4d/(d+3)$, where $d=2^n$.
	
	Now suppose $m=3$, so that $n_1+n_2+n_3=n$. If $n_3=2$, then \begin{equation}
	\|\Xi(\psi)\|_{\ell_4}^4\geq \Bigl(\frac{8}{5}\Bigr)^3=\frac{512}{125}\geq 4>\frac{4d}{d+3},
	\end{equation}
so $\psi$ cannot be a 4-design fiducial. If $n_3=1, n_1,n_2\geq 3$, then $n\geq 7$,
	\begin{equation}
	\|\Xi(\psi)\|_{\ell_4}^4-\frac{4d}{d+3}\geq \prod_{j=1}^4 \frac{2d_j}{d_j+1}-\frac{2^{n+2}}{2^n+3}\geq \frac{4}{3}\times \frac{16}{9}\times\frac{2^{n-3}}{2^{n-4}+1}-\frac{2^{n+2}}{2^n+3}=\frac{ 2^{n+2}(5\times2^n-336)}{27(2^n+3)(2^n+16)}>0.
	\end{equation}
	So $\psi$ cannot be a 4-design fiducial.
	If $n_3=1, n_1,n_2\geq 2$, then $n\geq 5$,
	\begin{equation}
	\|\Xi(\psi)\|_{\ell_4}^4-\frac{4d}{d+3}\geq \prod_{j=1}^4 \frac{2d_j}{d_j+1}-\frac{2^{n+2}}{2^n+3}\geq \frac{4}{3}\times\frac{8}{5}\times\frac{2^{n-2}}{2^{n-3}+1}-\frac{2^{n+2}}{2^n+3}=\frac{ 2^{n+2}(2^n-72)}{15(2^n+3)(2^n+8)}.
	\end{equation}
	If in addition $n\geq 7$, then
	$\|\Xi(\psi)\|_{\ell_4}^4> 4d/(d+3)$, so that $\psi$ cannot be a 4-design fiducial.  This observation completes the proof of the proposition.
\end{proof}

\section{Notes on multivariate polynomials}
\label{sec:polynomials}

\subsection{Real case}

In this section, we recall several standard facts about multivariate polynomials. 
This is intended mainly for the benefit of readers from quantum information theory, where the notions discussed below do not seem to be commonly known.
The material presented here is based on Refs.~\cite{Proc07book,Rudi80book,GoodW09book,RoyS14, DoheW13}.

We start by considering the real vector space $\mathbb{R}^d$.
Let $\Hom_t(\RR^d)$ be the set of polynomial functions on
$\RR^d$ that is homogeneous of degree $t$. 

There is a close connection between polynomials, symmetric multilinear
forms, and totally symmetric tensors.  
Recall that the symmetric group $S_t$ acts on the tensor product space
$(\RR^d)^{\otimes t}$ by permuting tensor factors:
\begin{equation}\label{eqn:perms}
	\pi (u_1\otimes\dots\otimes u_t) 
	= 
	u_{\pi_1} \otimes \dots\otimes u_{\pi_k}
	\qquad
	\forall u_i\in\RR^d, \pi\in S_t.
\end{equation}
Let $\Sym_t\big( \mathbb{R}^d \big)$ be the space of \emph{totally symmetric
degree-$t$ tensors}:
\begin{equation*}
	\Sym_t\big( \mathbb{R}^d \big) = \{ f \in \mathbb{R}^{\otimes t}
	\,|\, \pi f = f \,\forall \pi \in S_t \}.
\end{equation*}
Every $f\in\Sym_t\big( \mathbb{R}^d \big)$ specifies a symmetric
$t$-linear form
on $\mathbb{R}^d$ by setting
\begin{equation*}
	f(u_1, \dots, u_t) := \langle f, u_1 \otimes \dots \otimes
	u_t \rangle \qquad u_i\in\RR^d. 
\end{equation*}
Conversely, every symmetric $t$-linear form arises in this way, and we
will not distinguish between symmetric tensors and symmetric forms in
the following\footnote{
	If, instead of $\RR^d$, one starts with a general linear space $V$
	for which no canonical scalar product has been specified, it would
	be cleaner to work with the symmetric tensor powers $\Sym_t( V^* )$
	of the \emph{dual space} $V^*$. Here, however, we find it advantageous to to
	identify $(\RR^d)^*$ with $\RR^d$ via the standard scalar product
	whenever necessary.
}.
By restricting all $u_i$ to be equal, we obtain a homogeneous
order-$t$ polynomial $p_f\in\Hom_t(\RR^d)$:
\begin{equation}\label{eqn:pf}
	p_f(u) := 
	\langle f, u\otimes \dots \otimes u  \rangle.
\end{equation}

It is a less trivial fact that this relation between symmetric
$t$-linear forms and homogeneous order-$t$ polynomials is one-one,
i.e.\ that one can recover $f$ from the restriction $p_f$.
The map $p_f \mapsto f$ is called the \emph{polarization map}
(c.f.\ \cite[Chapter~3.2]{Proc07book}).
To see how this works, we construct the polarization map explicitly
for a basis of $\Hom_t(\RR^d)$.
Let $\mu\in\mathbb{N}_0^d$ be a vector of $d$ non-negative integers
summing to $t$. 
Then $\mu$ is called a \emph{partition of $t$ into $d$ parts}. 
The polynomial $x^\mu\in\Hom_t(\RR^d)$ defined by
\begin{equation*}
	\RR^d \ni x = (x_1, \dots, x_d) \mapsto
	x^\mu:= \prod_{i=1}^d x_i^{\mu_i}
\end{equation*}
is the \emph{monomial associated with $\mu$}. 
By definition, the degree-$t$ monomials span $\Hom_t(\RR^d)$ and they
are easily seen to be linearly independent as functions on $\RR^d$
(c.f.\ Lemma~\ref{lem:complexPolarization}, where we give a proof for
the complex case).
The symmetric vector
\begin{equation*}
	e_\mu:=\frac{1}{|S_t|} \sum_{\pi\in S_t} \pi\,( e_1^{\otimes
	\mu_1}\otimes \dots\otimes e_d^{\otimes \mu_d}) \quad \in
	\Sym_t(\RR^d)
\end{equation*}
clearly fulfils
\begin{equation*}
	\langle e_\mu, x^{\otimes t}\rangle = x^\mu \qquad \forall x\in\RR^d
\end{equation*}
and thus constitutes the polarization of $x^\mu$. In fact, we have:

\begin{lemma}\label{lem:realPolarization}
	The relations
	\begin{equation}\label{eqn:realpol}
		x^\mu 
		\mapsto e_\mu\qquad \forall 
		\mu \in\mathbb{N}_0^d, \mu \text{ partition of }t
	\end{equation}
	define a linear isomorphism from $\Hom_t(\RR^d)$ to
	$\Sym_t(\RR^d)$.
	Its inverse is
	\begin{equation*}
		f \mapsto p_f, \qquad p_f(x):= \langle f, x^{\otimes t} \rangle.
	\end{equation*}
\end{lemma}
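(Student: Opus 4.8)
The plan is to exhibit explicit bases on the two sides indexed by the same combinatorial set and to verify that the two maps in the statement are mutually inverse bijections between these bases. First I would record that the degree-$t$ monomials $\{x^\mu \,|\, \mu\in\mathbb{N}_0^d,\ \sum_i\mu_i=t\}$ are not merely a spanning set but a \emph{basis} of $\Hom_t(\RR^d)$: they are linearly independent as functions on $\RR^d$ because a nonzero polynomial over an infinite field is not the zero function (induction on $d$, freezing all but one variable; the complex analogue is spelled out in Lemma~\ref{lem:complexPolarization}). Hence $\dim\Hom_t(\RR^d)$ equals the number $N$ of such $\mu$, namely $\binom{d+t-1}{t}$. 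On the symmetric side, every element of $\Sym_t(\RR^d)$ is a linear combination of symmetrizations $\frac{1}{|S_t|}\sum_{\pi\in S_t}\pi(e_{i_1}\otimes\dots\otimes e_{i_t})$ of standard basis tensors; collecting the multi-indices $(i_1,\dots,i_t)$ according to how often each $e_k$ occurs shows that these symmetrizations are exactly the vectors $e_\mu$. So $\{e_\mu\}$ spans $\Sym_t(\RR^d)$ and consists of at most $N$ vectors.

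Next I would carry out the one genuinely computational step: evaluating the pairing $\langle e_\mu, x^{\otimes t}\rangle$, where $\langle\cdot,\cdot\rangle$ is the inner product on $(\RR^d)^{\otimes t}$ induced by the standard one on $\RR^d$. Since $\langle \pi(v_1\otimes\dots\otimes v_t), x^{\otimes t}\rangle=\prod_i\langle v_i,x\rangle$ is invariant under permuting the $v_i$, all $|S_t|$ terms defining $e_\mu$ contribute equally, giving $\langle e_\mu, x^{\otimes t}\rangle=\prod_{k=1}^d\langle e_k,x\rangle^{\mu_k}=x^\mu$; in other words $p_{e_\mu}=x^\mu$. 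Combined with the linear independence of the monomials this forces the $e_\mu$ to be linearly independent as well: from $\sum_\mu c_\mu e_\mu=0$, pairing with $x^{\otimes t}$ yields $\sum_\mu c_\mu x^\mu=0$ for all $x\in\RR^d$, hence every $c_\mu=0$. Therefore $\{e_\mu\}$ is in fact a basis of $\Sym_t(\RR^d)$, and $\dim\Sym_t(\RR^d)=N$.

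Finally I would assemble the conclusion. Because $\{x^\mu\}$ and $\{e_\mu\}$ are bases of $\Hom_t(\RR^d)$ and $\Sym_t(\RR^d)$ respectively, indexed by the same set of partitions, the prescription $x^\mu\mapsto e_\mu$ extends uniquely to a well-defined linear isomorphism $\Phi\colon\Hom_t(\RR^d)\to\Sym_t(\RR^d)$. The assignment $\Psi\colon f\mapsto p_f$, with $p_f(x)=\langle f,x^{\otimes t}\rangle$, is manifestly linear and lands in $\Hom_t(\RR^d)$, and by the computation above it sends $e_\mu\mapsto x^\mu$ for every $\mu$; thus $\Psi$ agrees with $\Phi^{-1}$ on a basis, so $\Psi=\Phi^{-1}$. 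This proves both that $x^\mu\mapsto e_\mu$ is an isomorphism and that $f\mapsto p_f$ is its inverse. The only point requiring any care is the linear independence of the monomials \emph{as functions} (rather than as formal symbols), and even that is standard; the remainder is bookkeeping, and the same argument transfers essentially verbatim to the bidegree-$(t,t)$ complex setting of Lemma~\ref{lem:complexPolarization}.
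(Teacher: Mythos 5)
Your proof is correct and follows essentially the same route as the paper: the paper establishes the spanning/independence of the monomials and the identity $\langle e_\mu, x^{\otimes t}\rangle = x^\mu$ in the text preceding the lemma, and its proof then only records that the $e_\mu$ span $\Sym_t(\RR^d)$ as projections of the standard product basis. You simply fill in those same steps explicitly within the proof itself.
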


\begin{proof}
	It remains to be shown that the map (\ref{eqn:realpol}) is onto,
	i.e.\ that the $\{e_\mu\}_{\mu}$ span $\Sym_t(\RR^d)$. 
	That is true because the set $\{e_\mu\}$ constitutes the projection
	onto $\Sym_t(\RR^d)$ of the standard tensor product basis
	$\{e_{i_1}\otimes \dots \otimes e_{i_t}\}_{i_j \in\RR^d}$. 
\end{proof}

As a corollary, we see directly that the set of tensor powers spans
the totally symmetric space:
\begin{equation}\label{eqn:outerspan}
	\Sym_t(\RR^d) = 
	\langle 
		\{ u^{\otimes t} \,|\, u \in \RR^d \}
	\rangle.
\end{equation}

We now turn to the relevant symmetries.
By definition, the orthogonal group $\rmO(d)$ acts on $\RR^d$. 
The action extends to the \emph{diagonal action} on degree-$t$ tensors
$(\RR^d)^{\otimes t}$: 
\begin{equation*}
	O:
	u_1\otimes\dots\otimes u_t \mapsto 
	(O\otimes\dots\otimes O)	\,
	(u_1\otimes\dots\otimes u_t)
	=
	(Ou_{1}) \otimes
	\dots\otimes (Ou_{t}) \qquad u_i\in\RR^d.
\end{equation*}
The diagonal representation of $\rmO(d)$ commutes with the action of the
symmetric group $S_t$ defined in Eq.~(\ref{eqn:perms}). This implies
in particular that $\Sym_t(\RR^d)$ is an invariant subspace 
and thus carries a representation of $\rmO(d)$. 
By Eq.~(\ref{eqn:pf}), the action coincides with the natural action
\begin{equation*}
	O: p( \cdot ) \mapsto p(O^{-1} \cdot) =  p( O^{T} \cdot )
\end{equation*}
of $\rmO(d)$ on polynomials in $\Hom_t(\RR^d)$. The representation of $\rmO(d)$ on
$\Sym_t(\RR^d) \simeq \Hom_t(\RR^d)$ is reducible (this is an important
conceptual distinction to the complex case, described below).  We will
now describe the irreducible representations for the case of even
degree
-- first as subspaces of $\Sym_{t}(\RR)$ and then viewed as subspaces
of $\Hom_t(\RR^d)$.

Choose an ortho-normal basis $\{e_1, \dots, e_d\}$ of $\RR^d$. In the
case of of $t=2$, it follows directly from the defining property $O
O^\rmT = \mathbb{I}$ of the orthogonal group that
\begin{equation*}
	v_0 = \sum_{i=1}^d e_i \otimes e_i \in \Sym_2(\RR^d)
\end{equation*}
is an invariant vector. 
A fruitful way to think about this fact lies in the relation
\begin{equation}\label{eqn:v0innerprod}
	\langle v_0, u_1\otimes u_2\rangle = \langle u_1, u_2
	\rangle \qquad \forall\, u_i \in\RR^d,
\end{equation}
together with the fact that the orthogonal group preserves inner
products like those appearing on the right hand side. If we expand
tensors in coordinates
\begin{equation*}
	u = \sum_{i,j} u_{i,j}\,(e_i \otimes e_j)\,\in (\RR^d)^{\otimes 2},
\end{equation*}
then the inner product with $v_0$ corresponds to a contraction of the
indices
\begin{equation*}
	\langle v_0, u \rangle = \sum_i u_{i,i}.
\end{equation*}
We can apply these findings to higher orders $t>2$ by ``contracting
only two of the indices with $v_0$''. 
More precisely, define the \emph{contraction map} 
\begin{equation*}
	C: (\RR^d)^{\otimes t} \to (\RR^d)^{\otimes (t-2)}
\end{equation*}
by its actions  on product tensors as follows:
\begin{equation}\label{eqn:contraction}
	C: u_1\otimes \dots \otimes u_t \mapsto 
	\langle u_1, u_2\rangle \,
	(u_{3} \otimes \dots \otimes u_t).
\end{equation}
In coordinates, it corresponds to a contraction of the first two
indices
\begin{equation*}
	u_{i_1, \dots, i_t} \mapsto \sum_{i} u_{i,i,i_3, \dots, i_t}.
\end{equation*}
From (\ref{eqn:contraction}), it is clear that the kernel $\ker C$ is
an invariant subspace of the diagonal representation $O \mapsto
O^{\otimes t}$ of $\rmO(d)$. 
The same is true for its intersection 
\begin{equation*}
	H_t(\RR^d):=\ker C \cap \Sym_t(\RR^d)
\end{equation*}
which turns out to carry an irreducible representation.
We refer to $H_t$ as the \emph{harmonic totally symmetric
space}
of degree $t$.
From Eq.~(\ref{eqn:outerspan}), it follows that $C$ maps $\Sym_t(\RR^d)$
onto $\Sym_{t-2}(\RR^d)$ and thus
\begin{equation*}
	\Sym_{t-2}(\RR^d) \simeq 
	\Sym_{t}(\RR^d) / \ker C \simeq 
	(\ker C)^\perp,
\end{equation*}
where the ortho-complement is taken within $\Sym_t(\RR^d)$.
Therefore, we can embed $H_{t-2}(\RR^d)$ into $\Sym_t(\RR^d)$, as the
kernel of $C\circ C$ intersected with $(\ker C)^\perp$. Iterating this
procedure and setting
\begin{equation*}
	H_0(\RR^d) = \Sym_0(\RR^d)=\RR, 
\end{equation*}
one obtains the decomposition
\begin{equation}\label{eqn:harmonicdecomp}
	\Sym_t(\RR^d) 
	\simeq 
	\bigoplus_{i=0}^{t/2} H_{2i}(\RR^d).
\end{equation}
of $\Sym_t(\RR^d)$ into irreducible representations of $\rmO(d)$. 
The embedding $H_{t-2j}\to
\Sym_t(\RR^d)$ is given explicitly by 
\begin{align}\label{eqn:embeddingf}
	f \mapsto 
	P_{[t]}	(v_0^{\otimes j} \otimes f\big),
	\qquad
	P_{[t]}:=\frac1{t!}\sum_{\pi \in S_t} \pi.
\end{align}
In particular, the one-dimensional invariant space
$H_0(\RR^d)$ is realized as a multiple of
\begin{align}\label{eqn:embeddingfinv}
	v_0^{(t)} = P_{[t]}\, v_0^{\otimes t/2}.
\end{align}
We will use these embeddings implicitly from now on and treat $H_{2i}(\RR^d)$ as a subset of $\Sym_t(\RR^d)$.
The same convention will apply to related spaces for polynomials, and in the complex case.
It turns out that these embeddings are much more transparent from the point of view of
polynomials, as described next.

Using the language of homogeneous polynomials, one finds
that up to normalization, the contraction $C$ corresponds to the action of the \emph{Laplacian}.
Recall that the \emph{Laplacian} differential operator
\begin{equation*}
	\Delta=\sum_{i=1}^d 
	\frac{\partial^2}{\partial{x_i^2}}: \Hom_t(\RR^d) \to \Hom_{t-2}(\RR^d).
\end{equation*}
Simple direct calculations yield
\begin{align*}
	\Delta x^\mu &= \sum_{i=1}^d \mu_i(\mu_i-1) x^{\mu - 2 e_i}, \\
	C e_\mu &= \frac1{t(t-1)} \, \sum_{i=1}^d \mu_i (\mu_i-1) e_{\mu - 2 e_i}.
\end{align*}
Therefore, up to normalization, the contraction operator $C$ corresponds to the Laplacian in the sense that
\begin{equation*}
	\Delta p_f 
	=
	t(t-1)\,
	p_{C f} 
	\qquad \forall f \in \Sym_t(\RR^d).
\end{equation*}
Polynomials in the kernel of the Laplacian are called \emph{harmonic}
and we write 
\begin{equation*}
	\Harm_t(\RR^d) := \ker \Delta \cap \Hom_t(\RR^d).
\end{equation*}
As above, this gives rise to the decomposition
\begin{equation}\label{eqn:harmonicpolydecomp}
	\Hom_t(\RR^d)
	\simeq
	\bigoplus_{i=0}^{t/2} \Harm_{2i}(\RR^d)
\end{equation}
of the set of all homogeneous polynomials of degree $t$ into
irreducible spaces, equivalent to harmonic polynomials of lower
degrees.
The embedding of $\Harm_{t-2j}(\RR^d)\to\Hom_t(\RR^d)$ used in
Eq.~(\ref{eqn:harmonicpolydecomp}) maps $p\mapsto p'$, where
\begin{align}
	p'(u) &=
	\|u\|_2^{2j} p(u)\qquad \forall u\in \RR^d,
	\label{eqn:embedding}
\end{align}
and the one-dimensional space $\Harm_{0}(\RR^d)$ is realized 
within $\Hom^t(\RR^d)$
as  multiples of
\begin{equation}\label{eqn:embeddinginv}
	u \mapsto \|u\|_{2}^{t}.
\end{equation}
Equations (\ref{eqn:embedding}) and (\ref{eqn:embeddinginv}) are the
polynomial analogues of
Eqs.~(\ref{eqn:embeddingf}) and (\ref{eqn:embeddingfinv}),
respectively.

We are frequently concerned with the restriction
$\Hom_t(S^{d-1})$ of $\Hom_t(\RR^d)$ to the unit sphere
$S^{d-1}\subset \RR^d$. 
There, the embedding in Eq.~(\ref{eqn:embedding}) becomes trivial, so
that $\Harm_{2i}(S^{d-1})$ is a subset of $\Hom_t(S^{d-1})$. For any
homogeneous polynomial $p$ of degree $t$, it is true by definition
that
\begin{equation*}
	p(u)= \|u\|_2^{t}\,p\left(\frac{u}{\|u\|_2}\right).
\end{equation*}
Thus, homogeneous polynomials  are fully specified by their
restriction to $S^{d-1}$ and therefore $\Hom_t(\RR^d)\simeq
\Hom_t(S^{d-1})$.

Given an even number $t\geq0$,
a \emph{spherical $t$-design}  \cite{DelsGS77} for $S^{d-1}$ is a set of vectors $X\subset S^{d-1}$ such that
\begin{equation*}
	\frac1{|X|} \sum_{\psi \in X} \psi^{\otimes t} \in H_0(\RR^d).
\end{equation*}

It is a $(t+1)$-design if in addition 
\begin{equation*}
	\sum_{\psi \in X} \psi^{\otimes t+1} = 0.	
\end{equation*}
Clearly, this equation is satisfied automatically when the set $X$ is symmetric under inversion, that is, $-\psi\in X$ whenever $\psi\in X$.

Let $G\subset \rmO(d)$ be a subgroup of $\rmO(d)$. 
For simplicity, we take $G$ to be finite.
The spaces $H_{2i}$ are $G$-invariant. 
For each $i$, let $\{ b^{(2i)}_{j} \}_j$ be an orthonormal basis of the space of  $G$-invariant vectors in $H_{2i}(\RR^d)$.
If now $X = G \cdot \psi_0$ is a $G$-orbit, then 
\begin{equation*}
	\frac1{|X|} \sum_{\psi \in X} \psi^{\otimes t}  
	=
	\sum_{j,i} b^{(2i)}_j \langle b^{(2i)}_j, \psi_0^{\otimes t} \rangle
	=
	\sum_{j,i} b^{(2i)}_j \, p_{b^{(2i)}_j}(\psi_0).
\end{equation*}
In particular, $G\cdot \psi_0$ is a $t$-design if and only if $p_{b^{(2i)}_j}(\psi_0)=0$ for all $i\geq 1$ and all $j$.
This is equivalent to saying that $\psi_0$ is a root of all $G$-invariant harmonic polynomials.
If $G$ affords no harmonic invariants of degree $s$ for $1\leq s \leq t$, then the orbit of any vector $\psi_0$ constitutes a spherical $t$-design.

\subsection{Complex case}

In analogy to the real case, we define $\Hom_{t}(\CC^d)$ to be the
complex vector space of homogeneous polynomials in $d$ complex variables.  
The
definitions of $\Sym_t(\CC^d)$, the monomials $x^\mu$, and symmetric
tensors $e_\mu$ carry over from the real case. 
The map
$x^\mu \to e_\mu$ defines 
an (anti-linear)
isomorphism
$\Hom_t(\CC^d)\to\Sym_t(\CC^d)$. 
Thus, the complex analogue 
\begin{equation*}
	\Sym_t(\CC^d) = \langle \{u^{\otimes t} \,|\, u\in\CC^d\}\rangle
\end{equation*}
of Eq.~(\ref{eqn:outerspan}) holds.

In the main part of this paper, we frequently work with a notion
of ``sesquilinear'' polynomials.
To define this concept, let $p\in\Hom_{2t}(\CC^d)$. 
Then we define a function that is homogeneous of degree $t$ in
the coordinates $x$ with respect to the complex coordinates $x_1,
\dots, x_d$ on $\CC^d$ and also homogeneous of degree $t$ in the
complex conjugates $\bar x_1, \dots, \bar x_d$ via
\begin{equation*}
	x \mapsto p(\bar x, x).
\end{equation*}
We denote the set of all such polynomials of \emph{bi-degree $(t,t)$}
on $\CC^d$ by $\Hom_{(t,t)}(\CC^d)$.
The following lemma establishes the right notion of  polarization for
functions in $\Hom_{(t,t)}(\CC^d)$.

\begin{lemma}
\label{lem:complexPolarization}
	The relations
	\begin{equation}\label{eqn:complexpol}
		x^\mu \bar x^\nu 
		\mapsto e_\mu\, e_\nu^* \qquad \forall \mu, \nu
		\in\mathbb{N}_0^d\text{ partitions of }t
	\end{equation}
	define a linear isomorphism from $\Hom_{(t,t)}(\CC^d)$ to the set of
	linear maps $L(\Sym_t(\CC^d))$ on $\Sym_t(\CC^d)$.
	Its inverse is
	\begin{equation*}
		A \mapsto p_A, \qquad p_A(\bar x, x):=
		\tr \big(A ( x^{\otimes t}) \, (\bar x^{\otimes t})^\rmT\big).
	\end{equation*}
\end{lemma}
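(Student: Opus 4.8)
The plan is to verify that the prescription \eqref{eqn:complexpol} is a well-defined $\CC$-linear map, that $A\mapsto p_A$ is $\CC$-linear, that the two spaces involved have the same finite dimension $D_{[t]}^2$, and that the two maps agree on a basis; the lemma then follows by elementary linear algebra. First I would pin down the domain. The crucial point — and the one I expect to require the most care — is that the functions $\{x^\mu\bar x^\nu\mid \mu,\nu\in\mathbb{N}_0^d,\ |\mu|=|\nu|=t\}$ are linearly independent \emph{as functions on $\CC^d$}, so that \eqref{eqn:complexpol} is well defined on polynomial functions and not merely on formal symbols. The cleanest route is the real substitution $x_j=a_j+\rmi b_j$: a relation $\sum_{\mu,\nu}c_{\mu\nu}x^\mu\bar x^\nu=0$ on $\CC^d$ becomes the vanishing on all of $\RR^{2d}$ of the polynomial obtained from $\sum_{\mu,\nu}c_{\mu\nu}X^\mu Y^\nu$ by the invertible linear substitution $(X,Y)\mapsto(a+\rmi b,\,a-\rmi b)$; since a polynomial with complex coefficients that vanishes identically on $\RR^{2d}$ is the zero polynomial and the substitution is invertible, all $c_{\mu\nu}=0$. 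Hence the $x^\mu\bar x^\nu$ form a basis of $\Hom_{(t,t)}(\CC^d)$, so $\dim\Hom_{(t,t)}(\CC^d)=\binom{d+t-1}{t}^2=D_{[t]}^2$, and \eqref{eqn:complexpol} extends uniquely to a $\CC$-linear map $\Theta\colon\Hom_{(t,t)}(\CC^d)\to L(\Sym_t(\CC^d))$.

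Next I would treat the codomain. As recalled just above the lemma (the complex analogue of Lemma~\ref{lem:realPolarization}), the vectors $e_\mu$ with $|\mu|=t$ are the images of the standard product basis of $(\CC^d)^{\otimes t}$ under the projector $P_{[t]}$ onto $\Sym_t(\CC^d)$; since there are $\binom{d+t-1}{t}=D_{[t]}=\dim\Sym_t(\CC^d)$ of them, they form a basis of $\Sym_t(\CC^d)$. Consequently the outer products $e_\mu e_\nu^*$ form a basis of $L(\Sym_t(\CC^d))$, which therefore also has dimension $D_{[t]}^2$. Thus $\Theta$ carries a basis of $\Hom_{(t,t)}(\CC^d)$ bijectively onto a basis of $L(\Sym_t(\CC^d))$ and is a linear isomorphism.

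Finally I would check that the (manifestly $\CC$-linear) map $A\mapsto p_A$ is $\Theta^{-1}$. By linearity it suffices to evaluate it on the basis $\{e_\mu e_\nu^*\}$ and confirm $p_{e_\mu e_\nu^*}=x^\mu\bar x^\nu$. This is a short computation: from $e_\mu=\tfrac{1}{t!}\sum_{\pi\in S_t}\pi\,(e_1^{\otimes\mu_1}\otimes\dots\otimes e_d^{\otimes\mu_d})$ together with the fact that $x^{\otimes t}$ is already symmetric one obtains $\langle e_\mu,x^{\otimes t}\rangle=x^\mu$; substituting this into $p_A(\bar x,x)=\tr\big(A\,(x^{\otimes t})(\bar x^{\otimes t})^{\rmT}\big)$ with $A=e_\mu e_\nu^*$ and tracking the single complex conjugation carried by the factor $(\bar x^{\otimes t})^{\rmT}$ yields precisely $x^\mu\bar x^\nu$. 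Since $A\mapsto p_A$ and $\Theta^{-1}$ are linear and coincide on a basis, they are equal, completing the proof. To summarize, the only genuinely non-formal ingredient is the linear independence in the first paragraph (where it matters that the scalar field is infinite); the remaining care needed is the conjugation bookkeeping in the final computation, and everything else is dimension counting plus the standard fact that outer products of basis vectors form an operator basis.
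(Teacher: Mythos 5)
Your proof is correct and shares the overall skeleton of the paper's argument (linear independence of the monomials, the outer products $e_\mu e_\nu^*$ forming a basis of $L(\Sym_t(\CC^d))$, and a dimension/basis count), but the one genuinely substantive step --- linear independence of the functions $x^\mu\bar x^\nu$ on $\CC^d$ --- is handled by a different device. The paper exhibits a dual family of Wirtinger differential operators $\partial_{(\alpha,\beta)}$ with $\partial_{(\alpha,\beta)}\,x^\mu\bar x^\nu=\delta_{\alpha\mu}\delta_{\beta\nu}$ on bi-degree-$(t,t)$ monomials, which extracts each coefficient directly; you instead substitute $x=a+\rmi b$ and reduce to the vanishing of a polynomial on all of $\RR^{2d}$ precomposed with an invertible linear change of variables. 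Both are sound. Your route is more elementary, needing only that a polynomial vanishing identically on $\RR^{2d}$ is the zero polynomial; the paper's route has the side benefit of introducing the Wirtinger calculus and the complex Laplacian $\Delta_\CC$, which the appendix reuses immediately afterwards to define the harmonic subspaces $\Harm_{(t,t)}$. You are also somewhat more explicit than the paper in verifying that $A\mapsto p_A$ inverts the monomial assignment on the basis $\{e_\mu e_\nu^*\}$; the paper leaves that to the construction. (One small bookkeeping remark: reading $e_\mu e_\nu^*$ as the rank-one operator $|e_\mu\rangle\langle e_\nu|$, the trace computation returns $x^\nu\bar x^\mu$ rather than $x^\mu\bar x^\nu$ --- an immaterial swap of the index roles that is already implicit in the paper's own conventions, and which does not affect the bijectivity claim.)
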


Our proof of the polarization relation relies on the notion of
\emph{Wirtinger derivatives}, introduced next. On $\CC^d$, one can
define real coordinates $(x_1, \dots, x_d, y_1, \dots, y_d)$ by
writing
\begin{equation*}
	\CC^d \ni z = (z_1, \dots, z_n) = (x_1 + \rmi y_1, \dots, x_n + \rmi y_n).
\end{equation*}
The Wirtinger derivatives are
\begin{equation*}
	\frac{\partial}{\partial{z_i}} := 
	\frac12\left(
		\frac{\partial}{\partial{x_i}} + \rmi 
		\frac{\partial}{\partial{y_i}}
	\right),
	\qquad
	\frac{\partial}{\partial{\bar z_i}} := 
	\frac12\left(
		\frac{\partial}{\partial{x_i}} - \rmi 
		\frac{\partial}{\partial{y_i}}
	\right).
\end{equation*}
The motivation for this definition is that the complex functions $z_1,
\dots, z_n$ and $\bar z_1, \dots, \bar z_n$ behave like $2n$
independent variables with respect to these derivatives in the sense
that
\begin{equation*}
	\frac{\partial}{\partial{z_i}}  z_j = \delta_{i,j}, \qquad
	\frac{\partial}{\partial{z_i}}  \bar z_j = 0, \qquad
	\frac{\partial}{\partial{\bar z_i}}  z_j = 0, \qquad
	\frac{\partial}{\partial{\bar z_i}}  \bar z_j = \delta_{i,j},
\end{equation*}
as one can easily verify. For  multi-indices $\mu,\nu\in\mathbb{N}_0^d$,
define
\begin{equation*}
	\partial_{(\mu,\nu)} := 
	\prod_{i=1}^d 
	\frac{1}{\mu_i! \nu_i!}\,
	\frac{\partial^{\mu_i}}{\partial{z_i}^{\mu_i}}  
	\frac{\partial^{\nu_i}}{\partial{\bar z_i}^{\nu_i}}.
\end{equation*}
Below, we will also use the \emph{complex Laplacian}
\begin{equation*}
	\Delta_\CC 
	:= 
	\sum_{i=1}^d 
	\frac{\partial}{\partial{z_i}}  
	\frac{\partial}{\partial{\bar z_i}}.
\end{equation*}

\begin{proof}[Proof of Lemma~\ref{lem:complexPolarization}]
	We first show that the set of monomials $\{x^\mu \bar x^\nu\}_{\mu,
	\nu}$ is linearly independent, where $\mu,\nu$ range over
	partitions of $t$ into $d$ parts. 
	Indeed, if $(\alpha,\beta)$ are two partitions of $t$, then 
	\begin{equation*}
		\partial_{(\alpha,\beta)}\, 
		x^{\mu}\bar x^{\nu} 
		= \delta_{\alpha, \mu}\delta_{\beta,\nu},
	\end{equation*}
	because for each $(\mu,\nu)\neq (\alpha,\beta)$, at least one of the
	variables has lower degree than the corresponding derivative. 
	Consequently,
	\begin{equation*}
		\partial_{(\alpha,\beta)}\, 
			\sum_{\mu,\nu} c_{\mu,\nu}\,x^{\mu}\bar x^{\nu}  
		= c_{\alpha,\beta},
	\end{equation*}
	and therefore, a linear combination of monomials cannot be zero
	unless all the coefficients are.

	It follows that the linear map of \eref{eqn:complexpol} is
	well-defined.
	It is onto, because the $\{e_\mu\}_\mu$ form a basis of
	$\Sym_t(\CC^d)$ and the rank-one outer products
	$\{ e_\mu e_\nu^* \}_{\mu,\nu}$  constitute a basis
	of the space of linear maps $L(\Sym_t(\CC^d))$. 
\end{proof}

For completeness, we mention a misconception that we have encountered
more than once. Namely, while
\begin{equation}\label{eqn:Ltensor}
	L\left( (\CC^d)^{\otimes t} \right)
	\simeq
	\left( L(\CC^d) \right)^{\otimes t},
\end{equation}
we only have
\begin{equation*}
	L\left( \Sym_t(\CC^d) \right) 
	\subset 
	\Sym_t\left( L(\CC^d) \right),
\end{equation*}
where the inclusion is proper. To understand the equivalence in
(\ref{eqn:Ltensor}), we make use of the fact that for any vector space
$V$, the space of  linear maps on $V$ is a tensor product space:
$L(V)\simeq V\otimes V^*$.  
Then the isomorphism
$L\left( (\CC^d)^{\otimes t} \right) \to 
	\left( L(\CC^d) \right)^{\otimes t}$
just amounts to  a re-ordering of tensor factors:
\begin{equation}\label{eqn:LtensorDef}
	(u_1\otimes \dots \otimes u_t)
	\otimes 
	(\alpha_1 \otimes \dots \otimes \alpha_t)
	\mapsto
	(u_1\otimes \alpha_1) \otimes \dots 
	\otimes 
	(u_t \otimes \alpha_t),
	\qquad
	u_i \in \CC^d, \quad \alpha_i \in (\CC^d)^*.
\end{equation}
We make liberal and implicit use of this identification throughout the paper (e.g.\ in 
statement
\emph{\ref{item:notquiteright}.}\ in 
Proposition~\ref{prop:designequiv}).

The isomorphism restricts to a map 
\begin{equation}\label{eqn:SymReorder}
	L\left( \Sym_t(\CC^d) \right) 
	\to
	\Sym_t\left( L(\CC^d) \right),
	\qquad
	(u\otimes \dots \otimes u)
	\otimes 
	(\alpha \otimes \dots \otimes \alpha)
	\mapsto
	(u\otimes \alpha) \otimes \dots 
	\otimes 
	(u \otimes \alpha).
\end{equation}
However, while the l.h.s.\ does span $L(\Sym_t(\CC^d))$, there is no
reason to believe that the r.h.s.\ spans $\Sym_t(L(\CC^d))$. In
fact, 
\begin{equation*}
	\dim 
	L\left( \Sym_t(\CC^d) \right) 
	=
	\binom{d+t-1}{t}^2
	<
	\binom{d^2+t-1}{t}
	=
	\dim \Sym_t\left( L(\CC^d) \right) \qquad \forall\,d,t \geq 2,
\end{equation*}
so the map cannot be onto.
Thus, the ``order of $L$ and $\Sym$'' in the above lemma cannot be
interchanged. 

For $\CC^d$, the relevant symmetry group is $\rmU(d)$. 
We now discuss its representation on $L(\Sym_t(\CC^d))$, where we will identify an analogue of harmonic symmetric tensors.

The complex version of the contraction map $C$ is the \emph{partial
trace} $\tr_1$. It is defined by
\begin{equation*}
	\tr_1: 
	L\big( (\CC^d)^{\otimes t} \big)
	\to
	L\big( (\CC^d)^{\otimes (t-1)} \big),
	\qquad
	(u_1\otimes \dots \otimes u_t)
	\otimes 
	(\alpha_1 \otimes \dots \otimes \alpha_t)
	\mapsto
	\alpha_1(u_1) \,
	(u_2\otimes \dots \otimes u_t)
	\otimes 
	(\alpha_2 \otimes \dots \otimes \alpha_t).
\end{equation*}
Using the isomorphism (\ref{eqn:LtensorDef}), we can equivalently
define the partial trace by its action on $(L(\CC^d))^{\otimes t}$,
where it takes the form
\begin{equation*}
	\tr_1: 
	A_1\otimes\dots\otimes A_t 
	\mapsto
	(\tr A_1)\, A_2 \otimes \dots \otimes A_t,
	\qquad
	A_i \in L(\CC^d).
\end{equation*}
The origin of the name ``partial trace'' becomes most apparent in this
formulation.  In any case, it is clear that the space $\ker \tr_1$ is
invariant under the action by conjugation
\begin{equation*}
	A \mapsto U^{\otimes t} A (U^\dagger)^{\otimes t}
\end{equation*}
of $\rmU(d)$ on $L( (\CC^d)^{\otimes t})$. 
We now define the \emph{complex harmonic tensors of bi-degree $(t,t)$}
to be 
\begin{equation*}
	H_{(t,t)}(\CC^d):= \ker \tr_1 \cap L(\Sym_t(\CC^d)).
\end{equation*}
As in the real case, this gives rise to a decomposition 
\begin{equation}\label{eqn:complexharmonicdecomp}
	L(\Sym_t(\CC^d))
	\simeq 
	\bigoplus_{i=0}^{t} H_{(i,i)}(\CC^d)
\end{equation}
in terms of irreducible \cite[Chapter~12.2]{Rudi80book} representations of $\rmU(d)$.

In the language of polynomials, the partial trace maps to the complex Laplacian up to a normalization factor:
\begin{align*}
	\Delta_\CC x^\mu \bar x^\nu
	&=
	\sum_{i=1}^d 
	\mu_i \nu_i 
	\,x^{\mu - e_i}\bar x^{\nu - e_i}, \\
	\tr_1 e_\mu e^*_\nu
	&=
	\frac1{t^2}
	\sum_{i=1}^d 
	\mu_i \nu_i
	\,e_{\mu - e_i}e^*_{\nu - e_i},
\end{align*}
so that
\begin{equation*}
	\Delta_\CC\,p_A
	=
	t^2\,
	p_{(\tr_1 A)} \qquad \forall A \in L(\Sym_{t}(\CC^d)).
\end{equation*}
The \emph{harmonic polynomials of bi-degree $(t,t)$} are
\begin{equation*}
	\Harm_{(t,t)}(\CC^d) := \ker \Delta_\CC \cap \Hom_{(t,t)}(\CC^d).
\end{equation*}
An embedding of $\Harm_{(t-j,t-j)}(\CC^d) \to \Hom_{(t,t)}(\CC^d)$ is given by
$p\mapsto p'$, with
\begin{equation*}
	p'(\bar x, x) = \|x\|_2^{2j} p(\bar x, x).
\end{equation*}
For $H_{(t-j,t-j)} \to L(\Sym_t(\CC^d))$, this corresponds to
\begin{equation}\label{eqn:complexSymmetricEmbedding}
	A \mapsto P_{[t]} \left(\Id^{\otimes j} \otimes A\right)P_{[t]}, 
\end{equation}
where $P_{[t]}$ is the projector onto $\Sym_{t}(\CC^d)$.

In the current language, a \emph{complex projective $t$-design} for $\CC^d$ is a set $X$ of normalized vectors such that
\begin{equation*}
	\frac1{|X|} \sum_{\psi \in X} \left(\ket\psi\bra\psi\right)^{\otimes t} \in H_{(0,0)}(\CC^d).
\end{equation*}
Let $G\subset \rmU(d)$ be a finite subgroup of $\rmO(d)$. 
As in the real case, let
$\{ B^{(2i)}_{j} \}_j$ be an orthonormal basis of the $G$-invariant linear maps in $H_{(i,i)}(\CC^d)$.
For $X = G \cdot \psi_0$, we get 
\begin{equation*} \frac1{|X|} \sum_{\psi \in X} \left(\ket\psi\bra\psi\right)^{\otimes t} 
	=
	\sum_{j,i} B^{(2i)}_j \tr\left(B^{(2i)}_j (\ket{\psi_0}\bra{\psi_0})^{\otimes t}\right)
	=
	\sum_{j,i} B^{(2i)}_j \, p_{B^{(2i)}_j}(\bar\psi_0,\psi_0).
\end{equation*}
In particular, $G\cdot \psi_0$ is a complex projective $t$-design if and only if $p_{B^{(2i)}_j}(\bar\psi_0,\psi_0)=0$ for all $i\geq 1$ and all $j$.
This is equivalent to saying that $\psi_0$ is a root of all $G$-invariant harmonic polynomials.

Comparing these conditions with the analogues ones of the real case, we note that a \emph{real spherical $t$-design} is defined in terms of powers $\psi^{\otimes t}$ of degree $t$, while \emph{complex projective $t$-designs} depend on the behavior of $(\ket{\psi}{\bra{\psi}})^{\otimes t}$ of \emph{bi-}degree $t$. 
Therefore, complex projective $t$-designs are often conceptually close to real spherical designs of degree $2t$ or even $2t+1$. 
One example of such a connection is discussed in Section~\ref{sec:harmonic}.

If $G$ affords no harmonic invariants of degree $(s,s)$ for $1\leq s \leq t$, then the orbit of any vector $\psi_0$ constitutes a complex projective $t$-design.
Using Schur's Lemma and the identification of $\Hom_{(t,t)}(\CC^d)$ with $L(\Sym_t(\CC^d))$, this condition is equivalent to demanding that $G$ acts irreducibly on $\Sym_t(\CC^d)$.

For real actions and spherical designs, irreducibility is sufficient but  \emph{not} necessary \cite{Harp04}.
The crucial difference seems to be that
\begin{equation*}
	\Hom_{(t,t)}(\CC^d) \simeq L(\Sym_t(\CC^d)),
	\qquad
	\text{while}
	\qquad
	\Hom_{t}(\RR^d) \varsubsetneqq L(\Sym_{t/2}(\RR^d)).
\end{equation*}
Indeed, let $f\in \Sym_t(\RR^d)\subset (\RR^d)^{\otimes t}$, for $t$ even.
The standard inner product on $\RR^d$ induces a linear isomorphism $\RR^d \simeq (\RR^d)^*$. 
Applying this isomorphism to the ``rear $t/2$'' factors, we obtain an isomorphism
\begin{equation*}
	(\RR^d)^{\otimes t}	
	\simeq
	(\RR^d)^{\otimes t/2}\otimes ((\RR^d)^*)^{\otimes t/2}
	\simeq
	L( (\RR^d)^{\otimes t/2} ).
\end{equation*}
Explicitly,
\begin{equation}\label{eqn:totallysymmetric}
	u_1 \otimes \dots \otimes u_t 
	\mapsto
	(u_1\otimes \dots \otimes u_{t/2})
	(u_{t/2+1}\otimes \dots \otimes u_{t})^*.
\end{equation}
The image of $\Sym_{t}(\RR^d)$ under this isomorphism is a subset 
\begin{equation*}
	\operatorname{MSym}_{t/2}\subset L(\Sym_{t/2}(\RR^d)),
\end{equation*}
which, following \cite{DoheW13}, we refer to as the set of \emph{maximally symmetric matrices}.
These are linear maps whose range and support are on the totally symmetric subspace, and which are in addition invariant under the partial transpose operation. 
With respect to a product basis, these matrices are invariant not only under permutations of covariant \emph{or} contravariant indices amongst themselves, but in addition under permutations of arbitrary indices \cite{DoheW13}.

\bibliographystyle{abbrv}

\bibliography{all_references}

\begin{thebibliography}{10}

\bibitem{AbeyDHW09}
A.~Abeyesinghe, I.~Devetak, P.~Hayden, and A.~Winter.
\newblock The mother of all protocols: restructuring quantum
  information{\textquoteright}s family tree.
\newblock 465(2108):2537--2563, 2009.

\bibitem{AmbaBW09}
A.~Ambainis, J.~Bouda, and A.~Winter.
\newblock Nonmalleable encryption of quantum information.
\newblock {\em J. Math. Phys.}, 50(4), 2009.

\bibitem{AmbaE07}
A.~Ambainis and J.~Emerson.
\newblock Quantum $t$-designs: $t$-wise independence in the quantum world.
\newblock In {\em Twenty-Second Annual IEEE Conference on Computational
  Complexity (CCC'07)}, pages 129--140, 2007.

\bibitem{AmbuSSW14}
I.~Amburg, R.~Sharma, D.~M. Sussman, and W.~K. Wootters.
\newblock States that look the same with respect to every basis in a mutually
  unbiased set.
\newblock {\em J. Math. Phys.}, 55(12), 2014.

\bibitem{Appl05}
D.~M. Appleby.
\newblock Symmetric informationally complete-positive operator valued measures
  and the extended {C}lifford group.
\newblock {\em J. Math. Phys.}, 46:052107, 2005.

\bibitem{ApplBBG12}
D.~M. Appleby, I.~Bengtsson, S.~Brierley, M.~Grassl, D.~Gross, and J.-{\r{A}}.
  Larsson.
\newblock The monomial representations of the \uppercase{C}lifford group.
\newblock {\em Quant. Inf. Comput.}, 12:0404--0431, 2012.

\bibitem{ApplBD15}
D.~M. Appleby, I.~Bengtsson, and H.~B. Dang.
\newblock Galois unitaries, mutually unbiased bases, and {MUB}-balanced states.
\newblock {\em Quantum Info. Comput.}, 15(15-16):1261--1294, 2015.

\bibitem{ApplFZ15G}
D.~M. Appleby, C.~A. Fuchs, and H.~Zhu.
\newblock Group theoretic, {L}ie algebraic and {J}ordan algebraic formulations
  of the {SIC} existence problem.
\newblock {\em Quantum Inf. Comput.}, 15(1-2):61--94, 2015.

\bibitem{Bann79}
E.~Bannai.
\newblock On some spherical t-designs.
\newblock {\em J. Combin. Theory Ser. A}, 26(2):157 -- 161, 1979.

\bibitem{BannB09}
E.~Bannai and E.~Bannai.
\newblock A survey on spherical designs and algebraic combinatorics on spheres.
\newblock {\em Eur. J. Combinator.}, 30(6):1392--1425, 2009.

\bibitem{BranHH15}
F.~G. S.~L. Brandao, A.~W. Harrow, and M.~Horodecki.
\newblock {Local random quantum circuits are approximate polynomial-designs},
  2015.
\newblock Available at \url{http://arxiv.org/abs/1208.0692}.

\bibitem{BravK05}
S.~Bravyi and A.~Kitaev.
\newblock Universal quantum computation with ideal {C}lifford gates and noisy
  ancillas.
\newblock {\em Phys. Rev. A}, 71:022316, 2005.

\bibitem{Came99book}
P.~J. Cameron.
\newblock {\em Permutation Groups}, volume~45 of {\em London Mathematical
  Society Student Texts}.
\newblock Cambridge University Press, Cambridge, UK, 1999.

\bibitem{CameK79}
P.~J. Cameron and W.~M. Kantor.
\newblock 2-transitive and antiflag transitive collineation groups of finite
  projective spaces.
\newblock {\em J. Algebra}, 60(2):384 -- 422, 1979.

\bibitem{CameK02}
P.~J. Cameron and W.~M. Kantor.
\newblock Antiflag-transitive collineation groups revisited.
\newblock available at
  \url{http://citeseerx.ist.psu.edu/viewdoc/download?doi=10.1.1.64.2238&rep=rep1&type=pdf},
  2002.

\bibitem{CanLi14}
E.~J. Cand{\`e}s and X.~Li.
\newblock Solving quadratic equations via phaselift when there are about as
  many equations as unknowns.
\newblock {\em Found. Comput. Math.}, 14(5):1017--1026, 2014.

\bibitem{CanLiSol15}
E.~J. Cand\`es, X.~Li, and M.~Soltanolkotabi.
\newblock Phase retrieval from coded diffraction patterns.
\newblock {\em Appl. Comput.Harmon.Anal.}, 39(2):277 -- 299, 2015.

\bibitem{CanStrVor13}
E.~J. Cand\`es, T.~Strohmer, and V.~Voroninski.
\newblock Phaselift: Exact and stable signal recovery from magnitude
  measurements via convex programming.
\newblock {\em Comm. Pure Appl. Math.}, 66(8):1241--1274, 2013.

\bibitem{Chau05}
H.~F. Chau.
\newblock Unconditionally secure key distribution in higher dimensions by
  depolarization.
\newblock {\em IEEE Trans. Inf. Theory}, 51(4):1451--1468, 2005.

\bibitem{ClevLLW15}
R.~Cleve, D.~Leung, L.~Liu, and C.~Wang.
\newblock {Near-linear constructions of exact unitary 2-designs}, Apr. 2015.

\bibitem{coles_entropic_2015}
P.~J. Coles, M.~Berta, M.~Tomamichel, and S.~Wehner.
\newblock Entropic uncertainty relations and their applications.
\newblock {\em arXiv preprint arXiv:1511.04857}, 2015.

\bibitem{Dank05the}
C.~Dankert.
\newblock {\em Efficient Simulation of Random Quantum States and Operators}.
\newblock Master thesis, University of Waterloo, 2005.
\newblock Available at \url{http://arxiv.org/abs/quant-ph/0512217}.

\bibitem{DankCEL09}
C.~Dankert, R.~Cleve, J.~Emerson, and E.~Livine.
\newblock Exact and approximate unitary 2-designs and their application to
  fidelity estimation.
\newblock {\em Phys. Rev. A}, 80:012304, 2009.

\bibitem{Harp04}
P.~de~la Harpe and C.~Pache.
\newblock Spherical designs and finite group representations (some results of
  {E. Bannai}).
\newblock {\em European Journal of Combinatorics}, 25(2):213 -- 227, 2004.
\newblock In memory of Jaap Seidel.

\bibitem{DelsGS77}
P.~Delsarte, J.~M. Goethals, and J.~J. Seidel.
\newblock Spherical codes and designs.
\newblock {\em Geom. Dedicata}, 6:363--388, 1977.

\bibitem{DiViLT02}
D.~P. DiVincenzo, D.~W. Leung, and B.~M. Terhal.
\newblock Quantum data hiding.
\newblock {\em IEEE Trans. Inf. Theory}, 48(3):580--598, 2002.

\bibitem{DixoM96book}
J.~D. Dixon and B.~Mortimer.
\newblock {\em Permutation Groups}, volume 163 of {\em Graduate Texts in
  Mathematics}.
\newblock Springer, New York, 1996.

\bibitem{DoheW13}
A.~C. Doherty and S.~Wehner.
\newblock {Convergence of SDP hierarchies for polynomial optimization on the
  hypersphere}, 2013.
\newblock Available at \url{http://arxiv.org/abs/1210.5048}.

\bibitem{DurtEBZ10}
T.~Durt, B.-G. Englert, I.~Bengtsson, and K.~{\.{Z}}yczkowski.
\newblock On mutually unbiased bases.
\newblock {\em Int. J. Quant. Inf.}, 8:535, 2010.

\bibitem{GodsR09}
C.~Godsil and A.~Roy.
\newblock Equiangular lines, mutually unbiased bases, and spin models.
\newblock {\em Eur. J. Combinator.}, 30:246--262, 2009.

\bibitem{GoodW09book}
R.~Goodman and N.~R. Wallach.
\newblock {\em Symmetry, Representations, and Invariants}, volume 255 of {\em
  Graduate Texts in Mathematics}.
\newblock Springer, 2009.

\bibitem{Gott97the}
D.~Gottesman.
\newblock {\em Stabilizer Codes and Quantum Error Correction}.
\newblock PhD thesis, California Institute of Technology, 1997.
\newblock Available at \url{http://arxiv.org/abs/quant-ph/9705052}.

\bibitem{GottC99}
D.~Gottesman and I.~L. Chuang.
\newblock Demonstrating the viability of universal quantum computation using
  teleportation and single-qubit operations.
\newblock {\em Nature}, 402:390--393, 1999.

\bibitem{Gros06}
D.~Gross.
\newblock Hudson's {T}heorem for finite-dimensional quantum systems.
\newblock {\em J. Math. Phys.}, 47(12):122107, 2006.

\bibitem{Gros11}
D.~Gross.
\newblock Recovering low-rank matrices from few coefficients in any basis.
\newblock {\em IEEE Trans. Inf. Theory}, 57(3):1548 --1566, 2011.

\bibitem{GrosAE07}
D.~Gross, K.~Audenaert, and J.~Eisert.
\newblock Evenly distributed unitaries: On the structure of unitary designs.
\newblock {\em J. Math. Phys.}, 48(5):052104, 2007.

\bibitem{gross2008quantum}
D.~Gross and J.~Eisert.
\newblock Quantum margulis expanders.
\newblock {\em Quantum Information \& Computation}, 8(8):722--733, 2008.

\bibitem{GroKraKue15a}
D.~Gross, F.~Krahmer, and R.~Kueng.
\newblock A partial derandomization of {P}hase{L}ift using spherical designs.
\newblock {\em J. Fourier Anal. Appl.}, 21(2):229--266, 2015.

\bibitem{GrosLFB10}
D.~Gross, Y.-K. Liu, S.~T. Flammia, S.~Becker, and J.~Eisert.
\newblock Quantum state tomography via compressed sensing.
\newblock {\em Phys. Rev. Lett.}, 105:150401, 2010.

\bibitem{gross2013stabilizer}
D.~Gross and M.~Walter.
\newblock Stabilizer information inequalities from phase space distributions.
\newblock {\em Journal of Mathematical Physics}, 54(8):082201, 2013.

\bibitem{guralnick2004cross}
R.~Guralnick and P.~Tiep.
\newblock Cross characteristic representations of even characteristic
  symplectic groups.
\newblock {\em Transactions of the American Mathematical Society},
  356(12):4969--5023, 2004.

\bibitem{HarrL09}
A.~W. Harrow and R.~A. Low.
\newblock Random quantum circuits are approximate 2-designs.
\newblock {\em Commun. Math. Phys.}, 291(1):257--302, 2009.

\bibitem{HayaHH05}
A.~Hayashi, T.~Hashimoto, and M.~Horibe.
\newblock Reexamination of optimal quantum state estimation of pure states.
\newblock {\em Phys. Rev. A}, 72:032325, 2005.

\bibitem{helsen2016representations}
J.~Helsen, J.~Wallman, and S.~Wehner.
\newblock Representations of the multi-qubit {C}lifford group.
\newblock In preparation (2016).

\bibitem{Hogg82}
S.~G. Hoggar.
\newblock $t$-designs in projective spaces.
\newblock {\em Eur. J. Combinator.}, 3:233--254, 1982.

\bibitem{Hogg98}
S.~G. Hoggar.
\newblock 64 lines from a quaternionic polytope.
\newblock {\em Geom. Dedicata}, 69:287, 1998.

\bibitem{Ivan81}
I.~D. Ivanovi\'c.
\newblock Geometrical description of quantal state determination.
\newblock {\em J. Phys. A: Math. Gen.}, 14:3241, 1981.

\bibitem{KabaKRT15}
M.~Kabanava, R.~Kueng, H.~Rauhut, and U.~Terstiege.
\newblock {Stable low-rank matrix recovery via null space properties}, 2015.
\newblock Available at \url{http://arxiv.org/abs/1507.07184}.

\bibitem{KimmL15}
S.~Kimmel and Y.-K. Liu.
\newblock {Quantum Compressed Sensing Using 2-Designs}, Oct. 2015.
\newblock Available at \url{http://arxiv.org/abs/1510.08887}.

\bibitem{KlapR02}
A.~Klappenecker and M.~R\"otteler.
\newblock Beyond stabilizer codes \uppercase{I}: Nice error bases.
\newblock {\em IEEE Trans. Inf. Theory}, 48:2392, 2002.
\newblock Supplementary information including a catalogue of nice error bases
  available at \url{http://www.cs.tamu.edu/faculty/klappi/ueb/ueb.html}.

\bibitem{KlapR05M}
A.~Klappenecker and M.~R\"otteler.
\newblock Mutually unbiased bases are complex projective 2-designs.
\newblock In {\em IEEE International Symposium on Information Theory}, pages
  1740 --1744, Adelaide, Australia, 2005.

\bibitem{Knil96N}
E.~Knill.
\newblock Non-binary unitary error bases and quantum codes.
\newblock Technical report, Los Alamos National Laboratory Report LAUR-96-2717,
  1996.
\newblock Available at \url{arXiv:quant-ph/9608048v2}.

\bibitem{KnilLRB08}
E.~Knill, D.~Leibfried, R.~Reichle, J.~Britton, R.~B. Blakestad, J.~D. Jost,
  C.~Langer, R.~Ozeri, S.~Seidelin, and D.~J. Wineland.
\newblock Randomized benchmarking of quantum gates.
\newblock {\em Phys. Rev. A}, 77:012307, 2008.

\bibitem{KuenG15}
R.~Kueng and D.~Gross.
\newblock {Qubit stabilizer states are complex projective 3-designs}, 2015.
\newblock Available at \url{http://arxiv.org/abs/1510.02767}.

\bibitem{PovmNorm16}
R.~Kueng, H.~Zhu, and D.~Gross.
\newblock {Distinguishability of quantum states under Clifford orbits}, 2016.
\newblock To appear on arXiv.

\bibitem{StabilizerLRMR16}
R.~Kueng, H.~Zhu, and D.~Gross.
\newblock {Low rank matrix recovery from {C}lifford orbits}, 2016.
\newblock In preparation.

\bibitem{Ledo01}
M.~Ledoux.
\newblock {\em The concentration of measure phenomenon}, volume~89 of {\em
  Mathematical Surveys and Monographs}.
\newblock American Mathematical Society, 2001.

\bibitem{Leve98}
V.~Levenshtein.
\newblock On designs in compact metric spaces and a universal bound on their
  size.
\newblock {\em Discrete Math.}, 192(1--3):251--271, 1998.

\bibitem{MageGE11}
E.~Magesan, J.~M. Gambetta, and J.~Emerson.
\newblock Scalable and robust randomized benchmarking of quantum processes.
\newblock {\em Phys. Rev. Lett.}, 106:180504, 2011.

\bibitem{MattWW09}
W.~Matthews, S.~Wehner, and A.~Winter.
\newblock Distinguishability of quantum states under restricted families of
  measurements with an application to quantum data hiding.
\newblock {\em Commun. Math. Phys.}, 291(3):813--843, 2009.

\bibitem{NebeRS01}
G.~Nebe, E.~M. Rains, and N.~J.~A. Sloane.
\newblock The invariants of the \uppercase{C}lifford groups.
\newblock {\em Des. Codes Cryptogr.}, 24:99, 2001.

\bibitem{NebeRS06}
G.~Nebe, E.~M. Rains, and N.~J.~A. Sloane.
\newblock {\em Self-Dual Codes and Invariant Theory}, volume~17 of {\em
  Algorithms and Computation in Mathematics}.
\newblock Springer, 2006.

\bibitem{NezaW16}
S.~Nezami and M.~Walter.
\newblock {Multipartite Entanglement in Stabilizer Tensor Networks}, Aug. 2016.
\newblock Available at \url{http://arxiv.org/abs/1608.02595}.

\bibitem{NielC00book}
M.~A. Nielsen and I.~L. Chuang.
\newblock {\em Quantum Computation and Quantum Information}.
\newblock Cambridge University Press, Cambridge, UK, 2000.

\bibitem{DiacS94}
M.~S. Persi~Diaconis.
\newblock On the eigenvalues of random matrices.
\newblock {\em Journal of Applied Probability}, 31:49--62, 1994.

\bibitem{Proc07book}
C.~Procesi.
\newblock {\em Lie Groups: An Approach through Invariants and Representations}.
\newblock Springer, New York, 2007.

\bibitem{Rain98}
E.~M. Rains.
\newblock Increasing subsequences and the classical groups.
\newblock {\em Electron. J. Combin.}, 5:R12, 1998.

\bibitem{ReneBSC04}
J.~M. Renes, R.~Blume-Kohout, A.~J. Scott, and C.~M. Caves.
\newblock Symmetric informationally complete quantum measurements.
\newblock {\em J. Math. Phys.}, 45:2171, 2004.

\bibitem{RoyS09}
A.~Roy and A.~J. Scott.
\newblock Unitary designs and codes.
\newblock {\em Des. Codes Cryptogr.}, 53(1):13--31, 2009.

\bibitem{RoyS14}
A.~Roy and S.~Suda.
\newblock Complex spherical designs and codes.
\newblock {\em Journal of Combinatorial Designs}, 22:105--148, 2014.

\bibitem{Rudi80book}
W.~Rudin.
\newblock {\em Function Theory in the Unit Ball of $\mathbb{C}^n$}.
\newblock Classics in Mathematics. Springer, New York, 1980.

\bibitem{Rung96}
B.~Runge.
\newblock Codes and {S}iegel modular forms.
\newblock {\em Discrete Math.}, 148(1):175 -- 204, 1996.

\bibitem{Scot06}
A.~J. Scott.
\newblock Tight informationally complete quantum measurements.
\newblock {\em J. Phys. A: Math. Gen.}, 39:13507, 2006.

\bibitem{Scot08}
A.~J. Scott.
\newblock Optimizing quantum process tomography with unitary 2 -designs.
\newblock {\em J. Phys. A: Math. Theor.}, 41(5):055308, 2008.

\bibitem{ScotG10}
A.~J. Scott and M.~Grassl.
\newblock Symmetric informationally complete positive-operator-valued measures:
  A new computer study.
\newblock {\em J. Math. Phys.}, 51:042203, 2010.

\bibitem{SeyfR11}
U.~Seyfarth and K.~S. Ranade.
\newblock Construction of mutually unbiased bases with cyclic symmetry for
  qubit systems.
\newblock {\em Phys. Rev. A}, 84:042327, 2011.

\bibitem{Side99}
V.~Sidelnikov.
\newblock Spherical 7-designs in $2^n$ -dimensional {E}uclidean space.
\newblock {\em Journal of Algebraic Combinatorics}, 10(3):279--288, 1999.

\bibitem{SzehDTR13}
O.~Szehr, F.~Dupuis, M.~Tomamichel, and R.~Renner.
\newblock Decoupling with unitary approximate two-designs.
\newblock {\em New Journal of Physics}, 15(5):053022, 2013.

\bibitem{TiepZ96}
P.~H. Tiep and A.~E. Zalesskii.
\newblock Minimal characters of the finite classical groups.
\newblock {\em Commun. Algebra}, 24(6):2093--2167, 1996.

\bibitem{VollW01}
K.~G.~H. Vollbrecht and R.~F. Werner.
\newblock Entanglement measures under symmetry.
\newblock {\em Phys. Rev. A}, 64(6):062307, 2001.

\bibitem{WallF14}
J.~J. Wallman and S.~T. Flammia.
\newblock Randomized benchmarking with confidence.
\newblock {\em New J. Phys.}, 16(10):103032, 2014.

\bibitem{WallM94book}
D.~F. Walls and G.~J. Milburn.
\newblock {\em Quantum Optics 2nd Edition}.
\newblock Springer, 1994.

\bibitem{Webb15}
Z.~Webb.
\newblock {The Clifford group forms a unitary 3-design}, 2015.
\newblock Available at \url{http://arxiv.org/abs/1510.02769}.

\bibitem{WehnW10}
S.~Wehner and A.~Winter.
\newblock Entropic uncertainty relations---a survey.
\newblock {\em New J. Phys.}, 12(2):025009, 2010.

\bibitem{WootF89}
W.~K. Wootters and B.~D. Fields.
\newblock Optimal state--determination by mutually unbiased measurements.
\newblock {\em Ann. Phys.}, 191:363, 1989.

\bibitem{Zaun11}
G.~Zauner.
\newblock Quantum designs: Foundations of a noncommutative design theory.
\newblock {\em Int. J. Quant. Inf.}, 9:445--507, 2011.

\bibitem{Zhu14IOC}
H.~Zhu.
\newblock Quantum state estimation with informationally overcomplete
  measurements.
\newblock {\em Phys. Rev. A}, 90:012115, 2014.

\bibitem{Zhu15MC}
H.~Zhu.
\newblock {Multiqubit Clifford groups are unitary 3-designs}, 2015.
\newblock Available at \url{http://arxiv.org/abs/1510.02619}.

\bibitem{Zhu15Sh}
H.~Zhu.
\newblock {Sharply covariant mutually unbiased bases}, 2015.
\newblock Available at \url{http://arxiv.org/abs/1503.00003}.

\bibitem{Zhu15S}
H.~Zhu.
\newblock Super-symmetric informationally complete measurements.
\newblock {\em Ann. Phys.}, 362:311 -- 326, 2015.

\bibitem{Zhu16P}
H.~Zhu.
\newblock Permutation symmetry determines the discrete wigner function.
\newblock {\em Phys. Rev. Lett.}, 116:040501, 2016.

\bibitem{ZhuE11}
H.~Zhu and B.-G. Englert.
\newblock Quantum state tomography with fully symmetric measurements and
  product measurements.
\newblock {\em Phys. Rev. A}, 84:022327, 2011.

\end{thebibliography}

\end{document}